\newcommand{\set}[1]{\mathscr{#1}}
\newcommand{\net}{\mathscr{N}}
\newcommand{\onet}{{\mathscr{\overline{N}}}}
\newcommand{\head}[1]{h_{({#1})}}
\newcommand{\tail}[1]{t_{({#1})}}
\newcommand{\comment}[1]{}
\newcommand{\mN}{\mathscr{N}}
\newcommand{\negspace}{\hspace*{-2mm}}
\begin{document}
\onecolumn
\title{On the solvability of 3-source 3-terminal sum-networks}

\author{
\authorblockN{Sagar Shenvi and Bikash Kumar Dey} \negspace\\
\authorblockA{Department of Electrical Engineering \negspace \\
Indian Institute of Technology Bombay\negspace \\
Mumbai, India, 400 076\\
\{sagars,bikash\}@ee.iitb.ac.in}
}
\maketitle               
\setlength{\baselineskip}{1.5\baselineskip}
\begin{abstract}
We consider a directed acyclic network with three sources and three terminals
such that each source independently generates
one symbol from a given field $F$ and each terminal 
wants to receive the sum (over $F$) of the source symbols.
Each link in the network is considered
to be error-free and
delay-free and can carry one symbol from the field in each use.
We call such a network a 3-source 3-terminal {\it $(3s/3t)$ sum-network}.
In this paper, we give a necessary and sufficient
condition for a $3s/3t$ sum-network to
allow all the terminals to receive the sum of the source symbols
over \textit{any} field. Some lemmas provide interesting simpler
sufficient conditions for the same.
We show that linear codes are sufficient for this problem
for $3s/3t$ though they are known to be insufficient for arbitrary number
of sources and terminals. We further show that in most cases, such networks
are solvable by simple XOR coding.
We also prove a recent conjecture that if fractional coding is allowed,
then the coding capacity of a $3s/3t$ sum-network is either $0,2/3$ or $\geq 1$.
\end{abstract}
\begin{keywords}
Network coding, function computation, multicast, multiple unicast
\end{keywords}

{\renewcommand{\thefootnote}{}
\footnotetext{A part of this work is accepted for presentation at IEEE International Symposium on Information Theory
2010, Austin, Texas.}}

\newtheorem{theorem}{Theorem}
\newtheorem{corollary}{Corollary}
\newtheorem{lemma}{Lemma}
\newtheorem{claim}{Claim}
\newtheorem{definition}{Definition}
\newtheorem{sub-claim}{Sub-claim}
\newtheorem{observation}{Observation}

\section{Introduction}
It was shown by Ahlswede et. al.~\cite{ahlswede1}
that mixing/coding of incoming information at the intermediate
nodes, called network coding, could result in throughput advantages.
In particular, it was shown that the
coding capacity of a directed multicast network is equal to the
minimum of the min-cuts
between the source and the individual terminals.
Further, linear network coding
was shown to be sufficient to achieve this capacity \cite{li1,koetter2}.
A polynomial time algorithm for linear multicast code construction
was given in \cite{jaggi1}, whereas distributed random network
codes were shown to achieve capacity for multicast networks
in \cite{ho1}. Network coding has since evolved into
a rich field of study with connection to many other areas~\cite{yeung1}.

In this paper, we consider the problem of communicating the sum of
messages at some sources to a set of terminals in a directed
acyclic network of unit-capacity edges. The problem is a subclass
of the problem of distributed computation over a network.
Due to the immense complexity of the problem in its full generality
with all its model-variations, the problem has been studied in
various simplified forms by researchers from diverse fields.
We list some known approaches to the problem below.

\begin{enumerate}
\item {\bf Simple and small networks:} Early work in the area of information
theory considered the distributed function computation problem 
as a generalization to the Slepian-Wolf problem. Here the network
has multiple sources with separate encoders connected to a
receiver which wants to compute a function of the symbols generated
at the sources, possibly with a limited 
allowed distortion~\cite{korner1,han1,krithivasan1,doshi1}. Another variation
is where the receiver has access to correlated side-information,
and it wants to compute a function of the source symbol encoded
by an encoder and the side-information~\cite{orlitsky1,feng1}.
There are two features in this approach which
make the problem complex. First, the sources are correlated with
a known arbitrary joint distribution. Second, the aim is
to compute the region of encoding rates which allow the recovery
of the function at the receiver under the allowed distortion.

\item {\bf Large networks:} Gallager~\cite{gallager2} first posed
the problem of computing the parity or modulo-2 sum of a large
number of binary sources in a broadcast network. Here all the
nodes are independent sources and terminals. The question addressed is how
the number of required communications scale with the number of nodes.
This line of work became more popular in the context of wireless networks
with scheduling constraints arising due to
interference~\cite{giridhar,kanoria,boyd}. A large body of work
now exists with many variations in various aspects.

\item {\bf Distributed Detection:} The problem of distributed
detection/estimation of some underlying parameter from the
signals generated at different nodes is a problem
which gained renewed impetus from the widespread interest in sensor
networks~\cite{varshney1,chair1,hu1}. The main aim is to find optimum
or near-optimum algorithms. This problem also has the essence of
a distributed function computation problem.

\item {\bf Network Coding:} Before network coding acquired
its recent level of maturity, the problem of distributed function
computation was addressed in a rigorous way by information theorists
only for small or simple networks. The techniques of network coding
were used in some recent efforts to get some results of elementary
nature for larger networks~\cite{ramamoorthy,RaiD:09a,RaiD:09b,langberg3,RaiDS:itw10,appuswamy1,appuswamy2,appuswamy3}. Our present work is along this line.
\end{enumerate}

In the first, second, and the fourth category of approaches listed
above, the particular function ``sum'' received special interest~\cite{korner1,
krithivasan1,gallager2,ramamoorthy,RaiD:09a,RaiD:09b,langberg3,RaiDS:itw10}
because (i) it is a simple illustrative example function which
is easier to work with, (ii) it
reveals many interesting intricacies of the general problem, and
(iii) it may reveal techniques for addressing the problem
for more general functions (for example, \cite{krithivasan1} makes direct use
of encoding for linear functions for other functions) or other
network coding problems (the equivalence with other network coding
problems is shown in \cite{RaiD:09c}). In particular, linear multicast coding
and linear coding for computing ``sum'' at one terminal are equivalent problems
~\cite{RaiD:09b,RaiD:09c, koetter3}.
Both ``modulo sum'' as in a finite field, and more generally a finite
abelian group; and ``arithmetic sum'' as in a characteristic-$0$ field
are of interest. We consider the function ``modulo-sum'' in this paper.
Arithmetic sum, though important for many practical applications, is
more difficult to analyze because of the unbounded alphabet size.
However, the techniques for modulo-sum has also been found useful
for getting bounds for the capacity of computing
arithmetic-sum~\cite{appuswamy3}.

We
consider networks where there are multiple sources which generate
independent i.i.d. random processes over an alphabet finite field $F$,
or more generally an abelian group $G$. The edges are assumed to carry
one symbol from the alphabet per use without delay or error, i.e.,
they are delay-free, error-free and unit-capacity.
There are multiple terminals which want to recover the sum of
the source symbols in each symbol-interval.
We specifically consider the case of $3$ sources and $3$ terminals.
This has been known to be the first, i.e. with the smallest number
of sources and terminals, nontrivial and highly intriguing
case for
sum-networks~\cite{ramamoorthy,RaiD:09a,RaiD:09b,langberg3,RaiDS:itw10}.
 
\subsection{Standard definitions and review of known results}

We first define some standard terms which will be used in this
paper. Our network is represented by a directed acyclic multigraph
$\mathscr{N}=(V,E)$. A network with source nodes
$\{s_1,s_2,s_3,\ldots ,s_l\} \subset V$ and terminal nodes
$\{t_1,t_2,t_3,\ldots ,t_j\}\subset V$, so that each
terminal wants to recover $\sum_{i=1}^{l}{x_{it}}$ for every $t$, where
$(x_{it})_t$ is the source process of the $i$-th source, is
called a sum-network with $l$ sources and $j$ terminals. A 3-source
3-terminal sum-network will be called a $3s/3t$ sum-network in short.

\begin{definition}
A sum-network where there is a path from every source to every terminal will be called
a {\it connected sum-network}.
\end{definition}

\begin{definition}
For a sum-network or a multiple-unicast network~\cite{dougherty3}, the {\it reverse
network}~\cite{dougherty3,RaiD:09b,RaiD:09c,riis1} is the network obtained by reversing the direction
of every edge, and interchanging the roles of the sources and the terminals.
\end{definition}

\begin{definition}
If one can satisfy the demands 
of all the terminals over a finite field $F$ using each edge of
$\set{N}$ once, we say that $\mathscr{N}$ is
{\it solvable over $F$}. In particular, for a sum-network, it means that
all the terminals can recover one sum (for one $t$) by using the network
once. If a linear network code over $F$ is sufficient for this purpose,
we say that $\mathscr{N}$ is {\it linearly solvable over $F$}.
We say that $\mathscr{N}$ is solvable if it is solvable
over at least one field. If $\mathscr{N}$ is not solvable over any field, we say that
$\mathscr{N}$ is non-solvable. In terms of another well-known term,
solvability here refers to {\it scalar solvability}, i.e., solvability using
a {\it scalar network code}~\cite{yeung1,ho2}.
\end{definition}
Clearly, for solvability of a sum-network,
it is necessary that every source-terminal pair is connected.
For a single source, the sum-network reduces to the well-investigated
multicast network, and the source-terminal connectivity is also
a sufficient condition for solvability if the edges are unit-capacity.

We now define a simple form of linear network code. 
\begin{definition}
A scalar linear network code is called an {\it XOR network code} if
all the nodes in the network, including the terminal nodes,
require to perform only addition and subtraction. In other words,
all the local coding coefficients~\cite{yeung1,ho2} are $\pm 1$. For the binary
alphabet, this means that the nodes only need to perform XOR operation.
A network
which is solvable by a XOR network code is said to be {\it XOR solvable}.
\end{definition}
Such a network code is computationally much simpler. Further note that,
if a sum-network is XOR solvable then only the group structure in
the alphabet is relevant, and the multiplicative structure in
the alphabet field is not relevant. Though for simplicity, we will
restrict to a finite field alphabet from now onward, it can be
checked that whenever a network is XOR solvable over all fields, it
is also XOR solvable over any abelian group.

\begin{definition}
A $(k,n)$ {\it fractional network code} is a network code where
the source processes are blocked into packets of length $k$,
and encoded into vectors/packets of length $n$. The edges carry
$n$-length vectors and nodes operate on incoming $n$-length vectors
to construct $n$-length message vectors on outgoing edges. The terminals
recover their demanded function (specifically their demanded
source symbols for a traditional communication problem)
for $k$ consecutive symbols of the sources. Thus the rate
of computing/communication achieved by using such a code is $k/n$ per use
of the network. Such a code can be linear or non-linear.
\end{definition}

For example, a $(k,n)$ fractional network code for a sum-network
will enable the terminals to recover $\sum_{i=1}^l x_{it}$
for $t=k\tau,k\tau +1,\ldots, k\tau +k-1$ using the links of the network
$n$ times. Here $\tau$ denotes the block index.

\begin{definition}
The rate $r$ is
said to be {\it achievable} if there exists a $(k,n)$
fractional (possibly non-linear) network code such that $k/n \geq r$.
\end{definition}

\begin{definition}
The supremum
of all achievable rates is
called the {\it capacity} of the network. Clearly the capacity
of a solvable sum-network is $\geq1$.
\end{definition}
It can be easily argued that
the minimum of the min-cuts for all source-terminal pairs is an upper bound
on the capacity of a sum-network~\cite{RaiDS:itw10}.

For the most part of the paper, we will consider the question of 
solvability of a sum-network, and so will consider a single
symbol interval and a single usage of the network. So,
we will omit the index $t$ in $x_{it}$ and use $x_i$ to mean
the symbol generated by the $i$th source in one representative symbol-interval.
\begin{figure}[h]
\centering
\subfigure[]{
\includegraphics[height=1.5in]{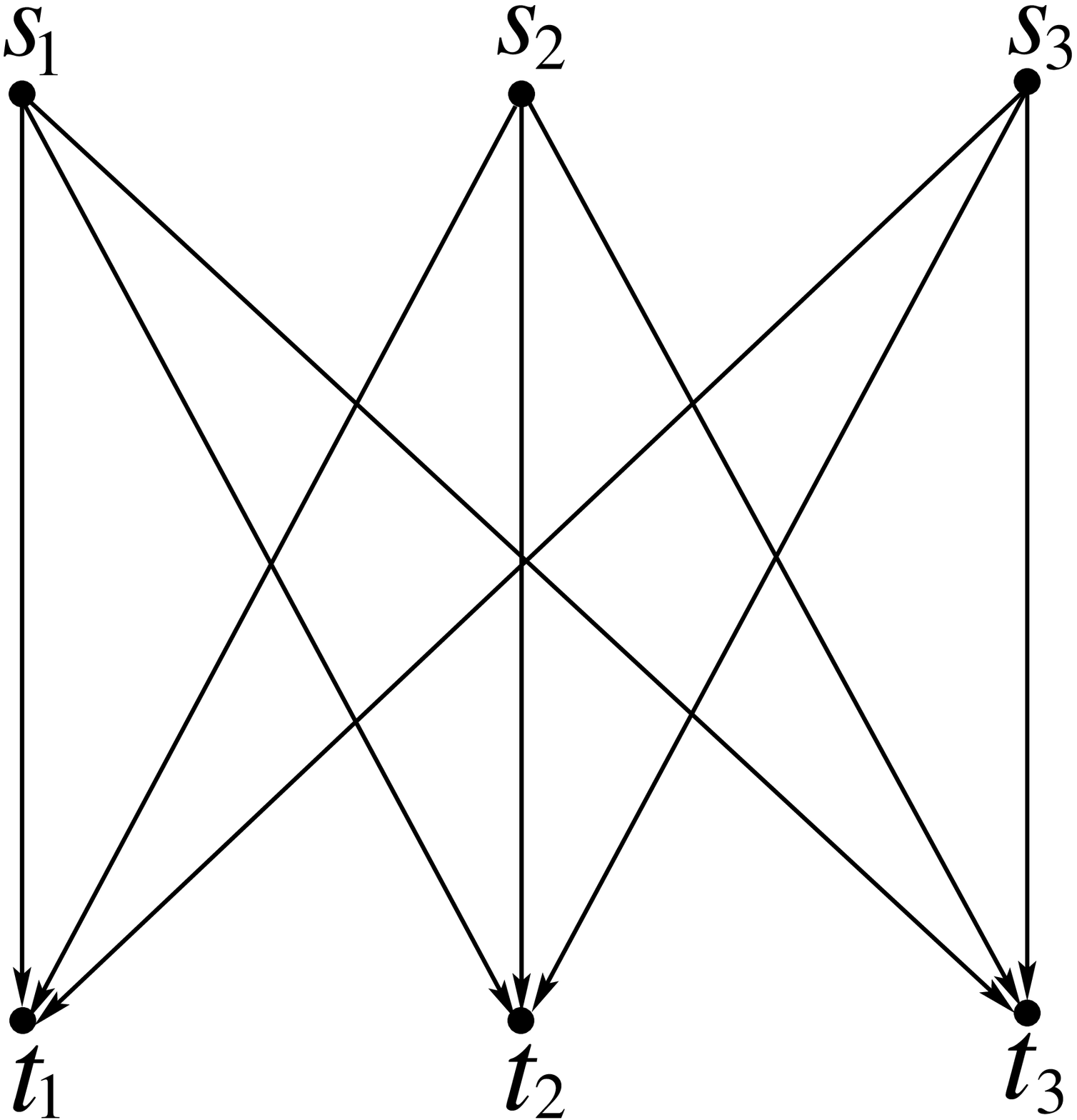}
\label{fig:eg1}
}
\subfigure[]{
\includegraphics[height=1.5in]{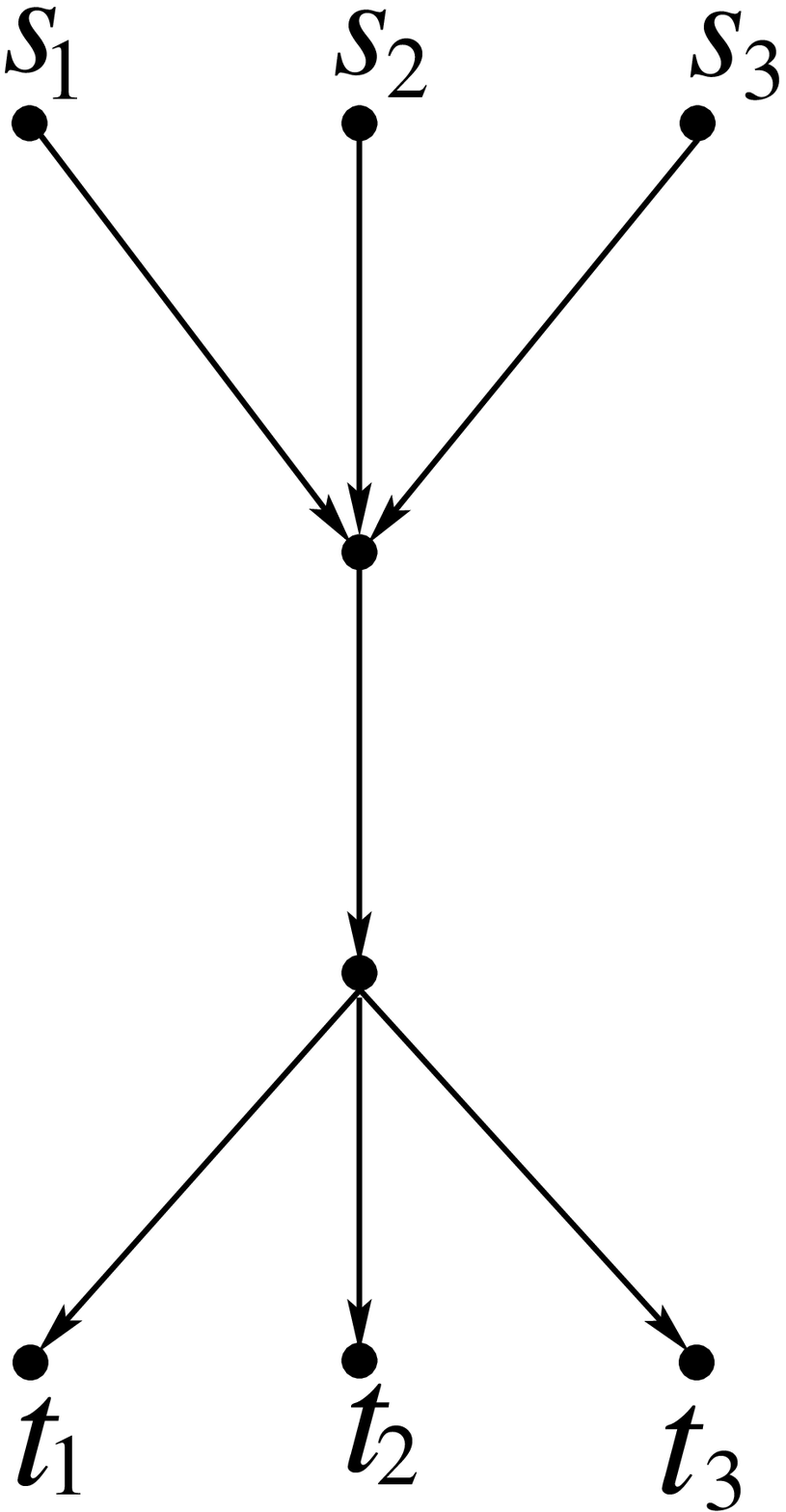}
\label{fig:eg2}
}
\caption[]{Examples of $3s/3t$ sum-networks which are solvable but where
source-terminal pairs are not two-connected}
\label{fig:twoconnected}
\end{figure}

In the following, we list some results known till date which are related
to our present work. 

$\bullet$ Ramamoorthy~\cite{ramamoorthy} showed that when there
are at most two sources or at most two terminals, a sum-network
is solvable over any field if and only if every source-terminal
pair is connected. Their algorithm also used an XOR code as per our
definition.

$\bullet$ The source-terminal connectivity condition is known to be insufficient
when both the number of sources and the number of terminals are
more than two~\cite{RaiD:09a}. In particular, a $3s/3t$ sum-network
(see Fig.~\ref{fig:s3}) was presented in \cite{RaiD:09a, langberg3} which
is not solvable. Further, it was proved in \cite{RaiDS:itw10} that
the capacity of this network is $2/3$. On examination of a variety
of $3s/3t$ sum-networks, it was conjectured in \cite{RaiDS:itw10} that the capacity of
any non-solvable but connected sum-network is $2/3$. This conjecture
is proved in this paper.

$\bullet$ It was proved in \cite{langberg3} that a $3s/3t$ sum-network
which is two-connected, i.e. has two edge-disjoint paths from every
source to every terminal, is solvable over fields of odd characteristic.
This condition is clearly not a necessary condition for solvability.
For instance, the sum-networks shown in Fig.~\ref{fig:twoconnected} do not satisfy the
condition but are clearly solvable.

$\bullet$ It was shown in  \cite{RaiD:09b} that a sum-network
has a $(k,n)$ fractional linear code if and only if the reverse network
has a $(k,n)$ fractional linear code. This implies that the linear coding
capacity of a sum-network is the same as that of its reverse network~\cite{RaiDS:itw10}. 
Since linear codes achieve capacity of a multicast network, this gives
that the capacity of a one-terminal sum-network is the minimum of the
min-cuts between the source-terminal pairs~\cite{RaiDS:itw10}.

$\bullet$ The problem of communicating the sum was shown to be equivalent
to the problem of multiple unicasts and more generally the arbitrary
network communication problem by showing explicit constructions in
\cite{RaiD:09c}. This implied several interesting consequences like
(i) existence of a solvably equivalent sum-network for every system
of integer polynomial equations, (ii) unachievability of capacity
of some sum-networks, and (iii) insufficiency of linear network coding
for sum-networks.

$\bullet$ The communication of more general functions was considered
in \cite{appuswamy1,appuswamy2,appuswamy3} over networks with one terminal.
Specifically, some cut-based bounds on the capacity of such networks
are presented in \cite{appuswamy3}.

\subsection{Our contribution}
We assume that the sources generate symbols from
a field $F$, the edges can carry one symbol from $F$ per use without error and
delay, and the terminals want to recover the sum (defined in $F$)
of the source symbols. The contribution of this paper is the following.

1. We find a set of necessary and sufficient conditions for
solvability of a $3$-source $3$-terminal sum-network over any
field $F$ (Theorems~\ref{theorem:mainres} and ~\ref{theorem:mainres2}).

2. We prove a conjecture made in \cite{RaiDS:itw10} that the capacity
of any non-solvable connected $3s/3t$ sum-network is $2/3$.

3. The proof of the necessary and sufficient conditions also lead us
to some interesting results and insights like sufficiency of linear
codes.
We also identify a significant class of solvable networks ($\kappa \neq 2,3$
in Lemma~\ref{lemma:suffcon}) which are XOR solvable over any field.
In particular, it implies that networks with $\kappa=0$ (equivalently,
where every source-terminal pair is two-connected) are {\it XOR solvable over
any field}, thus significantly strengthening the result of ~\cite{langberg3}.
In contrast, it was shown in~\cite{RaiD:09c} that linear codes are not sufficient
in general for sum-networks with arbitrary number of sources and terminals.

4. As intermediate results, we prove some lemmas which give simpler
sufficient conditions for solvability of a $3s/3t$ sum-network.

The paper is organized as follows.
In Section~\ref{section:notation}, we introduce some notations and
define some new terminology which will be used in this paper.
We present our new results in Section~\ref{section:main} and
prove them in Section~\ref{section:proofs}. The paper is concluded
in Section~\ref{section:conclusion}.

\section{Notations and new definitions}\label{section:notation}
Recall that our network is represented by a directed acyclic multigraph
$\mathscr{N}=(V,E)$ with
source nodes $\{s_1,s_2,s_3,\ldots ,s_l\} \subset V$ and terminal nodes
$\{t_1,t_2,t_3,\ldots ,t_j\}\subset V$. Each source node $s_i$ independently
generates a symbol sequence $x_{it}$ from the alphabet finite field $F$
and each terminal wants to recover $\sum_{i=1}^{l}{x_{it}}$ 
defined over $F$ for every $t$.
Each edge represents an error-free, delay-free link of unit-capacity.
We specifically consider a $3s/3t$ sum-network.
As the sum of sources can not be communicated
to the terminals at any non-zero rate if a network is not connected,
we consider only connected networks in this paper.
 
For any edge $e=(v_i,v_j)\in E$, the node $v_j$ is called its head
and the node $v_i$ its tail and are denoted as $\head{e}$ and $\tail{e}$
respectively. A path $P$ from $v_1$ to $v_l$ - also called a $(v_1,v_l)$
path - is a sequence
of nodes $v_1,v_2,\ldots, v_l$
and edges $e_1,e_2,\ldots, e_{l-1}$ such that $v_i = \tail{e_i}$
and $v_{i+1} = \head{e_i}$ for $1\leq i \leq l-1$.
For any path $P$, $P(v_j:v_k)$ denotes
its section starting from the node $v_j$ and ending at $v_k$.
If $P_1$ is a $(v_i,v_j)$ path and $P_2$ is a $(v_j,v_k)$ path,
then $P_1P_2$ denotes the $(v_i,v_k)$ path
obtained by concatenating $P_1$ and $P_2$.
\begin{definition}
For any $A,B \subset V, A\cap B = \emptyset$,
we write $A \rightarrow B$ if there is a path from every
node in $A$ to every node in $B$, and we write
$A \nrightarrow B$ if there is no path from any node in
$A$ to any node in $B$. Note that $\nrightarrow$ is not the negation
of $\rightarrow$. If $A=\{v_i\}$ and $B=\{v_j\}$
are singletons, we simply write $v_i \rightarrow v_j$ and
$v_i \nrightarrow v_j$. For any edges $e_1,e_2 \in E$,
we write $e_1 \rightarrow e_2$,
$e_1 \rightarrow v_j$ and $v_i \rightarrow e_2$
to mean respectively $\head{e_1} \rightarrow \tail{e_2}$,
$\head{e_1} \rightarrow v_j$
and $v_i \rightarrow \tail{e_2}$.
If for two nodes $m$ and $n$, $m\rightarrow{n}$,
$m$ is called an \textit{ancestor}
of $n$, and $n$ a \textit{descendant} of $m$.
We assume that a node is not its own ancestor or descendant.
For any
$A,B \subset V, A\cap B = \emptyset$, we define
$\Gamma^{A}_{B} = \{v\in V : A \rightarrow v, v \rightarrow B\}$,
$\Gamma^{A} = \{v\in V : A \rightarrow v\}$, $\Gamma_{B} = \{v\in V : v \rightarrow B\}$
and $mincut (A,B)$ to be the least number of edges whose removal
causes $A \nrightarrow B$ in the remaining network.
\end{definition}

We represent the network formed by removing the edges $\{e_1,e_2,...,e_i\}$
from the original network $\mathscr{N}$ by $\{\mathscr{N}-\{e_1,e_2,...,e_i\}\}$.
An edge $e$ is said to \textit{disconnect} an ordered pair of nodes
${(v_i,v_j)}$, if $v_i \rightarrow v_j$ in $\set{N}$ but $v_i \nrightarrow v_j$
in $\{\mathscr{N}-\{e\}\}$.

\begin{definition}\label{definition:kappa}
For a connected sum-network $\mathscr{N}$ the maximum number of source-terminal
pairs that can be disconnected by removing a single edge is called the
\textit{maximum-disconnectivity} of the network and denoted by
$\kappa(\mathscr{N})$. We call any edge whose removal disconnects
$\kappa(\mathscr{N})$ source-terminal pairs as a \textit{maximum-disconnecting} edge.
All edges are maximum-disconnecting
edges if $\kappa(\set{N})=0$.
\end{definition}

For example, if in a $3s/3t$ sum-network every source-terminal
pair is two-connected then removing any single edge can not disconnect
any source-terminal pair; and so the network has $\kappa = 0$.
On the other hand, the network shown in Fig.~\ref{fig:eg2} has a single
bottleneck link whose removal disconnects all the source-terminal
pairs; and so the network has $\kappa = 9$.

We classify the set of all maximum-disconnecting
edges into the following three sets:
(Recall that all edges are maximum-disconnecting
edges if $\kappa(\set{N})=0$.)

$\mathscr{A}:$ the set of all maximum-disconnecting edges
such that there is a path from its head to only one terminal.

$\mathscr{B}:$ the set of all maximum-disconnecting edges
such that there is a path from only one source to its tail.

$\mathscr{C}:$ the set of all maximum-disconnecting edges
such that there is a path from at least two sources to
its tail and to at least two terminals from its head.

Clearly every maximum-disconnecting edge is
in at least one of $\mathscr{A},$ $\mathscr{B}$ and $\mathscr{C}$.
Also, $\mathscr{C}$ is disjoint from $\mathscr{A}$ and $\mathscr{B}$.
If $\kappa(\set{N})=0$ or $1$, a maximum-disconnecting edge
may belong to both $\mathscr{A}$ and $\mathscr{B}$, however
$\mathscr{A}$ and $\mathscr{B}$ are disjoint if $\kappa(\set{N})\geq 2$
because then any maximum-disconnecting edge is connected to either
at least two sources or at least two terminals.

\section{Results}\label{section:main}
In this section, first we present our main results as theorems, and then
we present some lemmas which, on one hand, are used to prove the theorems
and which, on the other hand, also provide simpler sufficient conditions
for solvability.
Recall that a sum-network is nonsolvable if it is not solvable over any field.
\begin{theorem}\label{theorem:mainres}[{\bf{Necessary and Sufficient condition for Solvability}}]
A.
A $3s/3t$ connected sum-network $\mathscr{N}$ is nonsolvable if and
only if  there exist two edges $e_1$ and $e_2$ and some labeling of
the sources and the terminals such that
\begin{enumerate}
\item $mincut({\{s_1\},\{t_3\}})= 0$ in $\{\set{N}-\{e_1\}\}$
\item $mincut({\{s_3\},\{t_1\}}) = 0$ in $\{\set{N}-\{e_1\}\}$
\item $mincut({\{s_2\},\{t_3\}}) = 0$ in $\{\set{N}-\{e_2\}\}$
\item $mincut({\{s_2,s_3\},\{t_2\}}) = 0$ in $\{\set{N}-\{e_2\}\}$
\item $mincut({\{s_3\},\{t_3\}}) = 0$ in $\{\set{N}-\{e_1,e_2\}\}$
\item ${e_1}\nrightarrow{e_2}$ and ${e_2}\nrightarrow{e_1}$
\end{enumerate}

B. Whenever a network is solvable, it is linearly solvable over all fields
except possibly $F_2$.

C. Whenever a network is solvable over $F_2$, it is XOR solvable over
any field.

D. \cite[Conjecture 7]{RaiDS:itw10} The capacity of a connected non-solvable network is $2/3$.
\end{theorem}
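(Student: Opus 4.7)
The plan is to unify all four parts by organizing around the maximum-disconnectivity parameter $\kappa(\mathscr{N})$ of Definition~\ref{definition:kappa} and the tripartition $\mathscr{A}, \mathscr{B}, \mathscr{C}$ of maximum-disconnecting edges from Section~\ref{section:notation}. Obstruction statements (sufficiency of Part A and the upper bound in Part D) will be attacked by cut-based entropy arguments on the two critical edges $e_1, e_2$; constructive statements (necessity of Part A, Parts B and C, and the lower bound in Part D) will be attacked by case analysis on $\kappa$ that produces explicit linear, and typically XOR, codes. The intermediate lemmas announced in Section~\ref{section:main}, in particular Lemma~\ref{lemma:suffcon}, will do most of the routine work for $\kappa \ne 2,3$.

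\textbf{Part A, sufficiency (the ``only if'' direction).} Suppose edges $e_1, e_2$ satisfy (1)--(6). Condition (6) puts $e_1$ and $e_2$ in incomparable positions, so their carried symbols depend on disjoint upstream information. Conditions (1)--(2) force, in any code, the symbol on $e_1$ to be the unique carrier both of the $s_1$-content reaching $t_3$ and of the $s_3$-content reaching $t_1$. Symmetrically, (3)--(4) force $e_2$ to be the unique carrier of the $s_2$-content reaching $t_3$ and of the $(s_2,s_3)$-content reaching $t_2$, and (5) forces all of the $s_3$-content to $t_3$ to flow jointly through $e_1$ and $e_2$. Translating these into entropy (functional-dependence) identities on the edge random variables shows that the two symbols on $e_1, e_2$ would have to determine too many independent scalar combinations of $(x_1,x_2,x_3)$; this yields a contradiction and, being an entropy argument, rules out nonlinear codes as well, in the style of \cite{RaiD:09a,langberg3}.

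\textbf{Part A necessity, with Parts B and C.} Suppose no pair $(e_1,e_2)$ satisfies (1)--(6); I construct a scalar linear code. If $\kappa(\mathscr{N})\ne 2,3$, Lemma~\ref{lemma:suffcon} provides an XOR code over any field, handling those cases of Part A together with Parts B and C. For $\kappa\in\{2,3\}$, I would use the disjointness of $\mathscr{A}$ and $\mathscr{B}$ noted in Section~\ref{section:notation} together with the absence of a violating pair to limit the possible configurations; a sub-case analysis on which of $\mathscr{A},\mathscr{B},\mathscr{C}$ are nonempty yields in each subcase an explicit linear code whose nonzero coefficients can be chosen in $\{\pm 1,\alpha\}$ for an arbitrary $\alpha\in F\setminus\{0,-1\}$. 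Such $\alpha$ exists in every $F$ with $|F|\ge 3$, giving Part B. For Part C, any scalar linear solution over $F_2$ necessarily has coefficients in $\{0,1\}=\{0,\pm 1\}$ and so is already an XOR code, and an XOR code is alphabet-agnostic: it lifts from $F_2$ to any abelian group, and hence to any field.

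\textbf{Part D.} The upper bound $\le 2/3$ follows from conditions (1)--(5) extracted in Part A: applied to a $(k,n)$ fractional code by the standard cut-set argument (the minimum min-cut bound on capacity recalled in the introduction from \cite{RaiDS:itw10}), the information about $(x_1,x_2,x_3)$ reaching the terminals in $n$ uses must pass through the two-edge cut $\{e_1,e_2\}$ enough times to accumulate at most $2n$ symbols, while the terminals must recover $3k$ source symbols, yielding $k/n\le 2/3$. For the matching lower bound, I would construct an explicit $(2,3)$ fractional linear code aligned with $(e_1,e_2)$: route the two missing two-block sum-combinations through $e_1$ and $e_2$ in two slots and use the third slot for a corrective combination, verifying decodability by inverting a $3\times 3$ system over the chosen field. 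The main obstacle in the whole theorem is Part A sufficiency, since (1)--(6) must be parlayed into a watertight entropy contradiction valid against nonlinear as well as linear codes; the case analysis for $\kappa\in\{2,3\}$ on the constructive side is expected to be long but mechanical.
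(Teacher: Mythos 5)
Your overall architecture matches the paper's: a cut-based obstruction through $\{e_1,e_2\}$ for the sufficiency of Part A and the upper bound in Part D, and a $\kappa$-driven case analysis (delegated to Lemma~\ref{lemma:suffcon} and the converse lemmas) for the constructive directions. However, there are two genuine gaps. First, in the sufficiency/Part D bound you assert that ``the terminals must recover $3k$ source symbols,'' but they only demand the $k$ symbols of the sum; the counting argument $3k\le 2n$ does not follow from the demands alone. The heart of the paper's proof is to \emph{derive} that $t_3$ is nevertheless forced to be able to reconstruct all of $x_1,x_2,x_3$: conditions (1),(3),(6) force $Y_{e_1}=\phi(x_1,x_3)$ and $Y_{e_2}=\psi(x_2,x_3)$; condition (2) and the decodability at $t_1$ force $\phi$ to be injective in $x_3$ for fixed $x_1$ (and condition (1) with $t_3$ forces injectivity in $x_1$); condition (4) and $t_2$ force $\psi$ to be injective in $x_2+x_3$; chaining these, $t_3$ recovers $x_2+x_3$, then $x_1$, then $x_3$, then $x_2$. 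One also needs the device of adjoining direct edges from $\head{e_1},\head{e_2}$ to $t_3$ (which only strengthens the network) so that $t_3$ actually sees $Y_{e_1},Y_{e_2}$. Your sketch gestures at ``too many independent scalar combinations'' but never establishes this mechanism, and without it the contradiction does not close.

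Second, your argument for Part C is fallacious: a scalar linear solution over $F_2$ with coefficients in $\{0,1\}$ is indeed an XOR code \emph{over $F_2$}, but an XOR code over $F_2$ is not alphabet-agnostic. Its correctness may rely on characteristic-$2$ identities (e.g.\ $x+x=0$), and since $+1=-1$ in $F_2$ the code does not even determine a sign pattern to lift; the same pattern need not decode to the sum over $F_3$. The paper's own remark is only that XOR solvability \emph{over all fields} transfers to abelian groups. The correct route to Part C (and the one the paper takes) is constructive: the case analysis shows that every network solvable over $F_2$ falls outside the exceptional configurations of Theorem~\ref{theorem:mainres2}, and in all remaining cases the explicit codes produced are field-independent XOR codes. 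Relatedly, for the Part D lower bound the paper simply cites the time-sharing result of \cite{RaiDS:itw10} that every connected $3s/3t$ sum-network has capacity at least $2/3$, rather than building a new $(2,3)$ code; your proposed construction is plausible but unverified, and the necessity analysis for $\kappa=3$ (Lemmas~\ref{lemma:converse1} and \ref{lemma:converse2}, resting on Lemmas~\ref{lemma:general} and \ref{lemma:nopair}) is far from ``mechanical''---it is where most of the paper's effort lies.
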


Fig.~\ref{fig:non-solvable} shows two networks (\cite{RaiD:09a,langberg3,RaiDS:itw10}) which are nonsolvable.
It can be verified that for the given labeling of sources, terminals and
edges, they satisfy Theorem~\ref{theorem:mainres}.

In~\cite{RaiDS:itw10} it was conjectured that the capacity of a $3s/3t$
sum-network is either $0,2/3$ or $\geq1$. Theorem~\ref{theorem:mainres} part D
states that the capacity of a nonsolvable connected $3s/3t$ sum-network
is $2/3$ and thus proves this conjecture.

A network that does not satisfy the
conditions in Theorem \ref{theorem:mainres} is solvable over all fields
except possibly $F_2$. So the conditions in Theorem \ref{theorem:mainres} are
necessary and sufficient for nonsolvability over any field other
than $F_2$. For $F_2$, the violation of these conditions does not imply
solvability. For example, Fig.~\ref{fig:x3} shows a network which does not satisfy
the hypothesis of Theorem~\ref{theorem:mainres}, but which is not solvable over
$F_2$ as was shown in~\cite{RaiD:09b}.
Theorem \ref{theorem:mainres2} below identifies the conditions under which a
$3s/3t$ network is solvable over any field except $F_2$.

\begin{theorem}\label{theorem:mainres2}[{\bf{Necessary and Sufficient condition for Non-solvability over $F_2$}}]
A connected $3s/3t$ sum-network $\mathscr{N}$ is not solvable
over $F_2$ but linearly solvable over any other field if and
only if  there exist two edges $e_1$ and $e_2$ and some labeling of
the sources and the terminals such that 
\begin{enumerate}
\item $e_1$ disconnects exactly $(s_1,t_3)$ and $(s_3,t_1)$
\item $e_2$ disconnects exactly $(s_2,t_3)$ and $(s_3,t_2)$
\item $mincut({\{s_3\},\{t_3\}}) = 0$ in $\{\set{N}-\{e_1,e_2\}\}$
\item ${e_1}\nrightarrow{e_2}$ and ${e_2}\nrightarrow{e_1}$.
\end{enumerate} 
It can be verified that for the given labeling of the sources, terminals and
edges, the networks in Fig.~\ref{fig:nonbinarysolvable} satisfy Theorem~\ref{theorem:mainres2}.

\end{theorem}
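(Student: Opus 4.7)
The plan is to prove Theorem~\ref{theorem:mainres2} in both directions, leveraging Theorem~\ref{theorem:mainres} and the supporting lemmas.

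\emph{Sufficiency} ($\Leftarrow$). Suppose conditions (1)--(4) hold. Linear solvability over every field $F\neq F_2$ will follow from Theorem~\ref{theorem:mainres} parts A and B once we verify that Theorem~\ref{theorem:mainres}'s six conditions cannot hold simultaneously on $\mathscr{N}$ under any labeling. The key point is that condition (2) of Theorem~\ref{theorem:mainres2} says $e_2$ does not disconnect $(s_2,t_2)$, giving $mincut(\{s_2,s_3\},\{t_2\})>0$ in $\mathscr{N}-\{e_2\}$, which violates condition (4) of Theorem~\ref{theorem:mainres}; a routine check rules out alternative labelings. For non-solvability over $F_2$, note that any linear code must assign nonzero coefficients to $x_1$ and $x_3$ in $y_{e_1}$ (since $e_1$ is the unique channel from $s_1$ to $t_3$ and from $s_3$ to $t_1$), and symmetrically nonzero $x_2$ and $x_3$ coefficients in $y_{e_2}$. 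Over $F_2$ nonzero means $1$, so after scaling $y_{e_1}=x_1+\alpha_2 x_2+x_3$ and $y_{e_2}=\beta_1 x_1+x_2+x_3$ with $\alpha_2,\beta_1\in\{0,1\}$. Writing out the decoding equations at $t_1,t_2,t_3$---using condition (3) to force $t_3$'s $x_3$-content through $\{e_1,e_2\}$, and condition (4) to prevent $y_{e_1}$ and $y_{e_2}$ from being mixed upstream---produces an overdetermined system whose resolution requires a coefficient equal to $-1$; this coincides with $1$ in $F_2$ but not elsewhere, making the system inconsistent over $F_2$. Since Theorem~\ref{theorem:mainres}C will show that scalar $F_2$-solvability implies XOR-solvability, ruling out linear $F_2$-codes rules out all scalar $F_2$-codes.

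\emph{Necessity} ($\Rightarrow$). Suppose $\mathscr{N}$ is not $F_2$-solvable but is linearly solvable over every other field. By the contrapositive of Theorem~\ref{theorem:mainres}A, the six conditions of that theorem fail for every candidate edge pair. Lemma~\ref{lemma:suffcon} guarantees XOR-solvability---and hence $F_2$-solvability---whenever $\kappa(\mathscr{N})\notin\{2,3\}$, so we restrict to $\kappa\in\{2,3\}$. Then one performs a case analysis on the possible configurations of maximum-disconnecting edges from classes $\mathscr{A},\mathscr{B},\mathscr{C}$, classified by which source-terminal pairs each disconnects. The only configuration compatible with both ``solvable over some field'' and ``not solvable over $F_2$'' is the one described by conditions (1)--(2): two incomparable maximum-disconnecting edges, each cutting exactly two specific pairs. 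Conditions (3) and (4) then follow by structural arguments: if some $s_3$-to-$t_3$ path avoided both $e_1,e_2$, an explicit XOR decoding (using $y_{e_1}+y_{e_2}=x_1+x_2$ combined with $x_3$ delivered over the free path) would succeed over $F_2$, contradicting the assumption; and if $e_1\to e_2$ (or vice versa), the message on one edge could be cancelled from the other downstream, again enabling an $F_2$-code.

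The main obstacle is the case analysis in the necessity direction: one must enumerate the positions of two edges in $\mathscr{A}\cup\mathscr{B}\cup\mathscr{C}$ within a $3s/3t$ network and eliminate every configuration that either collapses into Theorem~\ref{theorem:mainres}'s non-solvable regime or already admits an XOR solution via one of the supporting lemmas. Symmetries under relabeling sources and terminals, and under passage to the reverse network, should cut the number of cases down substantially. In the sufficiency direction the subtler point is verifying that the $F_2$-obstruction is robust against extra network structure: one must show that regardless of what additional edges the network contains, the forced structure of $y_{e_1}$ and $y_{e_2}$ still propagates through the bottleneck topology to produce the same inconsistent decoding equations at the three terminals.
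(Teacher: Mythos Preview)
Your sufficiency argument has a genuine circularity problem. You plan to derive linear solvability over $F\neq F_2$ from Theorem~\ref{theorem:mainres} parts A and B, and to upgrade ``no linear $F_2$-code'' to ``no scalar $F_2$-code'' via Theorem~\ref{theorem:mainres}C. But in the paper's logical structure, Theorem~\ref{theorem:mainres} (necessity, and hence parts B and C) is proved \emph{using} the sufficiency of Theorem~\ref{theorem:mainres2}: Case~1.2 of $\kappa=2$ in Lemma~\ref{lemma:suffcon} and part~A of Lemma~\ref{lemma:converse1} both invoke it explicitly, and Theorem~\ref{theorem:mainres}'s necessity then rests on those lemmas. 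So the sufficiency of Theorem~\ref{theorem:mainres2} must be established independently, before Theorem~\ref{theorem:mainres} is available.

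The paper does this constructively. For solvability over $F\neq F_2$ it exhibits explicit paths $Q_1,Q_2,R_1,R_2$ disjoint from the bottleneck structure, picks $\alpha\in F\setminus\{0,1\}$, sets $\beta=(1-\alpha)^{-1}$, $\gamma=1-\alpha^{-1}$, sends $x_1+\alpha x_3$ on $e_1$, $x_3+\beta x_2$ on $e_2$, and $x_2+\gamma x_1$ to $t_1,t_2$ via the subnetwork $Q_1\cup Q_2\cup R_1\cup R_2$ (using Lemma~\ref{lemma:R}); each terminal then recovers the sum by a fixed linear combination. For non-solvability over $F_2$ the paper does \emph{not} restrict to linear codes: it argues that $Y_{e_1}=\phi(x_1,x_3)$ and $Y_{e_2}=\psi(x_2,x_3)$ must each be $1$--$1$ in each argument, observes that over $F_2$ the only such functions are $x_1+x_3+\delta$ and $x_2+x_3+\delta'$, and then shows directly that no function of $(\phi,\psi)$ can equal $x_1+x_2+x_3$ (substituting $x_1=x_2=x_3=0$ and $x_1=x_2=x_3=1$ gives contradictory values of the constant term). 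This handles nonlinear codes without appeal to Theorem~\ref{theorem:mainres}C.

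Even setting circularity aside, your ``routine check'' that no \emph{other} edge pair under any \emph{other} labeling satisfies Theorem~\ref{theorem:mainres}'s six conditions is not routine: the fact that the specific $e_2$ from Theorem~\ref{theorem:mainres2} fails condition~(4) of Theorem~\ref{theorem:mainres} says nothing about a different candidate $e_2'$. Your necessity outline is broadly aligned with the paper's (restrict $\kappa$ via Lemma~\ref{lemma:suffcon}, then case-analyze), but the paper organizes the remaining work through Lemmas~\ref{lemma:converse1} and~\ref{lemma:converse2} and the detailed $\kappa=2$ analysis inside Lemma~\ref{lemma:suffcon}, rather than an ad hoc enumeration over $\mathscr{A},\mathscr{B},\mathscr{C}$.
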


The following lemma is applicable to sum-networks with arbitrary number of
sources and terminals, and may be of independent interest for sum-networks
in general.

\begin{lemma}\label{lemma:reduction}
A connected $l$-source $j$-terminal sum network $\set{N}$ with
$\kappa(\set{N})=k$, $k>0$, and $\set{C}=\phi$ is linearly solvable
(respectively XOR solvable) over a field $F$
if all $l$-source $j$-terminal sum networks with $\kappa<k$ are
linearly solvable (respectively XOR solvable) over $F$.
\end{lemma}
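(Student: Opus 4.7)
The plan is to proceed by induction on $k=\kappa(\set{N})$, using as hypothesis that every $l$-source $j$-terminal sum-network with strictly smaller $\kappa$ is linearly (respectively, XOR) solvable over $F$. Since $\set{C}=\emptyset$, every maximum-disconnecting edge of $\set{N}$ lies in $\set{A}\cup\set{B}$. The reverse-network equivalence for linear codes cited in the introduction swaps the roles of $\set{A}$ and $\set{B}$, so by passing to the reverse network if necessary I may assume that there is a maximum-disconnecting edge $e\in\set{A}$; let $t_i$ be the unique terminal with $\head{e}\to t_i$. Because $e$ disconnects $k$ pairs and $\head{e}$ reaches only $t_i$, these pairs must be of the form $(s_{j_1},t_i),\ldots,(s_{j_k},t_i)$ for $k$ distinct sources $s_{j_1},\ldots,s_{j_k}$ whose every path to $t_i$ passes through $e$.

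Next I would construct a reduced network $\set{N}'$ by removing $e$ and adjoining auxiliary direct edges $(s_{j_l},t_i)$ for $l=1,\ldots,k$, so as to restore the lost source-to-$t_i$ connectivity. The network $\set{N}'$ has the same source and terminal sets as $\set{N}$ and is connected; one then checks that $\kappa(\set{N}')\le k-1$, since the pairs previously bottlenecked through $e$ now possess dedicated direct edges, and any other candidate for a maximum-disconnecting edge loses at least one of its $k$ pairs for the same reason. By the induction hypothesis $\set{N}'$ admits a linear (respectively, XOR) solution over $F$. Because the auxiliary edges $(s_{j_l},t_i)$ emanate directly from sources, each one carries a scalar multiple of $x_{j_l}$; by rescaling in the linear case (automatic in the XOR case), one may normalize so that each auxiliary edge carries exactly $x_{j_l}$.

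The final step is the lift to $\set{N}$: on every edge present in both networks retain the coding produced for $\set{N}'$, and on $e$ place the single combination $\sum_{l=1}^{k} x_{j_l}$, augmented by the contributions of any further sources whose flow legitimately reaches $\tail{e}$ exactly as dictated by the upstream coding functions of $\set{N}'$. With this choice, $t_i$ receives via $e$ the aggregate of the symbols it previously received along the $k$ auxiliary edges, and receives exactly the same symbols along all other incoming edges; hence its decoder still recovers $\sum_{j}x_j$, and the other terminals are unaffected since their codings and inputs are unchanged. For the XOR variant the coefficients remain in $\{+1,-1\}$ throughout.

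The main obstacle is the verification that $\kappa(\set{N}')\le k-1$, together with the claim that the induction-hypothesis solution on $\set{N}'$ can be normalized so that the $k$ auxiliary edges collapse onto a single unit-capacity combination on $e$. Both arguments are local and crucially exploit $\set{C}=\emptyset$, since a maximum-disconnecting edge of type $\set{C}$ (with multiple upstream sources and multiple downstream terminals) could not be removed by a single-terminal bypass of this kind, and the reduction would then fail.
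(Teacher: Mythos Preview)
Your reduction has a genuine gap in the claim that $\kappa(\set{N}')\le k-1$. You remove a single edge $e\in\set{A}$ (reaching only terminal $t_i$) and add direct edges $(s_{j_l},t_i)$ to restore the $k$ pairs that $e$ disconnected. But nothing prevents $\set{N}$ from containing another maximum-disconnecting edge $e'\in\set{A}\cup\set{B}$ whose $k$ disconnected pairs are entirely disjoint from $\{(s_{j_l},t_i)\}$; for instance, $e'$ could reach only a different terminal $t_{i'}\neq t_i$ and disconnect $(s_{j'_1},t_{i'}),\dots,(s_{j'_k},t_{i'})$. Removing $e$ can only \emph{decrease} connectivity for pairs other than those it bottlenecked, and the auxiliary edges you add terminate only at $t_i$; hence $e'$ still disconnects its $k$ pairs in $\set{N}'$, so $\kappa(\set{N}')=k$ and the induction stalls. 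Your justification ``any other candidate \ldots\ loses at least one of its $k$ pairs for the same reason'' is simply not true.

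The paper sidesteps this by treating all maximum-disconnecting edges simultaneously: it forms $\set{N}^*$ by adding a parallel copy to \emph{every} edge in $\set{A}\cup\set{B}$, which forces $\kappa(\set{N}^*)<k$ outright. The lift back is then information-theoretic rather than path-based: for $e\in\set{B}$ only one source reaches $\tail{e}$, so in any linear code $e$ and its parallel copy carry the same data and the copy can be deleted. Edges in $\set{A}$ are handled by passing to the reverse sum-network (where they become type-$\set{B}$), stripping their parallels there, and invoking the linear-reversibility lemma twice. Your single-edge bypass cannot replicate this global drop in $\kappa$, and the lift you sketch---placing $\sum_l x_{j_l}$ on $e$ while retaining the $\set{N}'$-coding on downstream edges---would also need more care, since in $\set{N}'$ the edges between $\head{e}$ and $t_i$ are coded with no input from $e$ at all, so there is no existing mechanism that carries the reinstated message on $e$ forward to $t_i$.
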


In what follows, we present some lemmas which give simpler sufficient
conditions for a connected $3s/3t$ network to be solvable.
These lemmas will be used to prove the necessity parts of the main theorems.
\begin{lemma}
\label{lemma:con1}
A connected $3s/3t$ sum-network where there is no edge which is connected to
at least two sources and at least two terminals is linearly solvable
by XOR coding over any field.
\end{lemma}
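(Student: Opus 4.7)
The plan is to exploit the structural consequence of the hypothesis: every edge whose head reaches $\geq 2$ terminals must have at most one source reaching its tail, so the ``multi-terminal interior'' of $\set{N}$ is made of single-source edges, while each ``single-terminal funnel'' can be handled by Ramamoorthy's XOR result~\cite{ramamoorthy} for sum-networks with one terminal.

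First I partition $V$ by terminal reachability. Writing $T(v)$ for the set of terminals reachable from $v$ and $S(v)$ for the set of sources that reach $v$, let $V_0 = \{v : |T(v)| \geq 2\}$ and $V_j = \{v : T(v) = \{t_j\}\}$ for $j=1,2,3$. Connectivity of $\set{N}$ places every source in $V_0$ and every terminal in its $V_j$. The hypothesis forces $|S(\tail{e})| \leq 1$ whenever $\head{e} \in V_0$, so every edge $e$ with both endpoints in $V_0$ carries a single source's information; I assign $c_e = x_i$ for the unique $s_i \in S(\tail{e})$ (or $c_e=0$ if $S(\tail{e})=\emptyset$). An induction in topological order then shows: at every $v \in V_0$ and every $s_i \in S(v)$, the symbol $x_i$ arrives on some incoming edge of $v$ (or $v=s_i$). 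The inductive step uses the fact that the last edge $e'$ on any $s_i$-to-$v$ path satisfies $\head{e'}=v\in V_0$, forcing $|S(\tail{e'})| \leq 1$ and hence $S(\tail{e'})=\{s_i\}$ and $c_{e'} = x_i$.

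For each terminal $t_j$ I handle the sub-DAG $N_j$ induced on $V_j$ together with its incoming boundary edges from $V_0$. I attach virtual sources $s'_1,s'_2,s'_3$ by linking $s'_i$ into $\tail{e}$ for every boundary edge $e$ with $s_i \in S(\tail{e})$, obtaining a single-terminal sum-network $\widetilde{N}_j$. Each $s'_i$ reaches $t_j$ in $\widetilde{N}_j$ because any $s_i$-to-$t_j$ path in $\set{N}$ must cross a boundary edge whose tail is reached by $s_i$. Ramamoorthy's single-terminal XOR result then yields a code for $\widetilde{N}_j$, which I translate to an XOR code on $\set{N}$: each boundary $c_e$ is an XOR of $\{x_i : s_i \in S(\tail{e})\}$ --- realizable at $\tail{e}$ by the previous paragraph --- and each interior edge of $N_j$ XORs its incoming symbols. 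The three sub-codes on $N_1,N_2,N_3$ are independent because the $V_j$'s are pairwise disjoint.

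The main obstacle is the structural claim that edges interior to $V_0$ always have $|S(\tail{e})|\leq 1$. If some $v \in V_0$ had $|S(v)| \geq 2$ and an outgoing edge $e$ with $\head{e}\in V_0$, then $|S(\tail{e})|=|S(v)|\geq 2$ together with $|T(\head{e})|\geq 2$ would directly contradict the hypothesis; so such $v$ can only emit into boundary edges going to some $V_{j'}$, and the single-source assignment on the $V_0$-interior is consistent. A minor additional point is that at a common node $v$, different outgoing boundary edges (possibly feeding different $N_j$'s) can carry different XOR combinations of $\{x_i : i \in S(v)\}$ in parallel, which is standard since the local coding coefficients are chosen independently for each outgoing edge.
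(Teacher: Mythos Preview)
Your argument is correct, and it takes a genuinely different route from the paper's. The paper's proof is asymmetric: it first applies Ramamoorthy's two-terminal result (Lemma~\ref{lemma:R}) to deliver $x_1+x_2+x_3$ to $t_1,t_2$ on a subnetwork $\set{N}_1$, then picks arbitrary $(s_i,t_3)$ paths as a subnetwork $\set{N}_2$ to serve $t_3$, and finally argues that any edge in $\set{N}_1\cap\set{N}_2$ reaches at least two terminals and hence (by the hypothesis) is single-source, so the two codes agree on overlapping edges. Your proof is symmetric: you partition by terminal reachability, observe that every edge whose head lies in $V_0$ is single-source and can therefore simply forward that source symbol, and then solve each single-terminal piece $\widetilde{N}_j$ independently via the one-terminal case of Lemma~\ref{lemma:R}. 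Both proofs rest on the same structural fact (edges reaching two terminals are single-source), but your decomposition is cleaner in that it generalizes verbatim to connected $l$-source $j$-terminal sum-networks satisfying the same hypothesis, whereas the paper's $2+1$ split does not obviously extend beyond $j=3$. The paper's version is a bit shorter because it avoids the explicit virtual-source construction and the induction on $V_0$, but your approach buys the generalization and a more modular picture of where the sums are assembled.
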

\begin{lemma}\label{lemma:general}
Suppose a connected $3s/3t$ sum-network satisfies the following conditions.
For some labeling of the sources and the terminals,

(a) there is an edge $e$ such that $\{s_1,s_2\} \rightarrow e \rightarrow
\{t_1,t_2\}$.

(b) there is no edge which disconnects $(s_2,t_3)$ and $(s_3,t_1)$; or
$(s_1,t_3)$ and $(s_3,t_2)$.

Then the network is XOR solvable over any field.
\end{lemma}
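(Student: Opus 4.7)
The plan is to construct an explicit XOR network code exploiting edge $e$. Writing $T=\tail{e}$ and $H=\head{e}$, I would first use condition (a) to observe that the subnetwork consisting of ancestors of $T$ is a connected single-terminal sum-network with either two sources (if $s_3\nrightarrow T$) or three sources (if $s_3\to T$). By Ramamoorthy's result, this subnetwork is XOR solvable, so I can arrange for $e$ to carry $x_1+x_2$ in the first case and $x_1+x_2+x_3$ in the second. Since $H\to\{t_1,t_2\}$, both $t_1$ and $t_2$ inherit this combination via $e$.

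I would then case-split on whether $s_3\to T$ and whether $H\to t_3$. In the easy case ($s_3\to T$ and $H\to t_3$), all three terminals receive the full sum directly via $e$. In the remaining cases, the deficit is supplied by routing or XOR coding in $\set{N}-\{e\}$. A crucial observation is that whenever $s_3\nrightarrow T$, every $s_3$-to-$t_i$ path in $\set{N}$ necessarily avoids $e$ (else it would pass through $T$), so $s_3\to\{t_1,t_2,t_3\}$ in $\set{N}-\{e\}$; similarly, whenever $H\nrightarrow t_3$, we have $s_1,s_2\to t_3$ in $\set{N}-\{e\}$. In three of the four cases, the residual sub-problem reduces either to a single-terminal sum-network (from all sources to $t_3$), which is XOR solvable by Ramamoorthy, or to a rate-$1$ multicast of $x_3$ from $s_3$ to the terminals that still need it, which needs only routing.

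The delicate sub-case is $s_3\nrightarrow T$ combined with $H\nrightarrow t_3$: here $e$ delivers only $x_1+x_2$ to $t_1,t_2$; $t_1$ and $t_2$ each additionally need $x_3$, and $t_3$ needs the full sum, all from $\set{N}-\{e\}$. This is a mixed multicast-plus-unicast problem that is not XOR solvable for arbitrary networks. Condition (b) is the tool that rules out obstructions: if some edge $f\in E(\set{N}-\{e\})$ were to lie on every $s_3\to t_1$ path and on every $s_2\to t_3$ path of $\set{N}-\{e\}$, then under the sub-case hypotheses no detour through $e$ is possible (since $s_3\nrightarrow T$ and $H\nrightarrow t_3$), so $f$ would also disconnect $(s_3,t_1)$ and $(s_2,t_3)$ in $\set{N}$, contradicting condition (b); the symmetric assertion rules out the analogous obstruction for $(s_1,t_3)$ and $(s_3,t_2)$. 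The main obstacle, and the heart of the proof, is converting these non-obstruction facts into an explicit edge-compatible XOR code: one must choose routings for $x_3$ to $t_1,t_2$ and for $x_1,x_2,x_3$ to $t_3$ in $\set{N}-\{e\}$ that do not conflict under XOR, and then compose them with the sub-code on the ancestors of $T$ into a globally consistent solution.
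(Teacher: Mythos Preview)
Your case split is sound, and your three ``easy'' cases are indeed easy---though not quite for the reasons you give. If $s_3\to T$ then $T$ is a node with all three sources upstream and at least two terminals downstream, so Lemma~\ref{lemma:5con} applies directly; likewise $H\to t_3$ makes $H$ such a node. Your own arguments for these cases have a loose end: e.g.\ when $s_3\to T$ and $H\nrightarrow t_3$, the Ramamoorthy code delivering the sum to $T$ and the separate Ramamoorthy code delivering the sum to $t_3$ may share edges among the ancestors of $T$ that also lie on $(s_i,t_3)$ paths, and you have not argued these two codes are compatible. Since Lemma~\ref{lemma:5con} disposes of the three cases anyway, this is not fatal. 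The paper simply starts by imposing Assumption~\ref{assumption:simplify}, which forces $s_3\nrightarrow T$ and $H\nrightarrow t_3$ from the outset; in particular it immediately records (Observation~\ref{observation:1b}) that $x_1+x_2$ can be put on $e$ without constraining the coding on the subnetwork $\onet$ of all paths touching $s_3$ or $t_3$.

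The genuine gap is in your delicate case, and you have essentially acknowledged it: you stop precisely where condition~(b) tells you no single edge obstructs the pairs $(s_2,t_3),(s_3,t_1)$ or $(s_1,t_3),(s_3,t_2)$, and declare that ``the heart of the proof'' is turning this into an explicit XOR code. That conversion is the entire content of the lemma's proof and is not routine. The residual task you describe---route $x_3$ to $t_1$ and to $t_2$ while simultaneously computing $x_1+x_2+x_3$ at $t_3$, by a single XOR code compatible with the $(H,t_1)$ and $(H,t_2)$ paths carrying $x_1+x_2$---does not reduce to any Ramamoorthy-type statement, and condition~(b) by itself does not make it standard. The paper's argument here is substantial: it invokes the double-unicast ``grail'' constructions of Lemmas~\ref{lemma:algo0} and~\ref{lemma:algo} (which is exactly where condition~(b) is consumed), and then runs a detailed case analysis (Cases 1.1--1.3, 2.1--2.2.2) on how the $(s_3,t_1)$ and $(s_3,t_2)$ paths interact with the $(s_1,t_3)$ and $(s_2,t_3)$ paths, carefully locating the first descendants $z_i,y_j$ of $\head{e}$ on those paths so that $x_1+x_2$ can be merged into the grail coding there. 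Your proposal contains neither the grail machinery nor any substitute for it, so as it stands it is a correct outline of the trivial reductions but not a proof of the hard case.
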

\begin{lemma}\label{lemma:nopair}
Suppose a $3s/3t$ connected sum-network $\set{N}$ satisfies the following conditions.
(1) There does not exist an edge-pair which satisfies all the four conditions
of Theorem~\ref{theorem:mainres2}.

(2) For some labelling of the sources and the terminals,
there exist two edges $e_1,e_2$ such that

(a) $e_1$ disconnects $(s_1,t_3)$ and $(s_3,t_1)$

(b) $e_2$ disconnects $(s_2,t_3)$ and $(s_3,t_2)$

(c) $e_1\nrightarrow e_2$ and $e_2\nrightarrow e_1$

(d) Removing both $e_1$ and $e_2$ simultaneously does not disconnect
$(s_3,t_3)$.

Then the network is solvable over any field by XOR coding.
\end{lemma}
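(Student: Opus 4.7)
My plan is to exhibit an explicit scalar linear code whose local coding coefficients all lie in $\{-1,0,+1\}$ and to verify that it delivers $x_1+x_2+x_3$ to every terminal. The candidate code places $x_1+x_3$ on $e_1$ and $x_2+x_3$ on $e_2$: condition~(2)(a,b) guarantees that $s_1,s_3$ both reach $\tail{e_1}$ and $s_2,s_3$ both reach $\tail{e_2}$, so these combinations can be formed by routing individual source symbols to the two tails along XOR-labeled paths and summing there. Condition~(2)(c) ensures the two assignments are mutually independent (neither edge's label is forced into the other), and condition~(2)(d) supplies a path $R$ from $s_3$ to $t_3$ avoiding both edges, which carries $-x_3$ (equivalently $+x_3$ in characteristic $2$). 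Terminal $t_3$ then computes $(x_1+x_3)+(x_2+x_3)-x_3 = x_1+x_2+x_3$, and the identity degenerates correctly over $F_2$.

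The genuine work lies in serving $t_1$ and $t_2$, which already receive $x_1+x_3$ and $x_2+x_3$ respectively through the two bottleneck edges and still require $x_2$ and $x_1$. In the cleanest subcase---where $e_1$ disconnects no source-terminal pair beyond those in (2)(a) and $e_2$ disconnects no pair beyond (2)(b)---a path from $s_2$ to $t_1$ avoiding $e_1$, and one from $s_1$ to $t_2$ avoiding $e_2$, exist by connectedness of $\set{N}$ and carry the missing symbols directly. If instead $e_1$ or $e_2$ disconnects an additional source-terminal pair, I adapt the code: for example, if $e_1$ also disconnects $(s_2,t_1)$ then $s_2 \rightarrow \tail{e_1}$ is forced, and I upgrade the label on $e_1$ to $x_1+x_2+x_3$, compensating at $t_3$ by the appropriate $\pm 1$ combination of its incoming flows. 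Each further forced disconnection triggers an analogous modification, and condition~(1) forbids precisely those configurations in which no such $\pm 1$ modification can simultaneously serve all three terminals.

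The main obstacle is the case analysis over which extra source-terminal pairs $e_1$ and $e_2$ can disconnect, and the verification that in each case the auxiliary paths exist and can be assigned consistent $\pm 1$ labels without conflict at any intermediate node. Condition~(1) is the lever that closes the argument: by excluding the unique minimal obstruction to $F_2$-solvability, namely the Theorem~\ref{theorem:mainres2} configuration, it leaves only configurations admitting an XOR assignment, which by construction then succeeds over every field.
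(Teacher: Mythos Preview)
Your high-level scheme---put $x_1+x_3$ on $e_1$, $x_2+x_3$ on $e_2$, route $x_3$ along an $(s_3,t_3)$ path $R$ avoiding both, and then deliver the missing $x_2$ to $t_1$ and $x_1$ to $t_2$ via auxiliary paths---matches the paper's. But the case decomposition you propose for the auxiliary paths is the wrong one, and your ``cleanest subcase'' already contains the essential difficulty you have not addressed.

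Specifically: even when $e_1$ disconnects nothing beyond (2)(a) and $e_2$ nothing beyond (2)(b), the $(s_2,t_1)$ path (carrying $x_2$) and the $(s_1,t_2)$ path (carrying $x_1$) may be forced to share edges. On any shared edge you cannot simultaneously carry $x_1$ and $x_2$ with $\pm 1$ coefficients, so ``carry the missing symbols directly'' fails. This is a genuine double-unicast subproblem, and it is not governed by which extra pairs $e_1$ or $e_2$ disconnect. The paper's proof splits instead on whether an $(s_1,t_2)$ path and an $(s_2,t_1)$ path can be chosen edge-disjoint (Case~1), and when they cannot (Case~2), it invokes the grail structure of Lemma~\ref{lemma:algo0} to solve the embedded double-unicast by XOR coding. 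Condition~(1) of the lemma is then used---but not in the way you describe---in the sub-case where a \emph{third} edge $e_3$ disconnects both $(s_1,t_2)$ and $(s_2,t_1)$: hypothesis~(1) rules out $(e_1,e_3)$ and $(e_2,e_3)$ forming a Theorem~\ref{theorem:mainres2} pair, which guarantees the existence of $(s_1,t_1)$ and $(s_2,t_2)$ paths avoiding the relevant bottlenecks and yields the subnetwork of Fig.~\ref{fig:not3}. Your proposed adaptations (``upgrade the label on $e_1$ to $x_1+x_2+x_3$'' when $e_1$ also disconnects $(s_2,t_1)$, etc.) handle a different obstruction and do not resolve the path-sharing issue.
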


\begin{lemma}\label{lemma:suffcon}
If for a connected $3s/3t$ sum-network $\mathscr{N}$, $\kappa(\mathscr{N})\neq{3}$
then A. $\mathscr{N}$ is linearly solvable over all fields except possibly $F_2$,
B. whenever $\mathscr{N}$ is solvable over $F_2$, it is XOR solvable over all fields,
and C. if $\kappa(\set{N})\neq 2$; then $\set{N}$ is XOR solvable over all fields.
\end{lemma}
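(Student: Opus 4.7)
My plan is to prove Lemma~\ref{lemma:suffcon} by strong induction on $\kappa(\set{N})$, using Lemma~\ref{lemma:reduction} as the inductive engine and Lemmas~\ref{lemma:con1}, \ref{lemma:general}, and \ref{lemma:nopair} to close the leaf cases. The guiding observation is that an edge disconnecting one of the ``diagonal'' source-terminal pair sets $\{(s_2,t_3),(s_3,t_1)\}$ or $\{(s_1,t_3),(s_3,t_2)\}$ already accounts for $2$ disconnected pairs, so such edges cannot exist when $\kappa\leq 1$; consequently hypothesis (b) of Lemma~\ref{lemma:general} becomes automatic for small $\kappa$. For the base case $\kappa=0$, every source-terminal pair is $2$-edge-connected. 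If no edge of $\set{N}$ simultaneously has paths from two sources to its tail and paths from its head to two terminals, Lemma~\ref{lemma:con1} delivers XOR solvability at once. Otherwise I would pick such an edge $e$, relabel so that $\{s_1,s_2\}\to e\to\{t_1,t_2\}$, and apply Lemma~\ref{lemma:general}: its hypothesis (a) holds by construction and (b) is vacuous since no edge disconnects any pair.

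For the inductive step with $\kappa\neq 2,3$ (Part C), I would split on whether $\set{C}=\emptyset$. If so, Lemma~\ref{lemma:reduction} and the induction hypothesis deliver XOR solvability directly. Otherwise I pick $e\in\set{C}$, label so that $\{s_1,s_2\}\to e\to\{t_1,t_2\}$, and try to apply Lemma~\ref{lemma:general}. The only way its (b) can fail is that some edge $f$ disconnects a diagonal pair, say $(s_2,t_3)$ and $(s_3,t_1)$. A case analysis on what additional source-terminal pairs $e$ and $f$ each disconnect --- exploiting $e\nrightarrow f$, $f\nrightarrow e$, and the freedom to swap $s_1\leftrightarrow s_2$ together with $t_1\leftrightarrow t_2$ --- forces the pair $(e,f)$, possibly together with a second violator, into the edge configuration of Theorem~\ref{theorem:mainres} or Theorem~\ref{theorem:mainres2}. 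These configurations pin $\kappa$ to $3$ or $2$ respectively, contradicting the standing assumption. Hence under some admissible relabeling (b) is satisfied and Lemma~\ref{lemma:general} applies.

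The case $\kappa=2$, needed only for Parts A and B, I would treat separately. When $\set{C}=\emptyset$, Lemma~\ref{lemma:reduction} plus the induction hypothesis again gives XOR solvability over any field, covering both A and B. When $\set{C}\neq\emptyset$, either some edge pair $(e_1,e_2)$ satisfies all four conditions of Theorem~\ref{theorem:mainres2} or no such pair exists. In the latter subcase Lemma~\ref{lemma:nopair} yields XOR solvability and settles both A and B. In the former, I would construct an explicit linear code over any $F\neq F_2$, choosing a scalar $\alpha\in F\setminus\{0,-1\}$ on one of the two bottleneck flows so that the three terminal demands decouple --- a construction which necessarily fails in characteristic $2$; this establishes Part A, while Part B is vacuous because the network is then not $F_2$-solvable.

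The main obstacle will be the structural case analysis inside the $\set{C}\neq\emptyset$ inductive step: one must verify that every way in which a violator $f$ of Lemma~\ref{lemma:general}(b) can coexist with an edge $e\in\set{C}$ ultimately collapses into the Theorem~\ref{theorem:mainres}/\ref{theorem:mainres2} edge-pair patterns, and do so uniformly across the symmetries $s_1\leftrightarrow s_2$, $t_1\leftrightarrow t_2$ and across all admissible values $\kappa\in\{4,\ldots,9\}$. Constructing the explicit non-$F_2$ linear code for the Theorem~\ref{theorem:mainres2} configuration, in particular verifying that all three terminals simultaneously recover the sum with a common choice of $\alpha$, also requires care but is more algebraic than structural.
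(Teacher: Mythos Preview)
Your overall plan—split on $\kappa$, use Lemma~\ref{lemma:reduction} when $\set C=\emptyset$ and Lemmas~\ref{lemma:con1}, \ref{lemma:general}, \ref{lemma:nopair} when $\set C\neq\emptyset$—mirrors the paper. But the inductive step for $\kappa\ge 4$ has a real gap. You claim that a violator $f$ of Lemma~\ref{lemma:general}(b) forces the pair $(e,f)$ into a Theorem~\ref{theorem:mainres} or~\ref{theorem:mainres2} configuration which ``pins $\kappa$ to $3$ or $2$''; but those configurations only give \emph{lower} bounds $\kappa\ge 3$ (resp.\ $\ge 2$), not exact values, so no contradiction with $\kappa=4$ follows. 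The paper's mechanism here is different and rests on Lemma~\ref{lemma:5con}, which you never invoke: for $\kappa\ge 5$ any maximum-disconnecting edge already yields a Lemma~\ref{lemma:5con} node; for $\kappa=4$, one first reduces (via Assumption~\ref{assumption:simplify}) to the case where $e\in\set C$ disconnects all four of $(s_i,t_j)_{i,j\in\{1,2\}}$, and then any violator $f$ is necessarily comparable with $e$ (the $s_2\to f\to t_1$ path must pass through $e$), producing a node reached by two sources and reaching all three terminals—again Lemma~\ref{lemma:5con}. Your phrase ``exploiting $e\nrightarrow f$, $f\nrightarrow e$'' presupposes precisely the conclusion of this argument rather than deriving it.

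There is a secondary gap at $\kappa=2$, $\set C\neq\emptyset$: your dichotomy ``Theorem~\ref{theorem:mainres2} pair exists vs.\ apply Lemma~\ref{lemma:nopair}'' overlooks that Lemma~\ref{lemma:nopair} also requires hypothesis~(2), namely the \emph{existence} of a suitable pair $e_1,e_2$. When the chosen $e\in\set C$ disconnects, say, $(s_1,t_1)$ and $(s_2,t_1)$ rather than a diagonal pair, no such $e_1,e_2$ need exist; the paper handles these residual subcases by applying Lemma~\ref{lemma:general} directly and, for one of them, by passing to the reverse network via Lemma~\ref{lemma:reverse}. Your explicit non-$F_2$ code for the Theorem~\ref{theorem:mainres2} configuration is essentially the paper's construction (the forbidden scalar is $\alpha=1$, not $\alpha=-1$).
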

The network shown in Fig. \ref{fig:x3} (originally presented in \cite{RaiD:09b})
and the network shown in Fig. \ref{fig:x3a}
are examples of  networks with $\kappa = 2$ which are not solvable over $F_2$
but are linearly solvable over other fields.

\begin{figure}[htp]
\centering
\subfigure[]{
\includegraphics[height=2in]{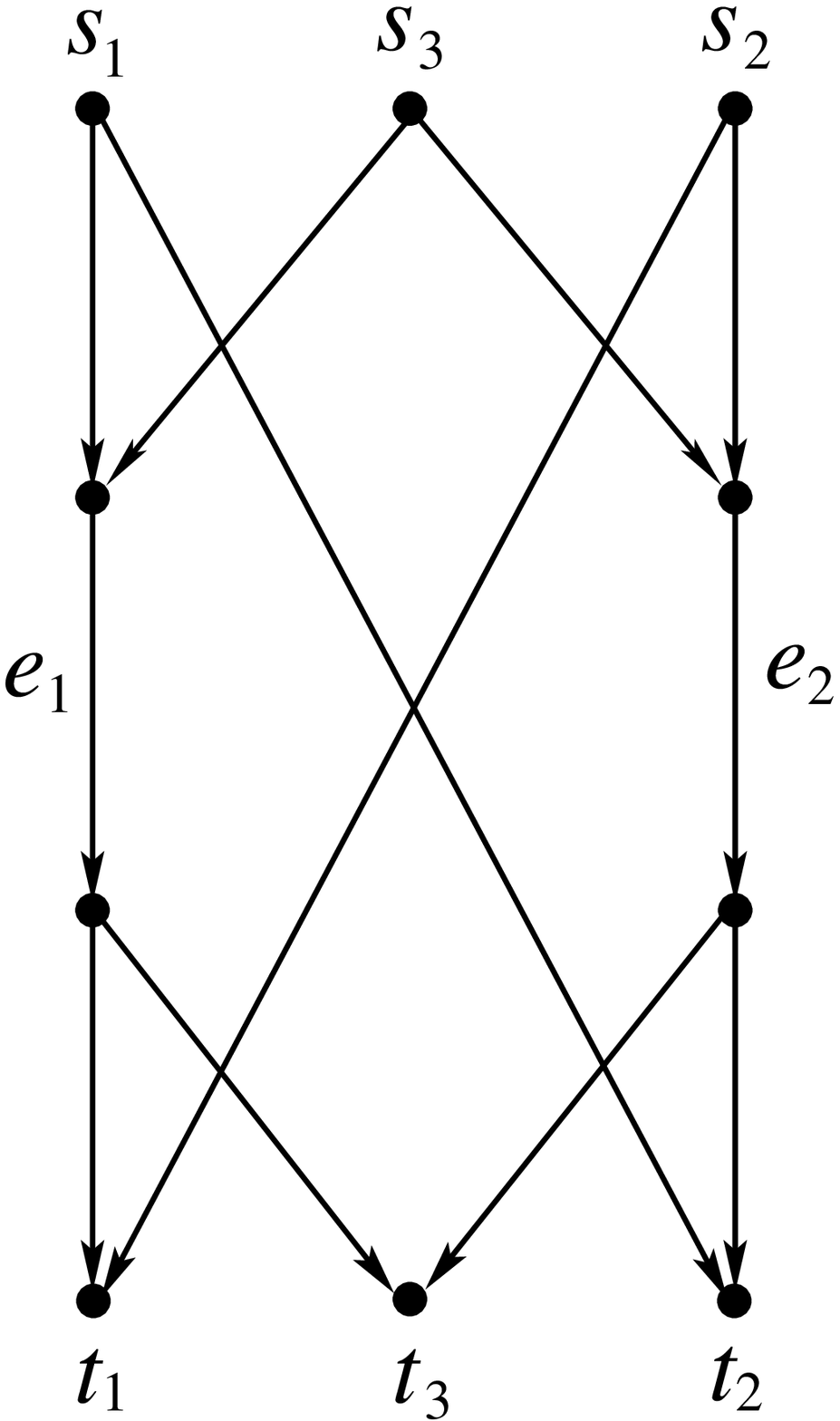}
\label{fig:s3}
}
\subfigure[]{
\includegraphics[height=2in]{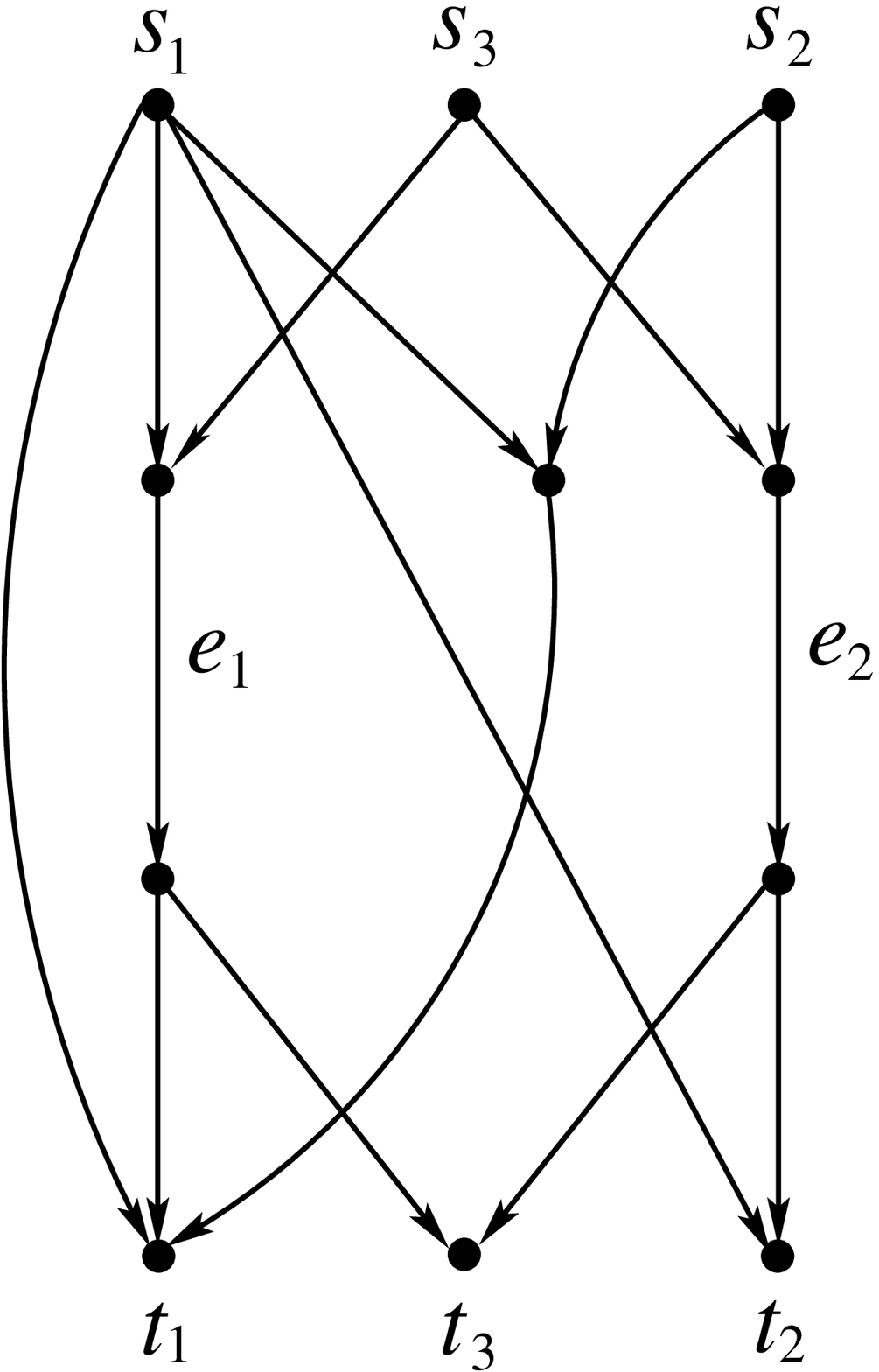}
\label{fig:s3a}
}
\caption[]{Some nonsolvable networks}
\label{fig:non-solvable}
\end{figure}

\begin{figure}[h]
\centering
\subfigure[]{
\includegraphics[height=2in]{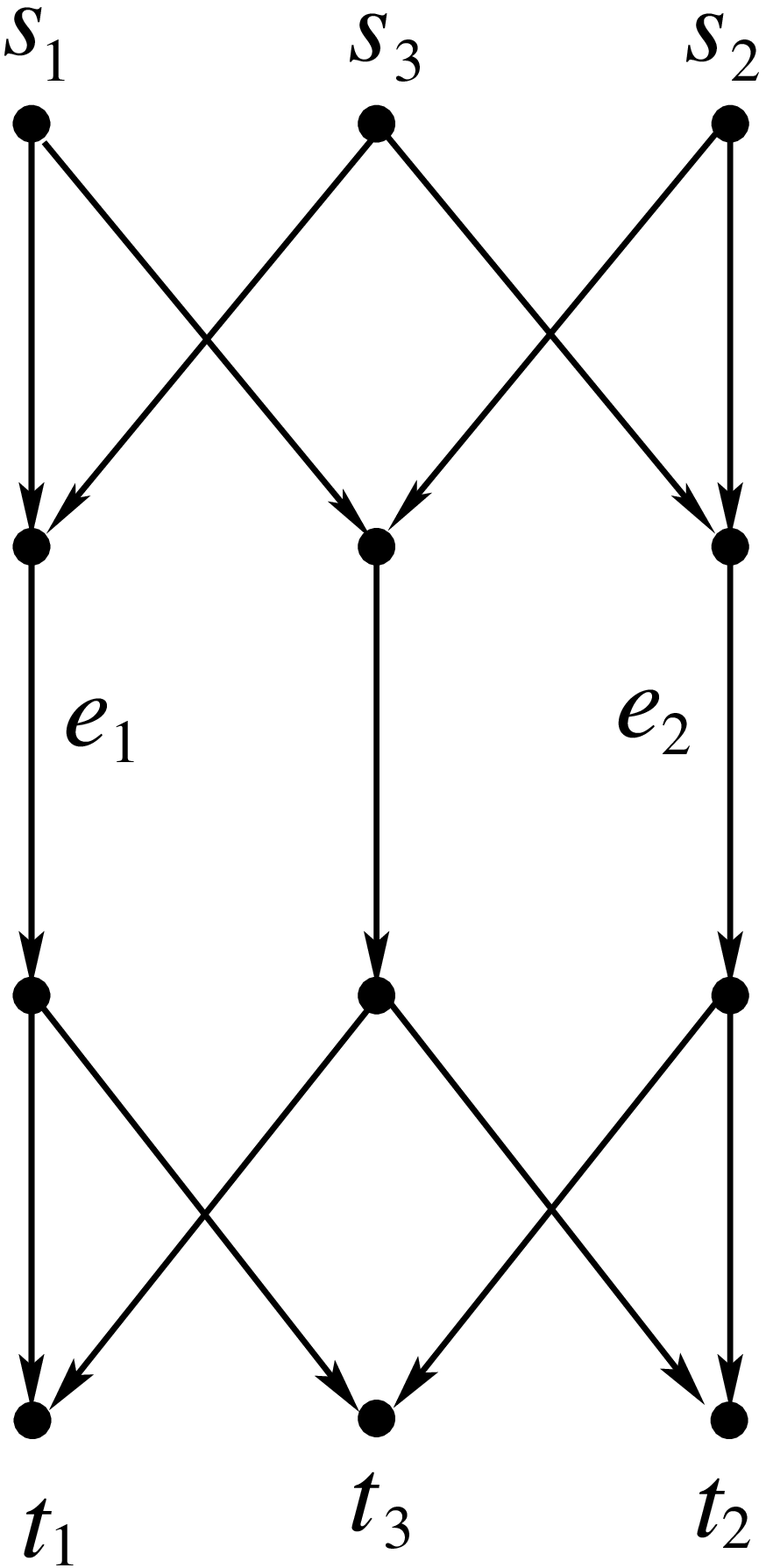}
\label{fig:x3}
}
\subfigure[]{
\includegraphics[height=2in]{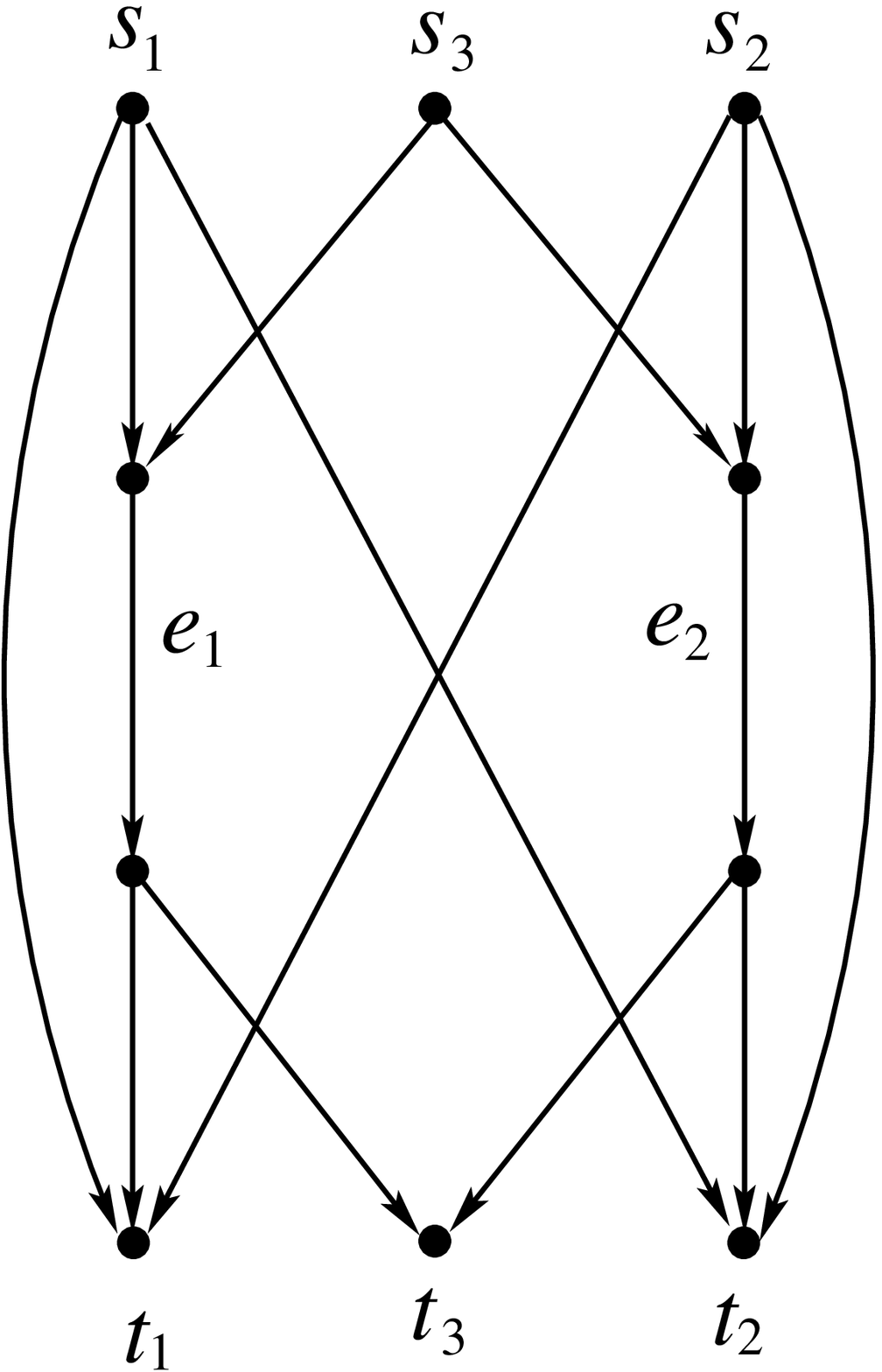}
\label{fig:x3a}
}
\caption[]{Some networks which are not solvable over $F_2$ but are linearly solvable over any other field}
\label{fig:nonbinarysolvable}
\end{figure}

\begin{lemma}\label{lemma:converse1}
Let $\net$ be a connected $3s/3t$ sum-network with $\kappa (\net ) = 3$ and $\set{C}=\phi$.
A. If $\net$ has an edge pair satisfying
the conditions 1-4 of Theorem~\ref{theorem:mainres2} then it is not solvable over $F_2$
but linearly solvable over other fields. B. If $\net$
does not have an edge pair satisfying conditions 1-4 of Theorem~\ref{theorem:mainres2}
then it is XOR solvable over all fields.
\end{lemma}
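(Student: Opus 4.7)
The plan is to treat the two parts asymmetrically: Part~A is driven by an explicit algebraic analysis of the coefficient relations forced by the edge pair $(e_1,e_2)$, whereas Part~B is a structural reduction to Lemmas~\ref{lemma:con1},~\ref{lemma:general}, and~\ref{lemma:nopair}.

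For Part~A, write $y_{e_1}=\alpha_1 x_1+\alpha_2 x_2+\alpha_3 x_3$ and $y_{e_2}=\beta_1 x_1+\beta_2 x_2+\beta_3 x_3$ for the symbols carried on $e_1$ and $e_2$ under any scalar linear code, and let $\gamma_i$ (respectively $\delta_i$) be the coefficient with which $y_{e_1}$ (resp.\ $y_{e_2}$) enters the decoding rule at $t_i$. Conditions~1 and 2 of Theorem~\ref{theorem:mainres2} force every $s_1$-to-$t_3$ and $s_3$-to-$t_1$ path to use $e_1$, and every $s_2$-to-$t_3$ and $s_3$-to-$t_2$ path to use $e_2$, so correct decoding requires
\[
\gamma_3\alpha_1=\gamma_1\alpha_3=\delta_3\beta_2=\delta_2\beta_3=1.
\]
Condition~3 makes $e_1$ and $e_2$ the only routes from $s_3$ to $t_3$, and Condition~4 ensures that no such path uses both, so the $x_3$-coefficient recovered at $t_3$ is
\[
\gamma_3\alpha_3+\delta_3\beta_3=1.
\]
Eliminating $\gamma_3$ and $\delta_3$ yields $\alpha_3/\alpha_1+\beta_3/\beta_2=1$ with both summands in $F\setminus\{0\}$. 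Over $F_2$ the only nonzero element is $1$, so the left-hand side equals $0$, a contradiction. Over any $F$ with $|F|\geq 3$, pick $u\in F\setminus\{0,1\}$ and set $v=1-u\in F\setminus\{0\}$; take $\alpha_1=\beta_2=1$, $\alpha_3=u$, $\beta_3=v$, and $\pm 1$ on every other edge, using the structural hypotheses $\kappa(\mN)=3$ and $\set{C}=\phi$ to argue that the remaining coefficients can be chosen to make all three terminals decode the sum.

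For Part~B, assume no edge pair satisfies conditions~1--4 of Theorem~\ref{theorem:mainres2}. We show $\mN$ fits the hypothesis of one of Lemmas~\ref{lemma:con1},~\ref{lemma:general}, or~\ref{lemma:nopair}. Because $\set{C}=\phi$, every maximum-disconnecting edge lies in $\set{A}\cup\set{B}$; by the reverse-network correspondence we may assume some $e^{*}\in\set{A}$ exists, reachable from all three sources and reaching only the terminal $t_3$, after relabelling. If no edge of $\mN$ is simultaneously a descendant of two sources and an ancestor of two terminals, Lemma~\ref{lemma:con1} applies directly. Otherwise, pick such an edge $e'$ and carry out a case analysis on its position relative to $e^{*}$, the sources, and the terminals. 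In each case, either the hypotheses of Lemma~\ref{lemma:general} hold with $e=e'$ after a relabelling, yielding XOR-solvability, or else two edges $f_1,f_2$ can be extracted that satisfy clauses~(a)--(d) of Lemma~\ref{lemma:nopair}(2); the standing assumption supplies clause~(1), so Lemma~\ref{lemma:nopair} gives XOR-solvability.

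The main obstacle is the Part~B case analysis: several configurations of the middle edge $e'$ relative to the bottleneck $e^{*}$ must each be matched either to an application of one of the three solvability lemmas, or to a direct construction of the Theorem~\ref{theorem:mainres2}-pair forbidden by hypothesis (which produces a contradiction). The $F_2$ obstruction in Part~A is almost immediate from the five coefficient identities, but extending the local choices at $e_1$ and $e_2$ to a globally consistent linear code over $F$ with $|F|\geq 3$ is the more delicate step, and is precisely where the hypotheses $\kappa(\mN)=3$ and $\set{C}=\phi$ are invoked.
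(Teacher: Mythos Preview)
Your Part~A non-solvability argument over $F_2$ is fine and matches in spirit what the paper does (inside the sufficiency proof of Theorem~\ref{theorem:mainres2}). But your solvability construction over larger fields is not a proof: ``take $\pm 1$ on every other edge, using the structural hypotheses $\kappa(\mN)=3$ and $\set{C}=\phi$'' does not explain how the remaining paths avoid conflicts. In fact $\kappa=3$ and $\set{C}=\phi$ are irrelevant to Part~A; the paper derives Part~A directly from the sufficiency part of Theorem~\ref{theorem:mainres2}, which holds for \emph{any} network with an edge pair satisfying conditions~1--4. That proof builds an explicit subnetwork from paths $Q_1,Q_2,R_1,R_2$ (an $(s_1,t_1)$ path avoiding $e_1$, an $(s_2,t_2)$ path avoiding $e_2$, and $(s_1,t_2),(s_2,t_1)$ paths), argues from conditions~1--4 that these paths are node-disjoint from the $e_1,e_2$ infrastructure, and then exhibits the code $x_1+\alpha x_3$ on $e_1$, $x_3+\beta x_2$ on $e_2$, $x_2+\gamma x_1$ to $t_1,t_2$ via the $2s/2t$ subnetwork, with $\alpha\notin\{0,1\}$, $\beta=(1-\alpha)^{-1}$, $\gamma=1-\alpha^{-1}$. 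Your local choice of $u,v$ is the right idea, but the global consistency argument is the real content and you have not supplied it.

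For Part~B your route is genuinely different from the paper's, and as written it is only a plan: the phrase ``carry out a case analysis on its position relative to $e^{*}$'' hides all the work, and it is not at all clear that every configuration of $e'$ relative to $e^{*}$ lands in Lemma~\ref{lemma:general} or Lemma~\ref{lemma:nopair}. The paper avoids this entirely by a reduction in the style of Lemma~\ref{lemma:reduction}: since $\set{C}=\phi$, every maximum-disconnecting edge lies in $\set{A}\cup\set{B}$; add a parallel edge to each such edge to obtain $\mN^{*}$ with $\kappa(\mN^{*})\leq 2$. By the already-proved $\kappa\leq 2$ cases of Lemma~\ref{lemma:suffcon}, $\mN^{*}$ is either XOR solvable over all fields or contains an edge pair satisfying conditions~1--4 of Theorem~\ref{theorem:mainres2}. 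But any such pair in $\mN^{*}$ would satisfy conditions~3 and~4 in $\mN$ as well (parallel edges do not affect these), and could fail condition~1 or~2 in $\mN$ only by disconnecting an extra source--terminal pair there, which would place the edge in $\set{C}$, contradicting $\set{C}=\phi$. Hence the pair would exist in $\mN$, contrary to the Part~B hypothesis; so $\mN^{*}$ is XOR solvable, and then the parallel-edge removal argument of Lemma~\ref{lemma:reduction} gives XOR solvability of $\mN$. This is short and uses no new case analysis; your approach, even if it can be completed, would be substantially longer.
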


\begin{lemma}\label{lemma:converse2}
Given a connected $3s/3t$ sum-network $\set{N}$ with $\kappa(\net)=3$, if for some
labeling of its sources and terminals, there exists an edge
$e_2$ satisfying conditions 3 and 4 of Theorem~\ref{theorem:mainres}, then $\set{N}$
is nonsolvable only if another edge $e_1$ exists such
that $e_1$ and $e_2$ satisfy all the six conditions of Theorem~\ref{theorem:mainres},
else $\net$ is XOR solvable over all fields.
\end{lemma}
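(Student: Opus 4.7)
The argument proceeds by contraposition: I assume no edge $e_1$ together with $e_2$ satisfies all six conditions of Theorem~\ref{theorem:mainres}, and I construct an XOR code for $\set{N}$ valid over every field. The starting observation is that because $e_2$ satisfies conditions 3 and 4 and $\kappa(\set{N})=3$, $e_2$ disconnects \emph{exactly} the three pairs $(s_2,t_2)$, $(s_2,t_3)$ and $(s_3,t_2)$. Consequently $e_2\in\set{C}$, and in $\set{N}':=\set{N}-\{e_2\}$ we retain $s_1\rightarrow t_i$ for every $i$, $s_2\rightarrow t_1$, and $s_3\rightarrow t_1,t_3$. The natural XOR code sends $x_2+x_3$ on $e_2$ (feasible because $s_2,s_3\rightarrow\tail{e_2}$), leaving only the task of delivering $x_1$ to $t_2,t_3$ and $x_1+x_2+x_3$ to $t_1$ through $\set{N}'$.

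The nonexistence assumption rewrites as: every edge $e$ disconnecting both $(s_1,t_3)$ and $(s_3,t_1)$ --- a \emph{candidate} --- fails at least one of conditions 5 or 6. I would split into three exhaustive cases: (A) no candidate exists; (B) some candidate satisfies condition 6 but fails condition 5; and (C) every candidate violates condition 6, so every candidate $e$ has $e\rightarrow e_2$ or $e_2\rightarrow e$. In case (A), the extra edge-disjointness in $\set{N}'$ lets me run a flow-decomposition argument reducible to Lemma~\ref{lemma:general} after an appropriate relabeling of sources and terminals in $\set{N}'$. In case (B), I set $e_1:=e$ and invoke Lemma~\ref{lemma:nopair}: its conditions (2a)--(2d) are immediate from the case hypotheses, and its condition (1) --- absence of an edge pair satisfying Theorem~\ref{theorem:mainres2} --- is verified using $\kappa(\set{N})=3$ together with the three-pair disconnect structure of $e_2$ (which excludes $e_2$ itself from any such pair) to enumerate and rule out alternative candidates.

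Case (C) is the main obstacle. I would further split it by whether the candidate $e$ is upstream of $e_2$ ($e\rightarrow e_2$) or downstream ($e_2\rightarrow e$). In the upstream sub-case the $e$-flow merges into $\tail{e_2}$, and the XOR at $\tail{e_2}$ can be arranged so that $e_2$ still effectively transmits $x_2+x_3$ with the $e$-contribution absorbed; the residual routing in $\set{N}'$ then delivers the required coefficients at every terminal. The downstream sub-case requires analyzing the reachability of $t_1$ from $\head{e_2}$ and the resulting flows through $e$ in $\set{N}'$. In both sub-cases the subtle point is balancing the $x_1$-coefficients at $t_2$ and $t_3$, which can acquire extra contributions from the $e$-flow; the balancing uses the specific disconnect structure of $e$ (that it cuts both $(s_1,t_3)$ and $(s_3,t_1)$) together with $\kappa(\set{N})=3$ to force the XOR coefficients at every terminal to sum to $1$ on each source.
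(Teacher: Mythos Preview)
Your case split misses the crucial first move in the paper's argument: invoke Lemma~\ref{lemma:5con} to reduce to Assumption~\ref{assumption:simplify}. Once that is in force, your Case~(C) is \emph{vacuous}. Indeed, if a candidate $e_1$ disconnects $(s_1,t_3)$ and $(s_3,t_1)$ then $s_1\rightarrow\tail{e_1}$; and under Assumption~\ref{assumption:simplify} one has $s_1\nrightarrow\tail{e_2}$ (else $\tail{e_2}$ sees all three sources and two terminals). Hence $e_1\nrightarrow e_2$. Symmetrically $s_2\nrightarrow\tail{e_1}$ forces $e_2\nrightarrow e_1$. So every candidate automatically satisfies condition~6, and your ``main obstacle'' evaporates. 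Your proposed upstream/downstream balancing argument is therefore unnecessary; it is also too vague to be convincing as written (the claim that the $e$-contribution can be ``absorbed'' at $\tail{e_2}$ while the residual routing in $\set{N}'$ still delivers the right coefficients is exactly the kind of thing that fails without a careful structural analysis).

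With Case~(C) gone, your Cases~(A) and~(B) are essentially the paper's route, but each has an underspecified step. In Case~(A) you want Lemma~\ref{lemma:general} under a relabeling that makes $e_2$ play the role of $e$; hypothesis~(b) of that lemma then requires \emph{two} non-existence statements, not one: besides ``no candidate'', you also need that no edge disconnects $(s_1,t_2)$ and $(s_2,t_1)$. The paper gets this for free from Assumption~\ref{assumption:simplify} (any such edge would be comparable to $e_2$ since $e_2$ disconnects $(s_2,t_2)$, forcing a node that sees too many sources and terminals). In Case~(B), verifying condition~(1) of Lemma~\ref{lemma:nopair} is not a one-liner: you must show that \emph{no} edge disconnects exactly $(s_i,t_j)$ and $(s_{i'},t_{j'})$ with $i\neq i'$, $j\neq j'$ unless $\{i,i'\}=\{j,j'\}=\{1,3\}$. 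The paper does this via a short four-step case analysis, again driven by Assumption~\ref{assumption:simplify} and the specific pairs $e_2$ disconnects; your appeal to ``$\kappa(\set{N})=3$ together with the three-pair disconnect structure of $e_2$'' points in the right direction but does not carry the argument.
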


\section{Proofs}\label{section:proofs}
We start by presenting some known results which will be used in
the proofs of our results. Considering the complexity of the proof of
the main results and their dependence on so many lemmas, a dependency
graph of the results is shown in Fig.~\ref{fig:dependence} for clarity.
\begin{figure*}[!t]
\centering
\includegraphics[width=4in]{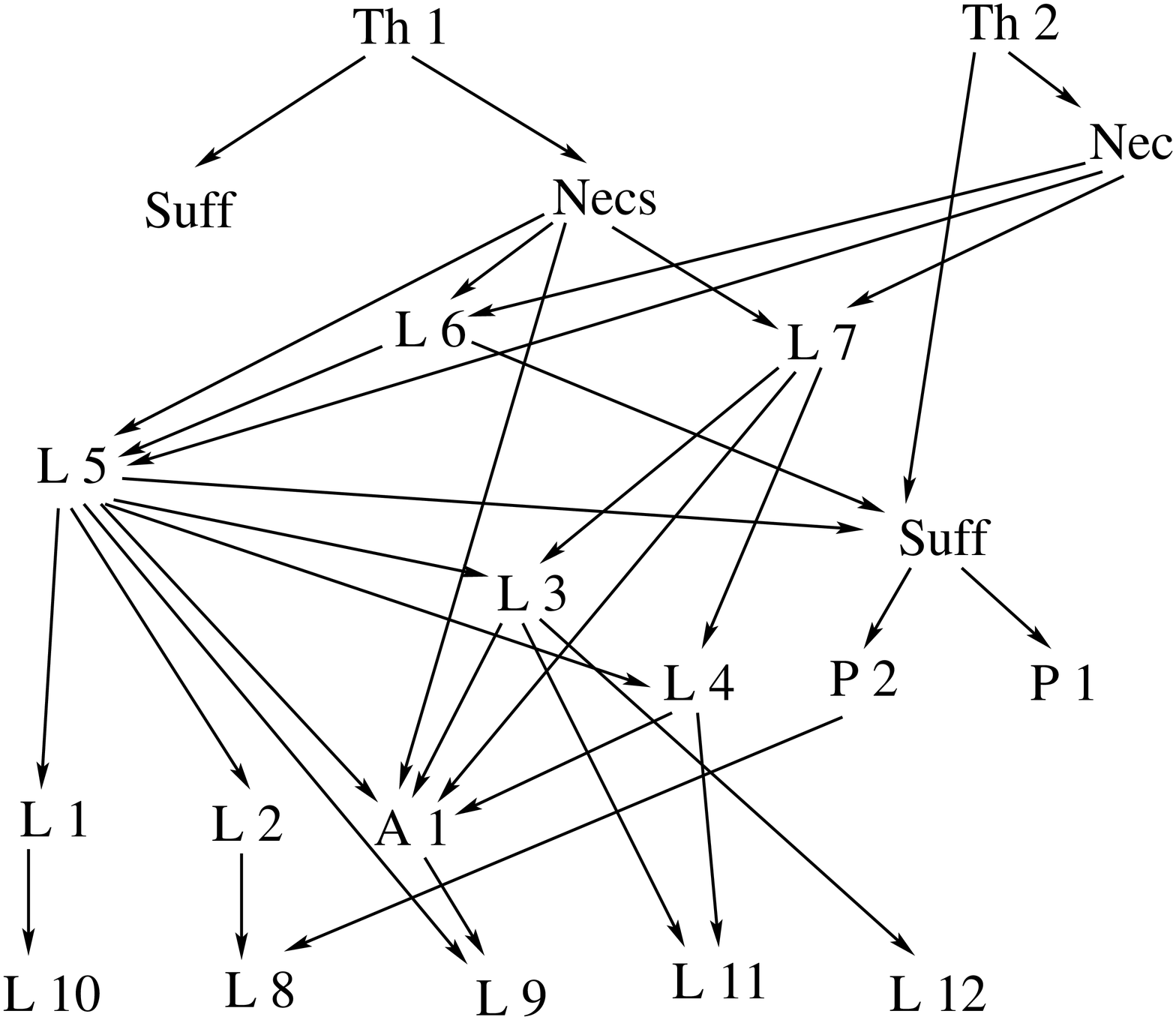}
\caption{Dependency graph of the results. Here L, P, Th, Suff, Necs, and
A stand for respectively Lemma, Part, Theorem, Sufficiency, Necessity, and
Assumption.}
\label{fig:dependence}
\end{figure*}

\begin{lemma}\cite{ramamoorthy}\label{lemma:R}
A sum-network for which either the number of sources or
the number of terminals is at most two is solvable if and only if the network is connected.
Moreover, such a connected network is XOR solvable over any field.
\end{lemma}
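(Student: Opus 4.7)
The plan is to prove the biconditional by handling the two directions separately and to reduce the substantive sufficiency direction to a two-source case via the reverse-network duality.

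\emph{Necessity.} If some source-terminal pair $(s_i,t_j)$ has no directed path, then every symbol ever received at $t_j$ is a deterministic function of only the remaining sources' symbols, hence independent of $x_i$. Because the sources are independent and uniform over $F$, no decoder at $t_j$ can recover $\sum_k x_k$, so every source-terminal pair must be connected.

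\emph{Sufficiency: reduction and one-source subcase.} By the reverse-network equivalence of~\cite{RaiD:09b,RaiDS:itw10}, a sum-network is scalar linearly solvable over $F$ iff its reverse is; the same bijection preserves coefficients in $\{0,\pm 1\}$ and hence XOR-solvability. Reversal swaps the numbers of sources and terminals and preserves source-terminal connectivity, so it suffices to exhibit an XOR code whenever there are at most two sources. For a single source, each $t_j$ only needs $x_1$: fix an $(s_1,t_j)$-path $P_j$ for every $j$, send $x_1$ on every edge of $\bigcup_j P_j$ and $0$ elsewhere. All coefficients are in $\{0,1\}$, so this is an XOR code over any field.

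\emph{Two-source subcase and main obstacle.} For each terminal $t_j$ I would pick an $(s_1,t_j)$-path $Q^1_j$ and an $(s_2,t_j)$-path $Q^2_j$, and let $v_j$ be the earliest node of $Q^1_j$ that also lies on $Q^2_j$ (it exists because both paths reach $t_j$). The intended scheme transmits $x_1$ on $Q^1_j(s_1:v_j)$, $x_2$ on $Q^2_j(s_2:v_j)$, and $x_1+x_2$ on $Q^1_j(v_j:t_j)$ after the merge at $v_j$. Each edge $e$ is thereby assigned a coefficient pair $(\alpha_e,\beta_e)\in\{0,\pm 1\}^2$ indicating the combination $\alpha_e x_1+\beta_e x_2$ it carries. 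The main obstacle is to show that this global assignment is a valid XOR network code: the outgoing symbol at every node must be producible by XOR from its incoming symbols, and the final incoming edge at each $t_j$ must end up with $(\alpha,\beta)=(1,1)$. I would establish this by induction in topological order, and whenever naive aggregation on a shared edge would destroy an earlier terminal's plan, re-select the path choices using the freedom provided by connectivity. This consistency argument is the single technical subtlety; every other step is routine path construction.
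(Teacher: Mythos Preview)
The paper does not give its own proof of this lemma; it is quoted from~\cite{ramamoorthy} and used as a black box. So there is no paper-proof to compare against, and your proposal has to stand on its own.

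Your necessity argument and the reduction to at most two sources via Lemma~\ref{lemma:reverse} are fine, as is the one-source subcase. The genuine gap is in the two-source subcase. You correctly identify the obstacle---that per-terminal path choices may place incompatible demands on a shared edge---but you do not resolve it. The phrase ``re-select the path choices using the freedom provided by connectivity'' is not an argument: you have not specified a re-selection rule, shown it terminates, or shown that a globally consistent selection exists. As stated, this is a promissory note rather than a proof. Concretely, if $s_1\to a$, $s_2\to a$, $a\to b$, $b\to t_1$, $b\to t_2$, and additionally $s_2\to t_1$, then choosing $Q^2_1=s_2\to t_1$ forces $(a,b)$ to carry $x_1$ for $t_1$, while $t_2$ needs $(a,b)$ to carry $x_1+x_2$; fixing this requires changing $t_1$'s choice, and you have given no inductive invariant guaranteeing such repairs always succeed without creating new conflicts elsewhere.

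There is a standard way to close this gap that avoids path selection entirely. For each node $v$ let $R(v)\subseteq\{1,2\}$ be the set of indices $i$ with $s_i\to v$ (with $i\in R(s_i)$), and let every edge $e=(u,w)$ carry $\sum_{i\in R(u)} x_i$. This is a valid XOR code: if $R(v)=\{1\}$ (resp.\ $\{2\}$) then every incoming edge carries $0$ or $x_1$ (resp.\ $x_2$) and at least one carries $x_1$ (resp.\ $x_2$); if $R(v)=\{1,2\}$ then either some incoming edge already carries $x_1+x_2$, or there are incoming edges carrying $x_1$ and $x_2$ separately, and in either case $v$ can output $x_1+x_2$. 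Since each terminal $t_j$ has $R(t_j)=\{1,2\}$ by connectedness, it recovers $x_1+x_2$. All coefficients lie in $\{0,1\}$, so this is an XOR code over any field, and by Lemma~\ref{lemma:reverse} the $\leq 2$-terminal case follows as you outlined.
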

In~\cite{langberg3}, the authors proved the following as a side-result:
\begin{lemma}\label{lemma:5con}
\cite{langberg3}
If in a connected $3s/3t$ sum-network there exists a node
$v$ such that there is a path from all the sources (resp. at least two sources)
to $v$ and there is a path from $v$ to at least two terminals
(resp. all the terminals) then the network is XOR solvable over any field.
\end{lemma}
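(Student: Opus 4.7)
The plan is to reduce the entire lemma to one of its two hypotheses, say case (b), by invoking the reverse-network duality for linear codes established in \cite{RaiD:09b}. Inspecting that construction, the reverse code of a scalar linear code is essentially its transpose, so local coefficients in $\{0,\pm 1\}$ are preserved; hence a network is XOR solvable iff its reverse network is. Under reversal, the condition ``all three sources reach $v$ and $v$ reaches at least two terminals'' becomes ``at least two sources reach $v$ and $v$ reaches all three terminals'' in $\onet$ -- precisely case (b) of the hypothesis. So it suffices to prove case (b): there exist two sources, say $s_1,s_2$, with $\{s_1,s_2\}\to v\to\{t_1,t_2,t_3\}$.

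I split on whether $s_3\to v$ in $\set{N}$. If $s_3\to v$, then $\{s_1,s_2,s_3\}\to v\to\{t_1,t_2,t_3\}$. On the subnetwork $\set{N}_1$ of edges lying on some path to $v$ I have a 3-source 1-terminal sum-network with terminal $v$, which is XOR solvable by Lemma~\ref{lemma:R}, letting me compute $x_1+x_2+x_3$ at $v$. On the subnetwork $\set{N}_2$ of edges lying on some path out of $v$ I have a trivial 1-source 3-terminal multicast problem, solved by pure routing (a special case of XOR coding). Since $\set{N}$ is a DAG, the edge sets of $\set{N}_1$ and $\set{N}_2$ are disjoint, so the two codes compose to give an XOR code on all of $\set{N}$.

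If instead $s_3\nrightarrow v$, I first use Lemma~\ref{lemma:R} on the 2-source 1-terminal upstream subnetwork (sources $s_1,s_2$, terminal $v$) to compute $x_1+x_2$ at $v$, and then treat $v$ as a virtual source carrying the message $x_1+x_2$. After deleting the upstream-of-$v$ edges from $\set{N}$, what remains is a sum-network on sources $\{v,s_3\}$ and terminals $\{t_1,t_2,t_3\}$ in which each terminal must receive $(x_1+x_2)+x_3$. I claim this reduced network is connected: downstream edges of $v$ survive deletion because upstream and downstream edges of $v$ are disjoint in a DAG, giving $v\to t_j$; and $s_3\to t_j$ holds after deletion because, were any edge of a path $s_3\to t_j$ upstream of $v$, we would get $s_3\to v$, contradicting the sub-case hypothesis. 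Lemma~\ref{lemma:R} applied to this 2-source 3-terminal instance then produces an XOR code, which concatenates with the upstream code to give an XOR code on $\set{N}$.

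The main obstacle is the disjointness bookkeeping in the $s_3\nrightarrow v$ sub-case -- specifically, arguing that the edges reserved for computing $x_1+x_2$ at $v$ cannot be needed by any $s_3\to t_j$ path. This reduces to the DAG observation that $s_3\nrightarrow v$ forces the descendants of $s_3$ to lie outside the ancestor set of $v$, so the two phases truly occupy disjoint edge sets. A secondary technicality is verifying that the reverse-network duality preserves XOR solvability, which follows from the transpose structure of the reverse code in \cite{RaiD:09b}.
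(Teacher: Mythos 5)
The paper does not actually prove this lemma: it is imported verbatim from \cite{langberg3} as a known side-result, so there is no in-paper argument to compare against. Your proof is a correct, self-contained derivation. The duality step is sound and is even cheaper than you make it: you do not need to re-inspect the reverse-code construction, since the second part of Lemma~\ref{lemma:reverse} already asserts that XOR solvability is preserved under reversal, and reversing turns ``all sources reach $v$, $v$ reaches two terminals'' into exactly case (b) for the (still connected) reverse network. Your handling of case (b) also checks out: in a DAG no edge can lie both on a path into $v$ and on a path out of $v$ (that would close a cycle through $v$), so the upstream code from Lemma~\ref{lemma:R} and the downstream code occupy disjoint edge sets; and in the $s_3\nrightarrow v$ sub-case, any edge of an $(s_3,t_j)$ path that were upstream of $v$ would give $s_3\rightarrow v$, so all $(s_3,t_j)$ and $(v,t_j)$ paths survive the deletion and the reduced two-source instance is connected, letting Lemma~\ref{lemma:R} deliver $(x_1+x_2)+x_3$. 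The only loose end is the degenerate situation where $v$ coincides with a source or terminal (e.g.\ $v=s_3$, in which case your ``two sources $v$ and $s_3$'' collapse to one node); this is harmless --- the node then computes the full sum locally and the problem reduces to a single-source multicast --- but a sentence acknowledging it would make the argument airtight. Overall your proof is essentially the natural ``compute a partial sum at the bottleneck node, then finish with the two-source/two-terminal result'' argument one would expect behind the cited result, and it is valid.
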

\begin{corollary}\label{corollary:srctrm}
In a $3s/3t$ connected sum-network, if there is a path from
one source (or terminal) to another,
then the network is XOR solvable over any field.
\end{corollary}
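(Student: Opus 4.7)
The plan is to invoke Lemma~\ref{lemma:5con} in each of the two symmetric cases of the hypothesis and conclude XOR solvability directly. In the source case, WLOG $s_1 \to s_2$; I take $v = s_2$ and invoke the branch of Lemma~\ref{lemma:5con} requiring that at least two sources reach $v$ and $v$ reaches all terminals. The source $s_1$ reaches $v$ via the hypothesized path, and the source $s_2$ is the node $v$ itself, so two source symbols ($x_1$ and $x_2$) are available at $v$. Connectedness of the network gives paths from $v = s_2$ to each of $t_1, t_2, t_3$. Lemma~\ref{lemma:5con} then yields XOR solvability over any field.

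In the terminal case, WLOG $t_1 \to t_2$; I take $v = t_1$ and invoke the other branch of Lemma~\ref{lemma:5con}, requiring that all sources reach $v$ and $v$ reaches at least two terminals. By connectedness every source has a path to every terminal, so all three sources reach $v$. Also $v = t_1$ reaches $t_2$ by hypothesis, and is itself a terminal, so $v$ reaches at least two of the three terminals. Lemma~\ref{lemma:5con} again gives XOR solvability over any field.

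The only point requiring care is a conventional one: because the paper declares that a node is not its own ancestor/descendant, the trivial ``path'' from $s_2$ to itself (resp.\ $t_1$ to itself) is not formally a path. This is not a real obstacle. The proof of Lemma~\ref{lemma:5con} uses such paths only to guarantee that the relevant source symbol is available at $v$ (resp.\ that the sum can be delivered at the relevant terminal from $v$); both are automatic when $v$ \emph{is} that source (resp.\ terminal). So one either extends Lemma~\ref{lemma:5con} to this boundary case in one line, or states it more carefully to begin with. Beyond this conventional nuance, no substantive new ideas are needed: the corollary is essentially a direct instantiation of Lemma~\ref{lemma:5con}.
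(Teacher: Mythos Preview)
Your proposal is correct and follows essentially the same approach as the paper: apply Lemma~\ref{lemma:5con} with $v$ taken to be the target source (resp.\ terminal). The paper's proof is a single line treating only the source case $s_i\rightarrow s_j$; you are more thorough in also spelling out the terminal case and in flagging the self-path convention, both of which the paper leaves implicit.
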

\begin{proof}
If $s_i \rightarrow s_j$, then $s_j$ satisfies the hypothesis of
Lemma~\ref{lemma:5con} and thus the corollary follows.
\end{proof}
So w.l.o.g., {\it we assume that the sources have no incoming edges and the
terminals have no outgoing edges.}

\begin{lemma}\cite[Theorem 5]{RaiD:09b}\label{lemma:reverse}
If a sum-network $\mathscr{N}$ is linearly solvable over a field $F$,
then so is its reverse sum-network $\set{\tilde{N}}$. Further, if
$\mathscr{N}$ has a XOR solution over $F$, then so does the reverse network.
\end{lemma}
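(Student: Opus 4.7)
The plan is to reduce the lemma to the classical fact that any linear scalar network code has a well-defined transposed code on the reverse graph whose global transfer matrix equals the transpose of the original. Since the hypothesis forces the original transfer matrix from the $l$ sources to the $j$ terminals to be the all-ones $j\times l$ matrix (every terminal recovers $\sum_{i=1}^l x_i$), its transpose is the all-ones $l\times j$ matrix, which is exactly the transfer matrix required to solve the reverse sum-network, where the former terminals are now the sources and the former sources are now the terminals.

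Concretely, I would fix a linear scalar solution on $\mathscr{N}$ and describe it by (i) distribution coefficients $\gamma_{i,e}\in F$ at each source $s_i$ on its outgoing edges $e$; (ii) local combining coefficients $\alpha_{ij}\in F$ at each internal node $v$ with incoming edges $e_1,\dots,e_p$ and outgoing edges $f_1,\dots,f_q$, specifying that the symbol on $f_j$ equals $\sum_i \alpha_{ij}\,y_{e_i}$; and (iii) combining coefficients $\beta_{k,e}$ at each terminal $t_k$ so that the recovered symbol $\sum_e \beta_{k,e}\,y_e$ is forced by the hypothesis to equal $\sum_i x_i$. On $\set{\tilde{N}}$ I would define the reverse code by transposing these assignments: at the same internal node $v$, the reverse of $f_j$ is now an incoming edge and the reverse of $e_i$ is now an outgoing edge, and I set the corresponding local coefficient again to $\alpha_{ij}$; the coefficients $\beta_{k,e}$ become distribution coefficients at the new source $\tilde s_k = t_k$, and the coefficients $\gamma_{i,e}$ become combining coefficients at the new terminal $\tilde t_i = s_i$.

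An induction on a topological order of $\mathscr{N}$ (which reverses to a topological order of $\set{\tilde{N}}$) then establishes the key path-sum identity: each $s_i$--$t_k$ path $P$ in $\mathscr{N}$ reverses to a $\tilde s_k$--$\tilde t_i$ path in $\set{\tilde{N}}$ carrying the same product of source-distribution, internal-combining, and terminal-combining coefficients. Summing over all paths shows that the two networks' end-to-end transfer matrices are transposes of one another. The original being the all-ones matrix, the reversed one is also the all-ones matrix, so each terminal of $\set{\tilde{N}}$ recovers $\sum_k \tilde x_k$, as required.

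The XOR refinement is then immediate: the multiset of field elements used as coefficients of the reverse code is exactly the multiset used in the original code, so if every coefficient in the original is in $\{\pm 1\}$, then so is every coefficient in the reverse. The only step of the argument that is not pure bookkeeping is the path-sum identity establishing the transpose relation, and I expect it to be the main obstacle, primarily because care is needed to treat the source-side $\gamma$'s and the terminal-side $\beta$'s uniformly with the internal $\alpha$'s in the induction. No new idea beyond the standard transfer-matrix formalism is required.
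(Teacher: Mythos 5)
Your proposal is correct, but note that the paper offers no proof of this lemma at all: it simply cites \cite[Theorem 5]{RaiD:09b} and remarks that the XOR refinement ``can be easily seen to follow from the reverse code construction proposed therein.'' That reverse code construction is exactly the transposed-coefficient scheme you describe (local coefficients reused at each node, source distribution and terminal combining coefficients swapped, path-product sums giving the transposed transfer matrix), so your argument is a correct self-contained reconstruction of the cited proof rather than a genuinely different route.
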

The second part of the above lemma was not explicitly mentioned
in \cite{RaiD:09b}, but can be easily seen to follow from the reverse code
construction proposed therein.

The next two lemmas are in relation to the double-unicast
problem~\cite{shenvi2}, where there are two source-terminal pairs $(s_1,t_1)$ and
$(s_2,t_2)$, and each terminal wants to recover the symbol generated at the corresponding
source over a directed acyclic network with unit capacity edges.
In~\cite{shenvi2}, a simple necessary and sufficient condition was
given for such a ``double-unicast " network to support two such simultaneous unicasts. The following
lemma is a sufficient condition for supporting two
simultaneous unicasts and was proved in Case IIB of
\cite[Proof of Theorem 1]{shenvi2}.
\begin{figure*}[!t]
\centering
\subfigure[]{
\includegraphics[width=1.2in]{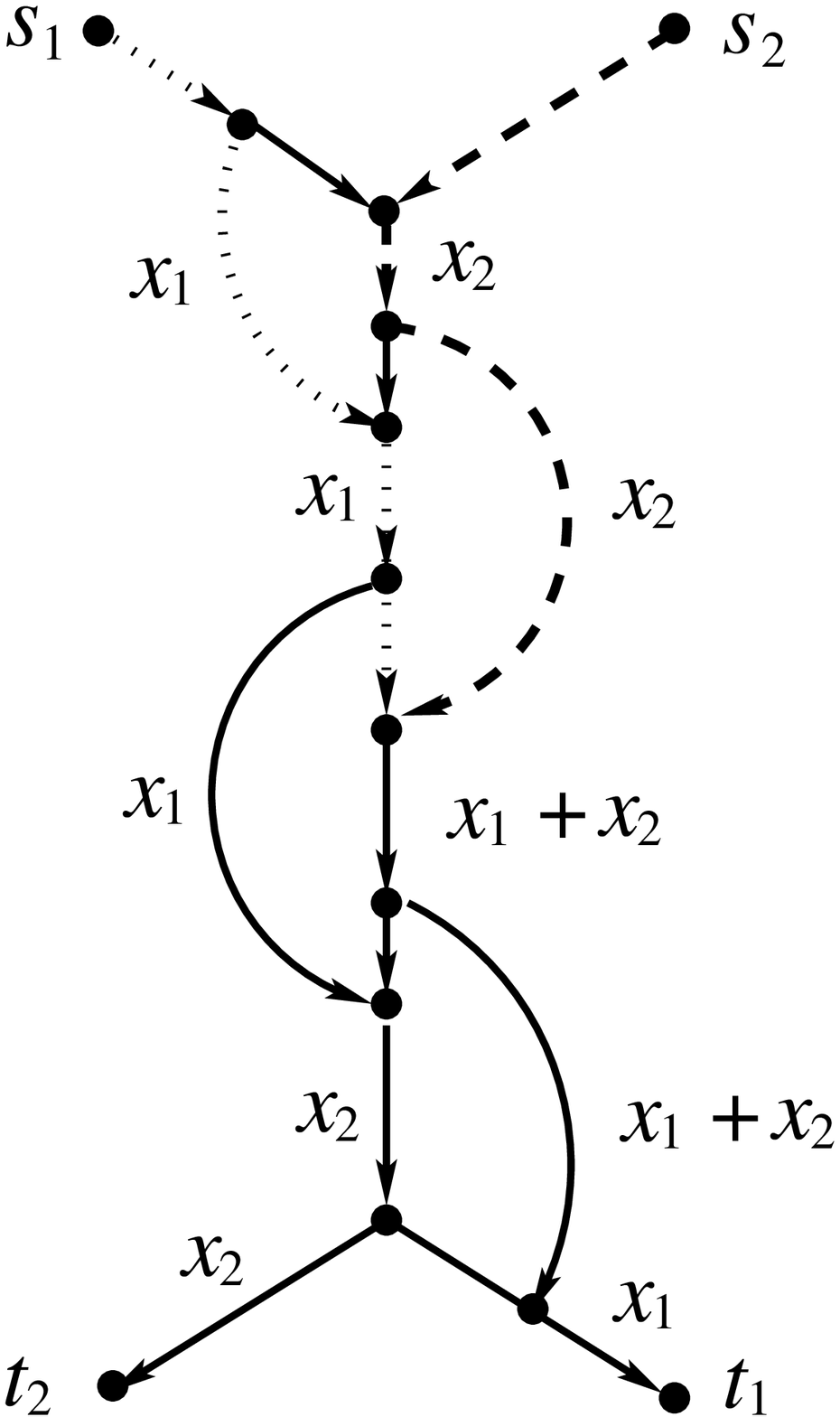}
\label{fig:multigrail1-a}
}
\subfigure[]{
\includegraphics[width=1.2in]{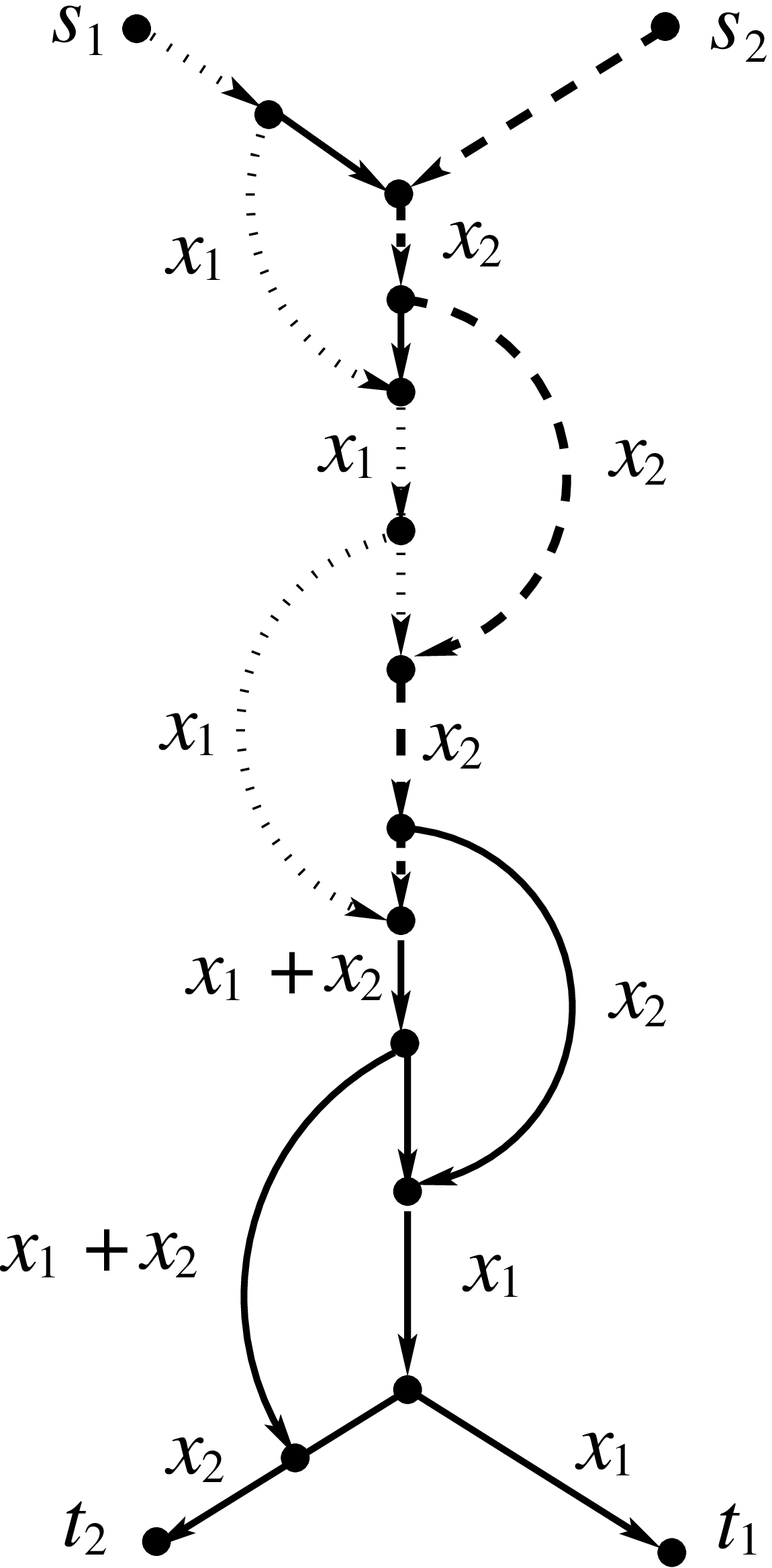}
\label{fig:multigrail1-b}
}
\subfigure[]{
\includegraphics[width=1.2in]{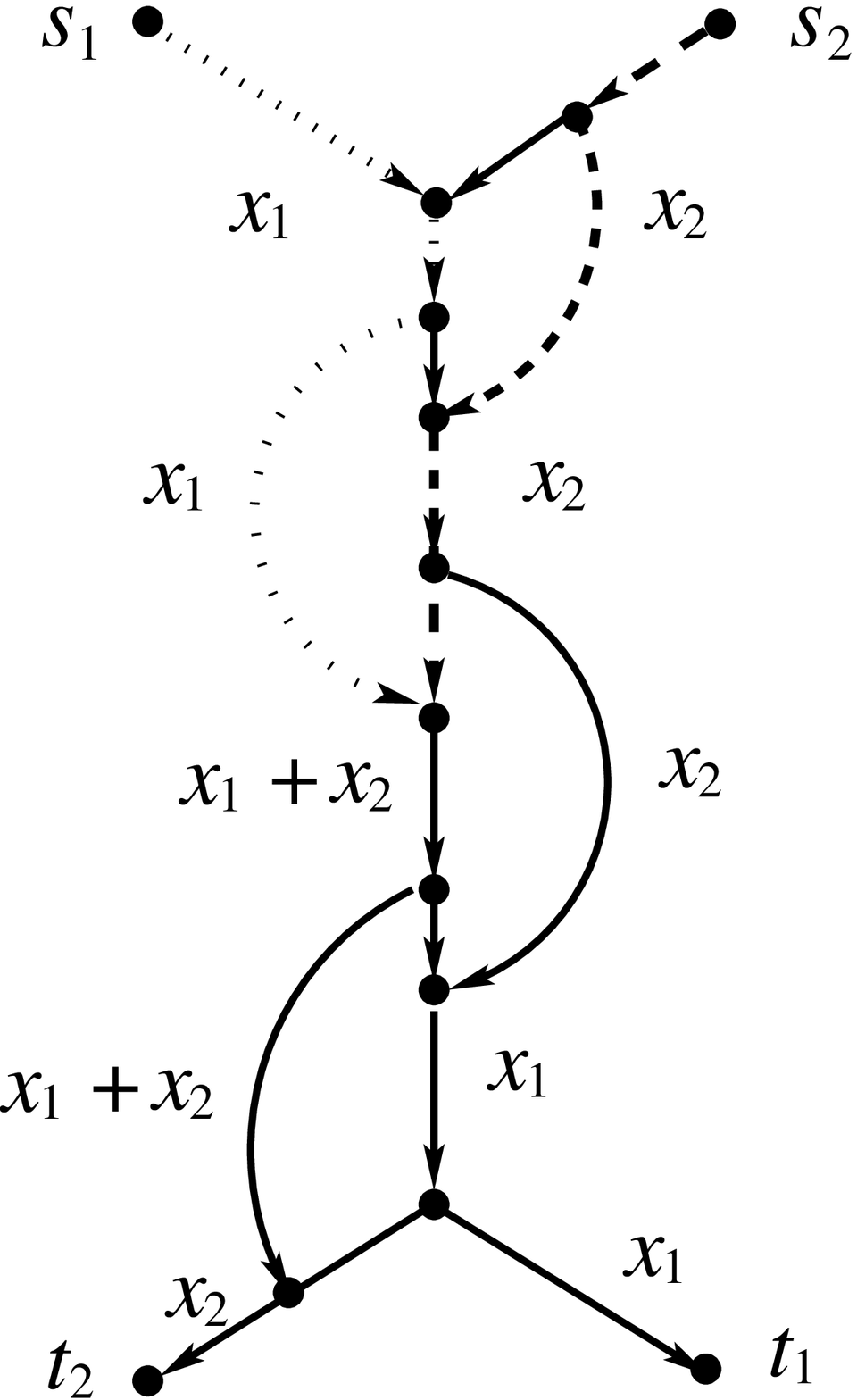}
\label{fig:multigrail1o-a}
}
\subfigure[]{
\includegraphics[width=1.2in]{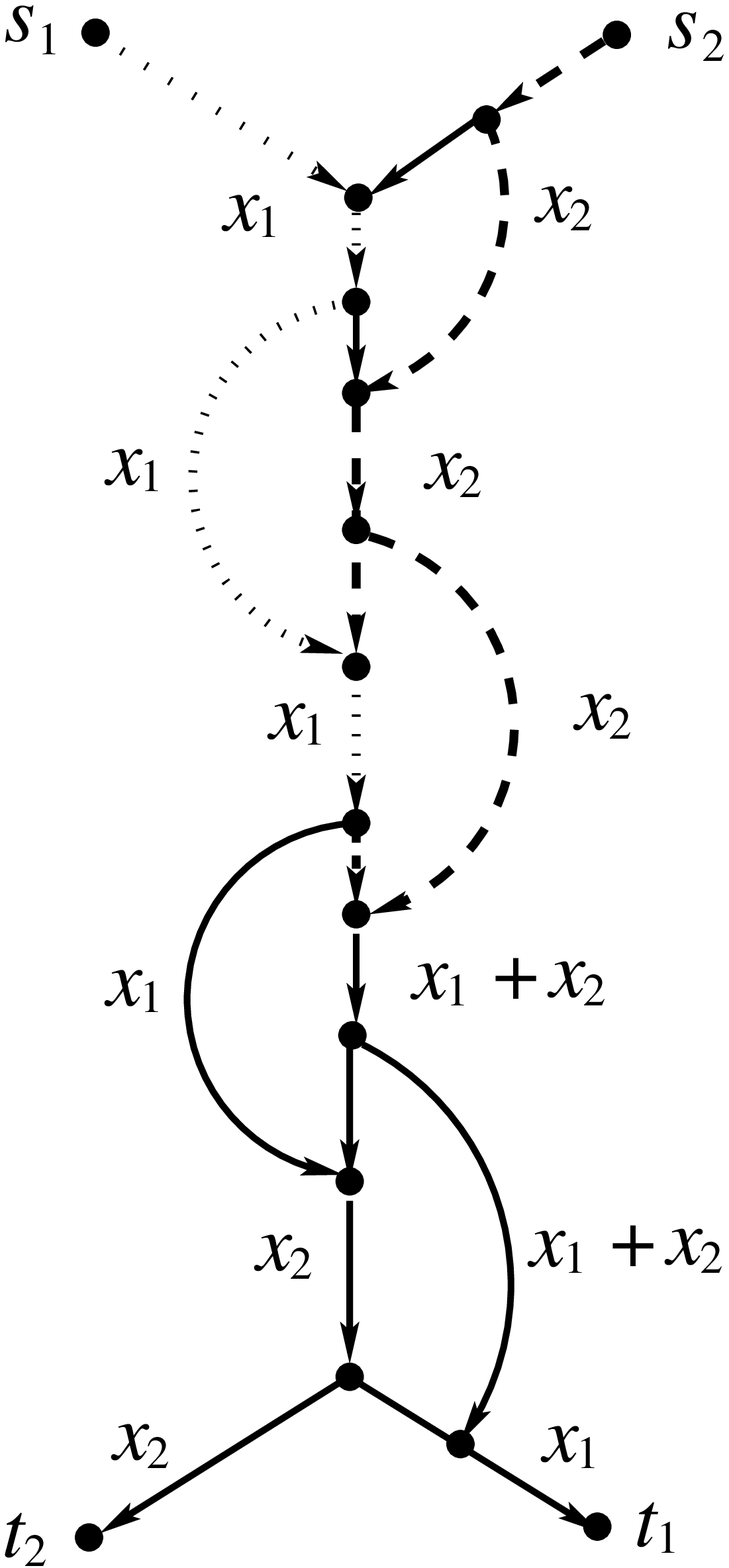}
\label{fig:multigrail1o-b}
}
\caption[]{The coding on grails for Lemma~\ref{lemma:algo0} \cite{{shenvi2}}}
\label{fig:multigrail1}
\end{figure*}
\begin{lemma}\label{lemma:algo0}
\cite{shenvi2}
Suppose in a double-unicast network with connected source terminal pairs
$(s_1,t_1)$ and $(s_2, t_2)$, removing all the edges of any $(s_1,t_1)$
path disconnects $(s_2,t_2)$ and there is no single edge in the network
whose removal disconnects both $(s_1,t_1)$ and $(s_2,t_2)$.
Then there exists a XOR code which allows the communication of 
$x_1$ to $t_1$ and $x_2$ to $t_2$.
\end{lemma}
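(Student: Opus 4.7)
The plan is to construct an explicit XOR code by analyzing the intersection structure of a chosen $(s_1,t_1)$ path $P_1$ and $(s_2,t_2)$ path $P_2$. First I would show that, since the network is a DAG, the edges common to $P_1$ and $P_2$ appear in the same relative order along both paths: if $e_i$ preceded $e_j$ on $P_1$ while $e_j$ preceded $e_i$ on $P_2$, then concatenating $P_1(\head{e_i}:\tail{e_j})$, $e_j$, $P_2(\head{e_j}:\tail{e_i})$, and $e_i$ would produce a directed cycle. With a common order fixed, I would group the shared edges into maximal runs of coincident edges---the \emph{bridges} $B_1, B_2, \ldots, B_m$---separated on $P_1$ by its own detour arcs and on $P_2$ by its own detour arcs. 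The first hypothesis is exactly equivalent to the statement that every $(s_1,t_1)$-path meets every $(s_2,t_2)$-path in some edge, so in particular $m \geq 1$.

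The next step is the base XOR scheme. Send $x_1$ along the initial $P_1$-detour before $B_1$ and $x_2$ along the initial $P_2$-detour. At the start node of $B_1$ the two incoming values are $x_1$ and $x_2$, which are XORed to produce $x_1 \oplus x_2$ on $B_1$. After the end node of $B_1$, both the $P_1$-detour and the $P_2$-detour between $B_1$ and $B_2$ forward $x_1 \oplus x_2$, so at the start of $B_2$ both incoming values are equal and $B_2$ again carries $x_1 \oplus x_2$. By induction every subsequent bridge and detour section carries $x_1 \oplus x_2$, so $t_1$ and $t_2$ each receive $x_1 \oplus x_2$ from their respective paths. Decoding is then reduced to delivering $t_1$ an additional copy of $x_2$ (or equivalently $t_2$ a copy of $x_1$) along some side edge.

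The auxiliary symbol is produced using the second hypothesis. For any edge $e$ on a bridge, removing $e$ does not disconnect both pairs, so at least one commodity has an alternate path bypassing $e$. I would pick a suitable bridge edge and route a ``correction'' symbol along such a bypass, XORing it into the base code either at a terminal or at a node where the bypass meets $P_1 \cup P_2$. Depending on the parity of $m$ and on the combinatorial shape of the bypass relative to the grail (whether it begins on the $P_1$-side or the $P_2$-side, and whether it first re-enters the grail through a later bridge $B_i$ or reaches a terminal directly), four distinct coding schemes emerge; these correspond precisely to the four grail patterns depicted in Fig.~\ref{fig:multigrail1}.

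The main obstacle will be the case analysis itself. In each of the four sub-cases I must verify that the bypass can be chosen so that its intersections with $P_1 \cup P_2$ do not break the base coding; that the XOR combinations at intersection nodes are invertible, so that $t_1$ recovers $x_1$ and $t_2$ recovers $x_2$; and that the global information-flow graph remains acyclic. Handling a bypass that re-enters a later bridge rather than terminating cleanly requires an inductive step on $m$, with parity determining which of the four grail patterns applies. Once all four patterns are dispatched, every nonzero coefficient used in the construction is $\pm 1$, so the resulting code is a valid XOR code over any field.
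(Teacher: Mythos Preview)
Your high-level plan matches the paper's approach: the paper cites \cite{shenvi2} and summarizes the proof as ``the network is essentially a grail with possibly multiple (even or odd number of) handles,'' with explicit XOR coding given for each of the four structural cases in Fig.~\ref{fig:multigrail1}. Your bridges $B_1,\ldots,B_m$ are the body of the grail and your bypasses are the handles, so the structural decomposition and the four-way case split you propose are exactly what the cited argument does.

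One caution about your ``base scheme + correction'' decomposition. As you describe it, after $B_1$ every edge of $P_1\cup P_2$ carries only $x_1\oplus x_2$, so the entire tail of the grail has a single degree of freedom; you then hope a bypass injects the missing symbol. But the bypass you obtain from the second hypothesis is an $(s_i,t_i)$ path avoiding one bridge edge, and by the \emph{first} hypothesis it must still share an edge with every path of the other commodity --- in particular it will typically re-enter the grail through some other bridge or detour already committed to $x_1\oplus x_2$. So the bypass cannot in general carry an independent symbol all the way to a terminal without conflicting with your base assignment. The coding in \cite{shenvi2} does not first collapse to $x_1\oplus x_2$ and then patch; rather, the handles are located first, and the labels on the grail segments alternate among $x_1$, $x_2$, and $x_1\oplus x_2$ according to where handles attach, so that two degrees of freedom are maintained throughout. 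Your case analysis will work, but you should plan the handle placement before assigning symbols to the bridges, not after.
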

The proof in \cite{shenvi2} argued that the network is essentially a ``grail''
with possibly multiple (even or odd number of) ``handles'' as shown
in Fig. \ref{fig:multigrail1}. Explicit coding schemes, as shown in the figure,
were given to achieve the double-unicast.

The next lemma follows by simple modifications in the coding
schemes under case IIB of
\cite[Proof of Theorem 1]{shenvi2}.
\begin{figure*}[!t]
\centering
\subfigure[]{
\includegraphics[width=1.2in]{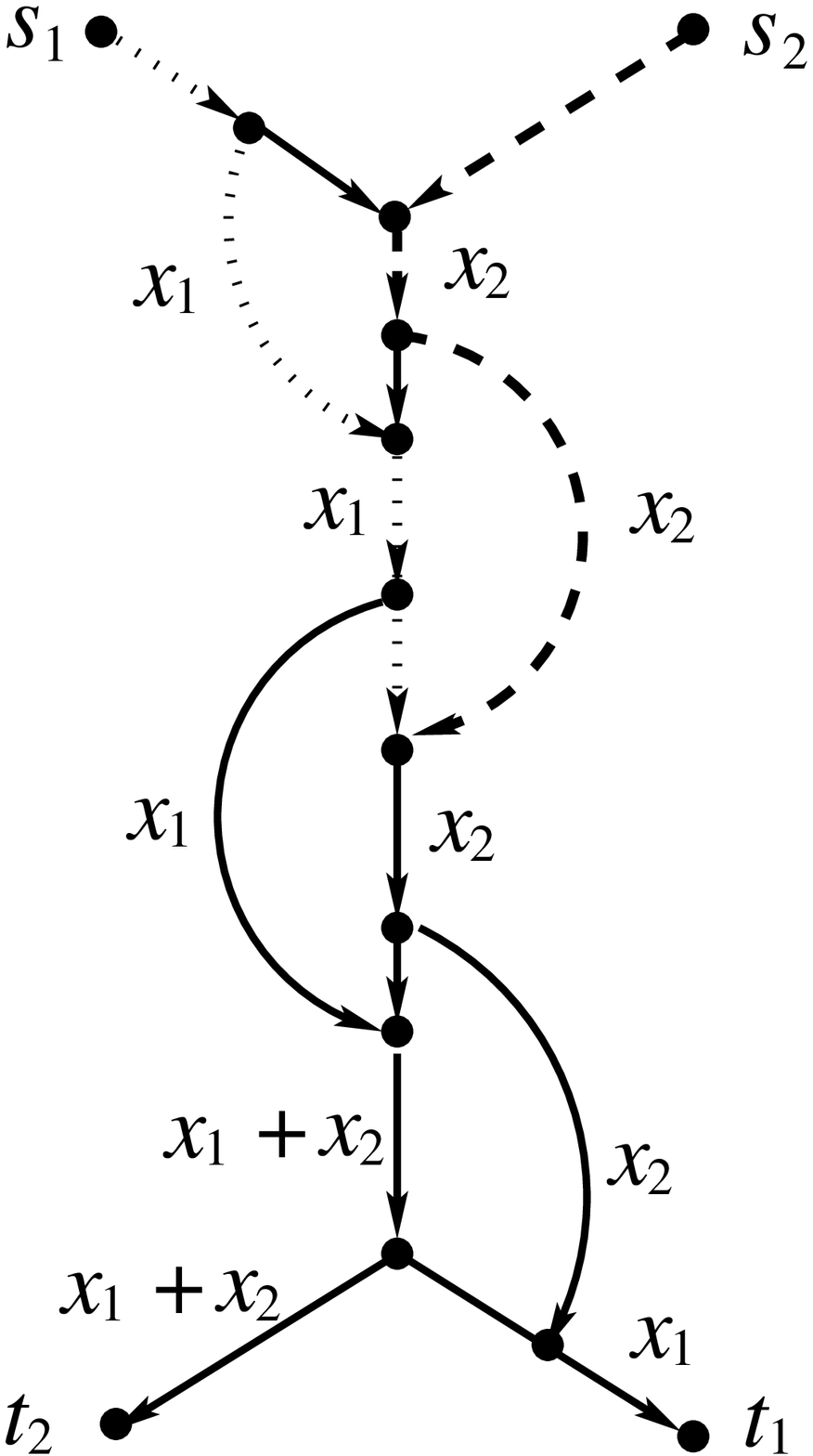}
\label{fig:multigrail2-a}
}
\subfigure[]{
\includegraphics[width=1.2in]{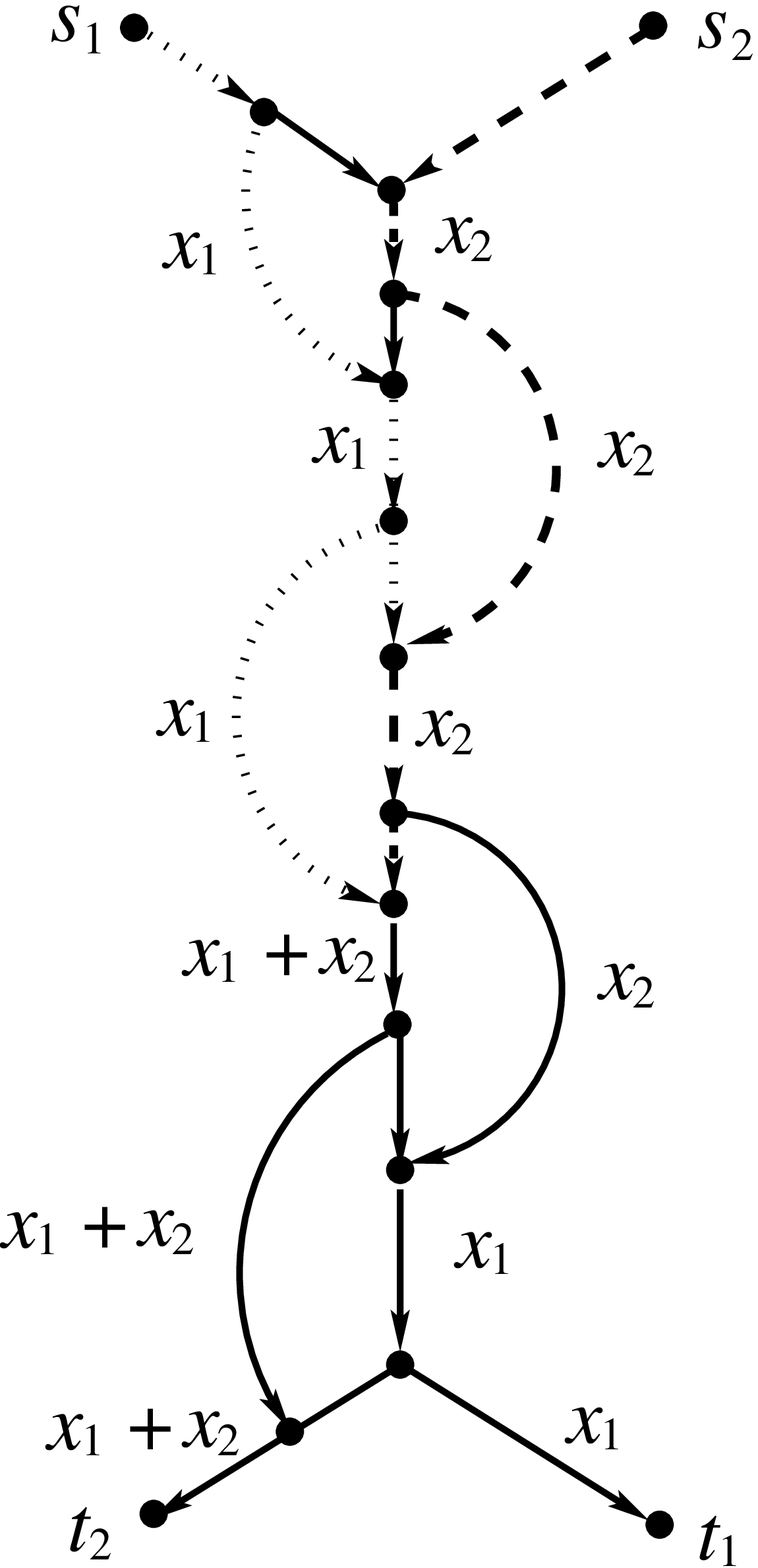}
\label{fig:multigrail2-b}
}
\subfigure[]{
\includegraphics[width=1.2in]{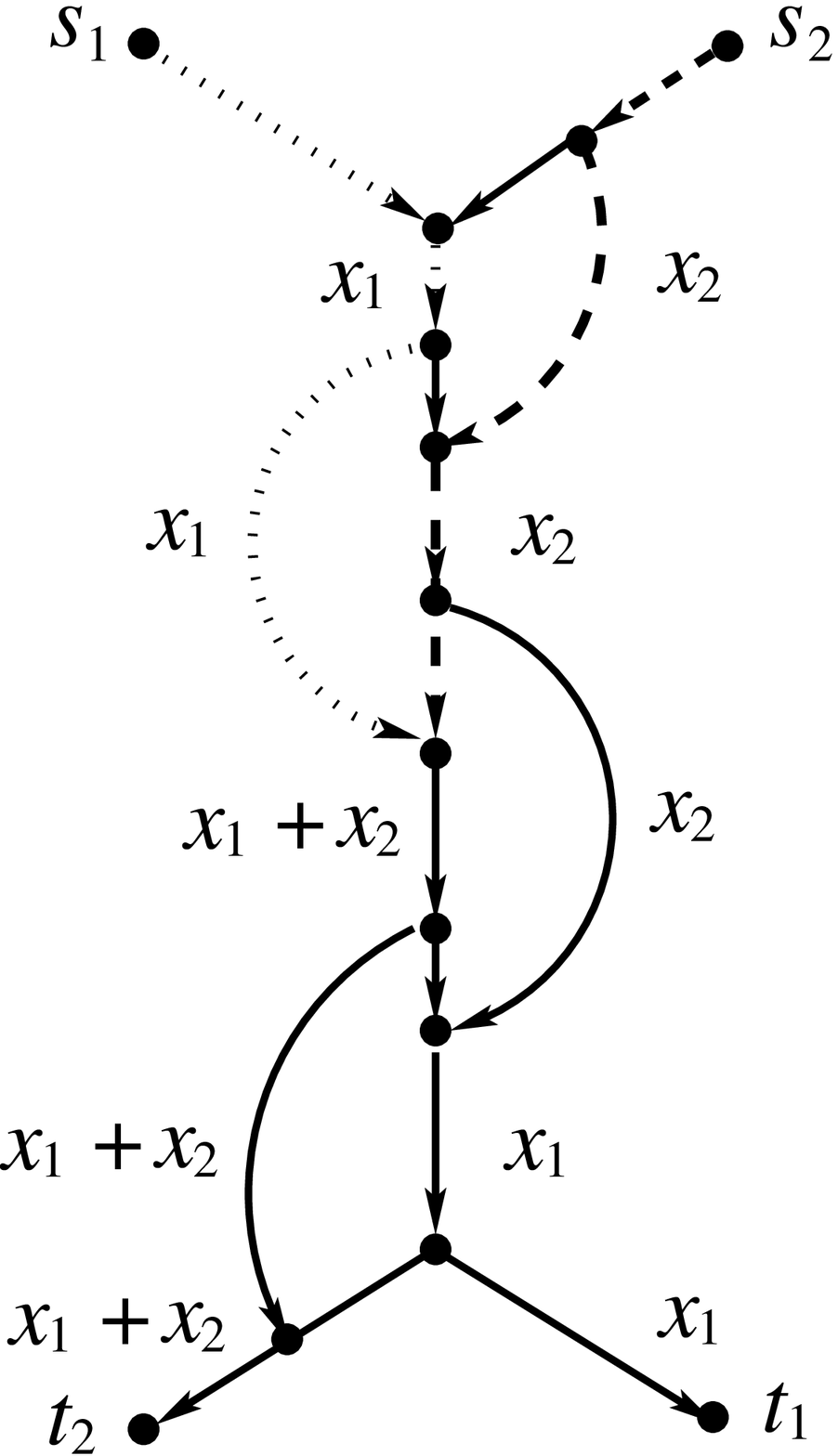}
\label{fig:multigrailo2-a}
}
\subfigure[]{
\includegraphics[width=1.2in]{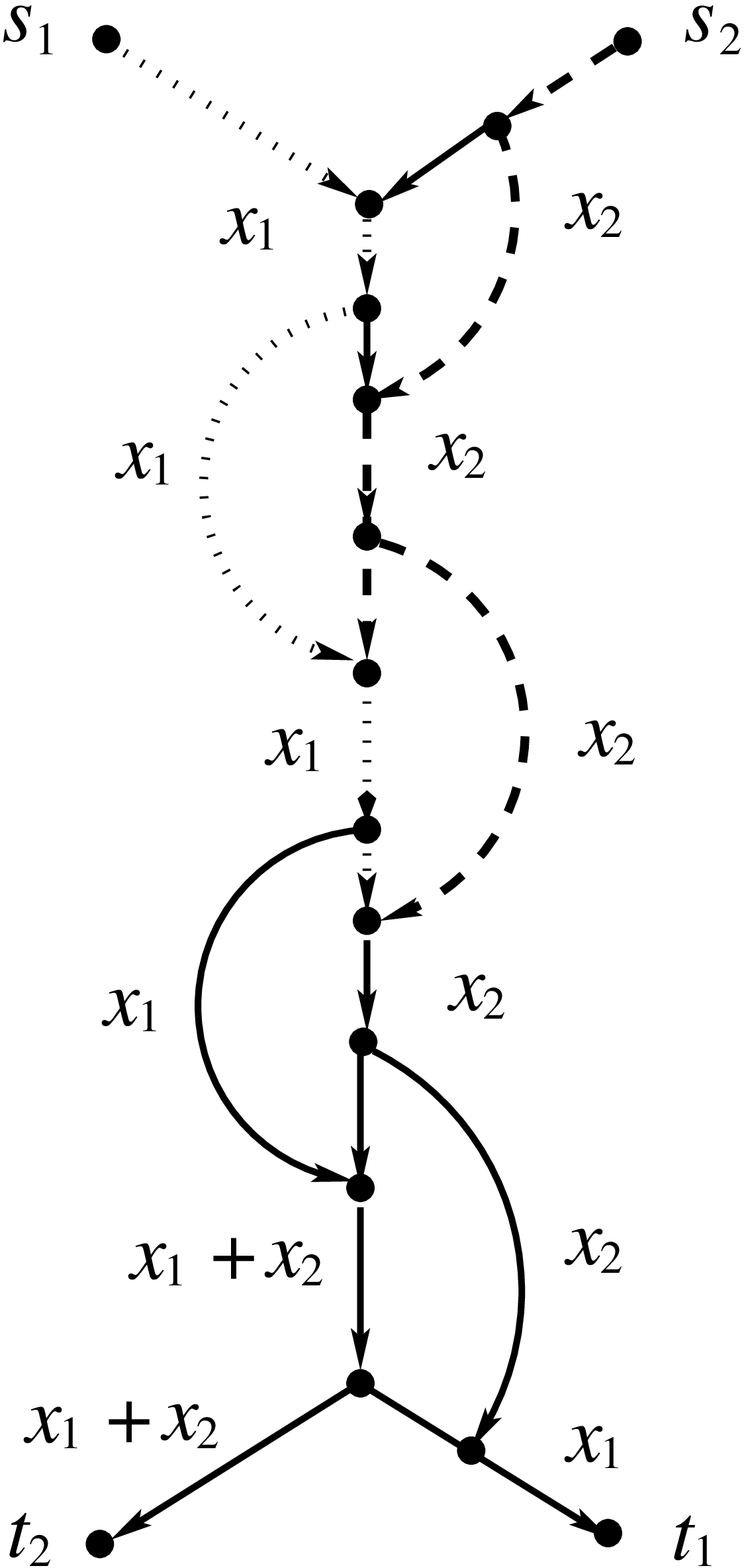}
\label{fig:multigrailo2-b}
}
\caption[]{The coding on grails for Lemma~\ref{lemma:algo}}
\label{fig:multigrail2}
\end{figure*}

\begin{lemma}\label{lemma:algo}
Suppose in a double-unicast network with connected source terminal pairs
$(s_1,t_1)$ and $(s_2, t_2)$, removing all the edges of any $(s_1,t_1)$
path disconnects $(s_2,t_2)$ and there is no single edge in the network
whose removal disconnects both $(s_1,t_1)$ and $(s_2,t_2)$.
Then there exists a XOR code which allows the communication of
$x_1$ to $t_1$ and $x_1+x_2$ to $t_2$.
\end{lemma}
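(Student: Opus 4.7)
The plan is to follow closely the proof of Lemma~\ref{lemma:algo0} from \cite{shenvi2} (specifically case IIB there) and modify the explicit XOR scheme so that the terminal $t_2$ recovers $x_1+x_2$ in place of $x_2$. Since the hypothesis of Lemma~\ref{lemma:algo} coincides exactly with that of Lemma~\ref{lemma:algo0}, the structural analysis of that proof carries over verbatim: by contracting subgraphs irrelevant to the reachability structure, the network reduces to a ``grail'' with some number of handles as pictured in Fig.~\ref{fig:multigrail1}, and it suffices to design a $\pm 1$-coded network code on this canonical structure.

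First, I would import the grail reduction of \cite{shenvi2} without any change. This reduces the problem to one of the four shapes of Fig.~\ref{fig:multigrail1}, parameterized by the parity of the number of handles and by whether the top of the grail has the additional split. Second, for each of the four shapes I would adapt the labelling of Fig.~\ref{fig:multigrail1} by injecting $x_1$ into the $t_2$-bound stream at its first available merge point. The grail already carries a copy of $x_1$ on one of its spines (or, on a handle, allows $x_1$ to be extracted as the difference $(x_1+x_2)-x_2$ of labels on two adjacent edges), so the injection uses only addition and subtraction and the scheme remains a XOR code. Propagating this change downstream on the $t_2$ side modifies every subsequent edge label by $+x_1$ and in particular delivers $x_1+x_2$ to $t_2$. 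The four resulting labellings are exactly the ones displayed in Fig.~\ref{fig:multigrail2}.

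The main thing to verify, which I expect to be the only non-mechanical part of the argument, is that this modification leaves the signal decoded at $t_1$ unchanged. The grail structure guarantees that beyond a certain split node the edges feeding $t_1$ and the edges feeding $t_2$ are internally disjoint; the proposed change is entirely confined to the $t_2$-branch downstream of this split, so the labels along the $t_1$-branch (and hence the value received by $t_1$) are identical to those in Lemma~\ref{lemma:algo0}. This has to be checked separately in each of the four cases, but in each it amounts to a direct inspection of Fig.~\ref{fig:multigrail2}, completing the proof.
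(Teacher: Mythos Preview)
Your proposal is correct and follows essentially the same approach as the paper: the paper's proof is a single sentence pointing to the modified grail codings in Fig.~\ref{fig:multigrail2}, and your plan is precisely to import the grail reduction from \cite{shenvi2} unchanged and then exhibit (and verify by inspection) the modified $\pm 1$ labellings of Fig.~\ref{fig:multigrail2}. Your additional narrative about ``injecting $x_1$ into the $t_2$-bound stream'' is a reasonable heuristic description, but ultimately both you and the paper defer to case-by-case inspection of the four grail diagrams for the actual verification.
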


{\it Proof:} The proof is achieved by changing the coding
on the grail networks as shown in Fig. \ref{fig:multigrail2}.

Now we start proving our results. Because of Lemma~\ref{lemma:5con}, 
whenever we need to prove solvability under some conditions,
we make the following assumption without loss of generality.

\newtheorem{assumption}{Assumption}
\begin{assumption}\label{assumption:simplify}  $\set{N}$ does not contain a node that
satisfies the hypothesis of Lemma~\ref{lemma:5con}.
\end{assumption}

{\bf Proof of Lemma~\ref{lemma:reduction}:}

Consider the new network $\set{N}^*$ formed by adding
an edge $e_i^*$ in parallel with the edge $e_i$ for each edge $e_i \in
\mathscr{A}\cup \mathscr{B}$ (Adding $e_i^*$ in parallel with $e_i$ means
that the head and the tail of $e_i^*$ are the same as those of $e_i$).
Clearly $\kappa(\set{N}^*)<k$, and so by the hypothesis of the lemma,
$\set{N}^*$ is linearly solvable over F.
But in any linear code for $\set{N}^*$, for every edge in $\mathscr{B}$, the
edge and its
added parallel edge carry essentially the same data since there is
a path from only one source to the tail of these edges. So we can
remove the edges we added in parallel to the edges of $\mathscr{B}$ and
the new resulting network $\set{N}^{**}$ will still be linearly solvable
over $F$.
Then by Lemma~\ref{lemma:reverse} its reverse network is also linearly
solvable over $F$. But by the same argument, this reverse network remains
linearly solvable over $F$ even after removing the
remaining extra edges in parallel
to the edges in $\mathscr{A}$. So, again by Lemma~\ref{lemma:reverse},
the original network $\mN$ itself is linearly solvable over $F$.
The above arguments also hold word by word if ``linearly solvable'' is
replaced by ``XOR solvable''. This completes the proof.

{\bf Proof of Lemma~\ref{lemma:con1}:}

First communicate $x_1+x_2+x_3$ to $t_1,t_2$ by XOR coding, which is possible by Lemma~\ref{lemma:R}.
Let $\set{N}_1$ be the sub-network used for this code.
Now let $P_1,P_2,P_3$ be some $(s_1,t_3),(s_2,t_3),(s_3,t_3)$ paths respectively and let
$\set{N}_2$ be the sub-network consisting of them. We can simultaneously communicate $x_1+x_2+x_3$ to $t_3$
by XOR coding over $\set{N}_2$ for the following reason. Any edge $e \in \set{N}_1 \cap \set{N}_2$ has paths to at least
two terminals: $t_3$ and at least one of $t_1$, $t_2$. By the hypothesis
of the lemma, there is a path from exactly one source, say $s_1$ (w.l.o.g.),
to $\tail{e}$. Thus $e$ essentially carries only $x_1$ in the coding scheme over $\set{N}_1$, as well as
in the coding scheme over $\set{N}_2$.
Hence there is no conflict between the coding schemes over $\set{N}_1$ and $\set{N}_2$ and both the codes can be simultaneously implemented.
Thus the network is linearly solvable over any field in this case.

{\bf Proof of Lemma~\ref{lemma:general}:}

W.l.o.g. we assume that the network satisfies Assumption~\ref{assumption:simplify}.
\begin{observation}\label{observation:1}
(i) By Assumption~\ref{assumption:simplify},
$s_3 \nrightarrow \Gamma_{\head{e}}\cup \Gamma^{\tail{e}}_{t_1,t_2} $.

(ii) Similarly, $\Gamma^{\tail{e}}\cup \Gamma^{s_1,s_2}_{\head{e}} \nrightarrow t_3$.
\end{observation}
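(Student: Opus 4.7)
The plan is to derive both parts by contradiction, in each case exhibiting a node that satisfies the hypothesis of Lemma~\ref{lemma:5con}, which would contradict Assumption~\ref{assumption:simplify}. All the ammunition is already in hypothesis (a) of Lemma~\ref{lemma:general}: the fact that $\{s_1,s_2\}\rightarrow e\rightarrow\{t_1,t_2\}$ means $s_1,s_2\rightarrow \tail{e}\rightarrow \head{e}\rightarrow t_1,t_2$, so $\head{e}$ already has two source-ancestors and two terminal-descendants, and $\tail{e}$ has the same property. Any additional reachability involving $s_3$ or $t_3$ will push one of these counts up enough to trigger Lemma~\ref{lemma:5con}.

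For part (i), suppose for contradiction that $s_3\rightarrow v$ for some $v\in \Gamma_{\head{e}}\cup \Gamma^{\tail{e}}_{t_1,t_2}$. If $v\in \Gamma_{\head{e}}$, then $s_3\rightarrow v\rightarrow \head{e}$, so $\head{e}$ is reachable from all three sources and reaches $t_1,t_2$; this is the ``all sources / two terminals'' case of Lemma~\ref{lemma:5con}, violating Assumption~\ref{assumption:simplify}. If instead $v\in \Gamma^{\tail{e}}_{t_1,t_2}$, then $\tail{e}\rightarrow v$ yields $s_1,s_2\rightarrow v$, which together with $s_3\rightarrow v$ and $v\rightarrow t_1,t_2$ means $v$ itself has all three sources as ancestors and two terminals as descendants, again violating Assumption~\ref{assumption:simplify}.

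Part (ii) is the dual statement and uses the ``at least two sources / all terminals'' clause of Lemma~\ref{lemma:5con}. Suppose $v\rightarrow t_3$ for some $v\in \Gamma^{\tail{e}}\cup \Gamma^{s_1,s_2}_{\head{e}}$. If $v\in\Gamma^{\tail{e}}$, then $\tail{e}\rightarrow v\rightarrow t_3$, and combined with $\tail{e}\rightarrow \head{e}\rightarrow t_1,t_2$, the node $\tail{e}$ has $s_1,s_2$ as ancestors and reaches all three terminals, contradicting Assumption~\ref{assumption:simplify}. If $v\in \Gamma^{s_1,s_2}_{\head{e}}$, then $s_1,s_2\rightarrow v\rightarrow \head{e}\rightarrow t_1,t_2$ together with $v\rightarrow t_3$ makes $v$ itself such a node, again contradicting Assumption~\ref{assumption:simplify}.

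There is no real obstacle here; the only point requiring care is keeping straight that Lemma~\ref{lemma:5con} offers two alternative triggers (``all sources + two terminals'' versus ``two sources + all terminals''), and that parts (i) and (ii) invoke these two flavors respectively. This symmetry between the two parts is exactly why part (ii) can be asserted as ``similarly''.
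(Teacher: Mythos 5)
Your proof is correct and follows exactly the argument the paper intends: the observation is asserted to follow from Assumption~\ref{assumption:simplify}, and your case analysis simply makes explicit which node ($\head{e}$, $\tail{e}$, or the offending $v$ itself) would trigger the ``all sources / two terminals'' or ``two sources / all terminals'' clause of Lemma~\ref{lemma:5con}. Nothing further is needed.
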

Observation~\ref{observation:1} implies the following.
\begin{observation}\label{observation:1a}
(i) Observation~\ref{observation:1}(ii) implies that no $(s_1,t_3)$ path contains any node from
$ \Gamma^{s_2}_{\tail{e}}\cup \tail{e}\cup \Gamma^{\tail{e}}$, 

(ii) Observation~\ref{observation:1}(ii) implies that no $(s_2,t_3)$ path contains any node from
$ \Gamma^{s_1}_{\tail{e}}\cup \tail{e}\cup \Gamma^{\tail{e}}$, 

(iii) From Observation~\ref{observation:1}(i) and (ii), we have
$s_3 \nrightarrow \Gamma^{s_1}_{\tail{e}}\cup \Gamma^{s_2}_{\tail{e}}$
and $\{\tail{e}\}\cup \Gamma^{\tail{e}} \nrightarrow t_3$. Together,
they imply that no
$(s_3,t_3)$ path contains any node from
$\Gamma^{s_1}_{\tail{e}}\cup \Gamma^{s_2}_{\tail{e}}\cup\{\tail{e}\}\cup \Gamma^{\tail{e}}$.

(iv) Observation~\ref{observation:1}(i) implies that no $(s_3,t_1)$ or
$(s_3,t_2)$ path contains any node from
$\Gamma^{s_1}_{\tail{e}}\cup \Gamma^{s_2}_{\tail{e}}\cup\{\tail{e} 
\}\cup\Gamma^{\tail{e}}_{t_1,t_2}$.
\end{observation}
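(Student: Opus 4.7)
The plan is to prove Observation~\ref{observation:1a} by a straightforward case analysis: for each claim (i)--(iv), assume for contradiction that some forbidden path contains a node $v$ from the asserted set, then combine the path information with the hypothesis $\{s_1,s_2\}\rightarrow e\rightarrow\{t_1,t_2\}$ (from Lemma~\ref{lemma:general}) to place $v$ into one of the sets excluded by Observation~\ref{observation:1}. Throughout I will use the definitions $\Gamma^{A}_{B}=\{v:A\rightarrow v,\ v\rightarrow B\}$, $\Gamma^{A}=\{v:A\rightarrow v\}$, $\Gamma_{B}=\{v:v\rightarrow B\}$, and the fact that $\tail{e}\rightarrow\head{e}$, $s_1\rightarrow\tail{e}$, $s_2\rightarrow\tail{e}$, $\head{e}\rightarrow t_1$, $\head{e}\rightarrow t_2$.

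For part (i), suppose a $(s_1,t_3)$ path $P$ contains $v\in\Gamma^{s_2}_{\tail{e}}\cup\{\tail{e}\}\cup\Gamma^{\tail{e}}$. I would split into three subcases. If $v\in\Gamma^{\tail{e}}$, then $v\rightarrow t_3$ via the tail of $P$, contradicting Observation~\ref{observation:1}(ii). If $v=\tail{e}$, then $s_1\rightarrow\tail{e}$, $s_2\rightarrow\tail{e}$ and $\tail{e}\rightarrow\head{e}$ trivially, so $\tail{e}\in\Gamma^{s_1,s_2}_{\head{e}}$; but $\tail{e}\rightarrow t_3$ via $P$ contradicts Observation~\ref{observation:1}(ii). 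If $v\in\Gamma^{s_2}_{\tail{e}}$, then $s_1\rightarrow v$ along $P$, $s_2\rightarrow v$ by definition, and $v\rightarrow\tail{e}\rightarrow\head{e}$, putting $v\in\Gamma^{s_1,s_2}_{\head{e}}$; yet $v\rightarrow t_3$ along $P$ again contradicts Observation~\ref{observation:1}(ii). Part (ii) is proved by the symmetric argument, interchanging the roles of $s_1$ and $s_2$.

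For part (iii), given the two ``ingredient'' statements $s_3\nrightarrow\Gamma^{s_1}_{\tail{e}}\cup\Gamma^{s_2}_{\tail{e}}$ and $\{\tail{e}\}\cup\Gamma^{\tail{e}}\nrightarrow t_3$, I would first justify them. The first follows from Observation~\ref{observation:1}(i) because every $v\in\Gamma^{s_1}_{\tail{e}}\cup\Gamma^{s_2}_{\tail{e}}$ satisfies $v\rightarrow\tail{e}\rightarrow\head{e}$, hence $v\in\Gamma_{\head{e}}$. The second follows from Observation~\ref{observation:1}(ii) since $\tail{e}\in\Gamma^{s_1,s_2}_{\head{e}}$ and $\Gamma^{\tail{e}}$ is named explicitly there. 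Then if some $(s_3,t_3)$ path $P$ meets a node $v$ in the forbidden union, either $v$ lies in $\Gamma^{s_1}_{\tail{e}}\cup\Gamma^{s_2}_{\tail{e}}$, contradicting the first exclusion via $s_3\rightarrow v$ through the head of $P$, or $v\in\{\tail{e}\}\cup\Gamma^{\tail{e}}$, contradicting the second via $v\rightarrow t_3$ through the tail of $P$.

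For part (iv), let $P$ be a $(s_3,t_k)$ path ($k\in\{1,2\}$) and suppose $v\in P$ lies in the stated set. If $v\in\Gamma^{s_1}_{\tail{e}}\cup\Gamma^{s_2}_{\tail{e}}\cup\{\tail{e}\}$, then $v\rightarrow\tail{e}\rightarrow\head{e}$ (the case $v=\tail{e}$ is immediate), so $v\in\Gamma_{\head{e}}$; but $s_3\rightarrow v$ along $P$ contradicts Observation~\ref{observation:1}(i). If $v\in\Gamma^{\tail{e}}_{t_1,t_2}$, then Observation~\ref{observation:1}(i) directly forbids $s_3\rightarrow v$, again contradicting $P$. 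The main obstacle is purely bookkeeping, namely verifying that in every subcase the membership in $\Gamma^{s_1,s_2}_{\head{e}}$, $\Gamma_{\head{e}}$, or $\Gamma^{\tail{e}}$ is actually forced; once the definitions are unrolled, each case collapses to a one-line contradiction with the appropriate clause of Observation~\ref{observation:1}.
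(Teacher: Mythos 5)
Your proposal is correct and matches the paper's (implicit) reasoning: the paper states Observation~3 as an immediate consequence of Observation~2 together with the hypothesis $\{s_1,s_2\}\rightarrow e\rightarrow\{t_1,t_2\}$, and your case analysis simply makes explicit the membership checks in $\Gamma^{s_1,s_2}_{\head{e}}$, $\Gamma_{\head{e}}$, and $\Gamma^{\tail{e}}$ that the authors leave to the reader. The only cosmetic caution is to avoid the phrases ``tail of $P$'' and ``head of $P$'' for path segments, since the paper reserves $\tail{\cdot}$ and $\head{\cdot}$ for edges; otherwise nothing needs to change.
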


Let us denote the subnetwork obtained by taking all $(s_1,t_3)$,
$(s_2,t_3)$, $(s_3,t_3)$, $(s_3,t_1)$ and $(s_3,t_2)$ paths by $\onet$.
This subnetwork contains all edges $e'$ such that either
$e' \rightarrow t_3$ or $s_3 \rightarrow e'$.

\begin{observation}\label{observation:1b}
Observation~\ref{observation:1a} above implies that
irrespective of the coding used on $\onet$, we can still
communicate $x_1+x_2$ over $e$ by passing $x_1$ and $x_2$ through
any chosen $(s_1,\tail{e})$ and $(s_2, \tail{e})$ paths respectively.
That is, $x_1+x_2$ can be passed on edge $e$ without putting
any constraint on the coding on the subnetwork $\onet$.
\end{observation}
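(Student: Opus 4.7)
The plan is to decompose the assertion into (a) that the edge $e$ itself lies outside $\onet$ and hence can freely carry $x_1+x_2$, and (b) that for any $(s_1,\tail{e})$ path $P_1$ and any $(s_2,\tail{e})$ path $P_2$, every edge of $P_1$ (respectively $P_2$) that happens to lie in $\onet$ is structurally forced to carry only a scalar multiple of $x_1$ (respectively $x_2$) under \emph{any} linear coding on $\onet$, so they are automatically compatible with propagating $x_1$ (respectively $x_2$) to $\tail{e}$.

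For (a), I would invoke the preceding observations directly. Since $\tail{e}\rightarrow\head{e}$, we have $\tail{e}\in\Gamma_{\head{e}}$, so Observation~\ref{observation:1}(i) gives $s_3\nrightarrow\tail{e}$, i.e.\ $s_3\not\rightarrow e$; symmetrically $\head{e}\in\Gamma^{\tail{e}}$, and Observation~\ref{observation:1}(ii) gives $\head{e}\nrightarrow t_3$, i.e.\ $e\not\rightarrow t_3$. Neither defining condition of $\onet$ holds at $e$, so $e\notin\onet$.

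For (b), fix any edge $e'\in P_1$. Its tail lies in $\{s_1\}\cup\Gamma^{s_1}_{\tail{e}}$; since sources have no incoming edges (Corollary~\ref{corollary:srctrm}) and $s_3\nrightarrow\Gamma^{s_1}_{\tail{e}}$ by Observation~\ref{observation:1a}(iii), we obtain $s_3\nrightarrow\tail{e'}$, equivalently $s_3\not\rightarrow e'$. Hence $e'$ can belong to $\onet$ only via $\head{e'}\rightarrow t_3$. In that case, concatenating the prefix of $P_1$ up to $\head{e'}$ with any $(\head{e'},t_3)$ path produces an $(s_1,t_3)$ path through $\head{e'}$; Observation~\ref{observation:1a}(i) says this path avoids $\Gamma^{s_2}_{\tail{e}}$, and since $\head{e'}\rightarrow\tail{e}$ this forces $s_2\nrightarrow\head{e'}$ and therefore $s_2\nrightarrow\tail{e'}$. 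Thus $\tail{e'}$ is reached by $s_1$ alone among the three sources, so any linear coding on $\onet$ is structurally forced to place a scalar multiple of $x_1$ on $e'$\,---\,exactly what a coding on $P_1$ conveying $x_1$ would want. A symmetric argument with the roles of $s_1$ and $s_2$ swapped handles $P_2$ and $x_2$.

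Putting the pieces together, one can set up a network code in which $\onet$ runs any coding we wish, $P_1$ and $P_2$ transport $x_1$ and $x_2$ (with any induced scalar scalings absorbed at $\tail{e}$) to $\tail{e}$, and finally $\tail{e}$ emits $x_1+x_2$ on $e\notin\onet$. The main obstacle is the structural bookkeeping in step (b): showing, purely via the reachability facts of Observations~\ref{observation:1} and~\ref{observation:1a}, that every edge shared between $P_1$ (resp.\ $P_2$) and $\onet$ is already pinned by the graph topology to carry only $s_1$-data (resp.\ $s_2$-data). Once this single-source pinning is in hand, the compatibility of the two codings on shared edges is automatic and the ``irrespective of the coding on $\onet$'' part of the claim follows immediately.
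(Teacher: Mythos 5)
Your proposal is correct and takes essentially the same route the paper intends: the paper justifies this observation with a single appeal to Observation~\ref{observation:1a}, and the content you supply --- that $e\notin\onet$, and that every edge shared between a chosen $(s_i,\tail{e})$ path and $\onet$ has its tail reachable from $s_i$ alone among the sources, so the two codings agree (up to a scalar) on it --- is exactly the intended implication. The only minor looseness, present equally in the paper's phrase ``irrespective of the coding,'' is that the claim must be read against the XOR codes actually placed on $\onet$ in the subsequent cases, where such single-source edges carry $x_i$ with coefficient $1$; your remark about absorbing the scalar at $\tail{e}$ covers every nonzero coefficient.
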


Let $\mathscr{P}(s_3,t_1) = \{P_1,P_2,\ldots\}$ be the set of
all $(s_3,t_1)$ paths.  For any $P_i \in \mathscr{P}(s_3,t_1)$,
let $z_i$ denote the first descendant of $\head{e}$ on this path.
The existence of $z_i$ is ensured by the fact that $t_1$ is a descendant
of $\head{e}$, and is on $P_i$.
Similarly, let $\mathscr{Q}(s_3,t_2) = \{Q_1,Q_2,\ldots\}$ be the set of
all $(s_3,t_2)$ paths, and for any $Q_j \in \mathscr{Q}(s_3,t_2)$,
let $y_j$ denote the first descendant of $\head{e}$ on this path.

\begin{observation}\label{observation:2}
By Assumption~\ref{assumption:simplify}, $\forall i,j$,
$z_i\nrightarrow \{t_2,t_3\}$ and
$y_j\nrightarrow \{t_1,t_3\}$.
\end{observation}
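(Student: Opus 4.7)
The plan is to derive a contradiction with Assumption~\ref{assumption:simplify} in each case, using the fact that by construction $z_i$ and $y_j$ are descendants of $\head{e}$ sitting on source-to-terminal paths from $s_3$.

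First I would record the incoming connectivity of $z_i$. Since $\{s_1,s_2\}\rightarrow e$, every descendant of $\head{e}$ is reachable from both $s_1$ and $s_2$; in particular $s_1\rightarrow z_i$ and $s_2\rightarrow z_i$. Moreover $z_i$ lies on a path $P_i\in\mathscr{P}(s_3,t_1)$, so $s_3\rightarrow z_i$ and $z_i\rightarrow t_1$. Thus all three sources have paths to $z_i$, and $z_i$ already reaches one terminal, namely $t_1$. If in addition $z_i\rightarrow t_2$ or $z_i\rightarrow t_3$, then $z_i$ would be a node with paths from all three sources and paths to at least two terminals, which is exactly the hypothesis of Lemma~\ref{lemma:5con} and therefore contradicts Assumption~\ref{assumption:simplify}. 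Hence $z_i\nrightarrow\{t_2,t_3\}$.

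The argument for $y_j$ is symmetric. Because $y_j$ is a descendant of $\head{e}$ with $\{s_1,s_2\}\rightarrow e$, we again have $s_1\rightarrow y_j$ and $s_2\rightarrow y_j$; and since $y_j$ lies on some $Q_j\in\mathscr{Q}(s_3,t_2)$, also $s_3\rightarrow y_j$ and $y_j\rightarrow t_2$. A path from $y_j$ to either $t_1$ or $t_3$ would make $y_j$ satisfy the hypothesis of Lemma~\ref{lemma:5con}, again contradicting Assumption~\ref{assumption:simplify}. So $y_j\nrightarrow\{t_1,t_3\}$.

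There is no genuine obstacle here: the whole content of the observation is just to apply Lemma~\ref{lemma:5con} (via Assumption~\ref{assumption:simplify}) to the particular nodes $z_i$ and $y_j$, whose incoming and outgoing connectivity were arranged precisely so that any extra terminal-reachability would trigger the lemma. The only subtlety is to notice that $s_1,s_2\rightarrow \head{e}$ together with $z_i,y_j\in\{\head{e}\}\cup\Gamma^{\head{e}}$ immediately gives the required full-source reachability.
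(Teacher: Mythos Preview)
Your argument is correct and is exactly the intended one: the paper itself gives no proof beyond the words ``By Assumption~\ref{assumption:simplify}'', and your unpacking---that $z_i$ (resp.\ $y_j$) is a descendant of $\head{e}$ and hence reachable from $s_1,s_2$, while lying on a path from $s_3$ to $t_1$ (resp.\ $t_2$), so any additional terminal-reachability would violate Lemma~\ref{lemma:5con}---is precisely what that citation is meant to convey.
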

We consider the following two cases:

{\it Case 1: For any $P_i \in \mathscr{P}(s_3,t_1)$, removing all the
edges on $P_i(s_3:z_i)$ disconnects $(s_1,t_3)$ and/or $(s_2,t_3)$.}

Removing all the edges
of any single $(s_3,t_1)$ path can not disconnect both $(s_1,t_3)$ and $(s_2,t_3)$,
since otherwise this $(s_3,t_1)$ path will contain a node
$v$ such that $\{s_1,s_2,s_3\} \rightarrow v \rightarrow \{t_1,t_3\}$,
thus contradicting
Assumption~\ref{assumption:simplify}. So we consider the following
three cases under Case 1: 
Case 1.1: the removal of any path $P_i$ disconnects
only $(s_2,t_3)$, Case 1.2: the removal of any path $P_i$ disconnects
only $(s_1,t_3)$, and Case 1.3: the removal of some of the paths disconnects
only $(s_1,t_3)$ and the removal of any of the others disconnects only
$(s_2,t_3)$.

{\it Case 1.1: For any $P_i \in \mathscr{P}(s_3,t_1)$, removing all the
edges on $P_i(s_3:z_i)$ disconnects $(s_2,t_3)$ but not $(s_1,t_3)$.}

\begin{observation}\label{observation:3} By Assumption~\ref{assumption:simplify}, we have for this case,

(i) Any $(s_1,t_3)$ path
is node-disjoint from any $P_i\in \mathscr{P}(s_3,t_1)$, since 
otherwise $\Gamma^{s_1,s_2,s_3}_{t_1,t_3} \neq \emptyset$.

(ii) Any $Q_j \in \mathscr{Q}(s_3,t_2)$ shares only those nodes with
$P_i\in \mathscr{P}(s_3,t_1)$ which are not descendants of $s_2$,
since otherwise $\Gamma^{s_2,s_3}_{t_1,t_2,t_3} \neq \emptyset$.

(iii) Any $Q_j \in \mathscr{Q}(s_3,t_2)$ is node-disjoint from any
$(s_2,t_3)$ path, since
otherwise $\Gamma^{s_2,s_3}_{t_1,t_2,t_3} \neq \emptyset$.
\end{observation}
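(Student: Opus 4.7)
The plan is to prove each of the three sub-statements by contradiction against Assumption~\ref{assumption:simplify}, using Observation~\ref{observation:2} together with the Case~1.1 hypothesis that every $(s_2,t_3)$ path must traverse some edge of $P_i(s_3:z_i)$. In each part the ``since otherwise'' clause names a particular non-empty $\Gamma$-set whose existence would, by Lemma~\ref{lemma:5con}, contradict Assumption~\ref{assumption:simplify}; the work is to locate a concrete witness node, which in each subcase turns out to be either the shared node $v$ itself or the head $u'$ of a carefully chosen edge.

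For part (i), suppose an $(s_1,t_3)$ path $R$ and some $P_i\in\mathscr{P}(s_3,t_1)$ share a node $v$. First I rule out the possibility that $v$ lies on $P_i$ at or beyond $z_i$: in that case $z_i\rightarrow v$ via $P_i$ and $v\rightarrow t_3$ via $R$ would give $z_i\rightarrow t_3$, contradicting Observation~\ref{observation:2}. So $v$ lies strictly before $z_i$ on $P_i$, which already gives $s_1,s_3\rightarrow v\rightarrow t_1,t_3$. To attach $s_2$, pick any $(s_2,t_3)$ path $W$; by the Case~1.1 hypothesis it meets $P_i(s_3:z_i)$, and I take $e^{\ast}=(u,u')$ to be the \emph{last} edge of $W$ on $P_i(s_3:z_i)$. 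Then $u'\rightarrow t_3$ along the tail of $W$, and $u'\neq z_i$ (else $z_i\rightarrow t_3$ once more). If $u'$ is at or before $v$ on $P_i$, then $s_2\rightarrow u'\rightarrow v$ puts $v$ in $\Gamma^{s_1,s_2,s_3}_{t_1,t_3}$; if $u'$ is strictly after $v$ on $P_i$, then $s_1\rightarrow u'$ by concatenating $R(s_1:v)$ with $P_i(v:u')$, while $s_2,s_3\rightarrow u'$ and $u'\rightarrow t_1,t_3$ are immediate, and $u'$ is the witness.

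Parts (ii) and (iii) follow the same recipe, now against the set $\Gamma^{s_2,s_3}_{t_1,t_2,t_3}$. For (ii), with $v$ shared by $P_i$ and $Q_j$ and a descendant of $s_2$, the positions $v$ at or beyond $z_i$ on $P_i$ and $v$ at or beyond $y_j$ on $Q_j$ are eliminated by Observation~\ref{observation:2} (they would give $z_i\rightarrow t_2$ via $Q_j$ and $y_j\rightarrow t_1$ via $P_i$ respectively), so $v$ sits strictly before both markers. Choosing $W$ and its last-exit head $u'$ on $P_i(s_3:z_i)$ as above, either $u'$ is at or before $v$ on $P_i$---so that $u'\rightarrow v\rightarrow t_2$ via $Q_j$ upgrades $u'$ to reach all three terminals and $u'\in\Gamma^{s_2,s_3}_{t_1,t_2,t_3}$---or $u'$ is strictly after $v$, in which case $v\rightarrow u'\rightarrow t_3$ supplies the missing $v\rightarrow t_3$ and $v$ itself is the witness. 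For (iii), where $v$ is shared by $Q_j$ and an $(s_2,t_3)$ path $W$, the position $v$ at or beyond $y_j$ on $Q_j$ is ruled out by $y_j\rightarrow t_3$ via $W$; then choosing the last-exit head $u'$ for $W$ on $P_i(s_3:z_i)$ and splitting on whether $v$ lies on $W$ before or after $u'$, the first subcase gives $v\rightarrow u'\rightarrow t_1$ and places $v$ in the forbidden set, while the second gives $u'\rightarrow v\rightarrow t_2$ via $Q_j$ and places $u'$ there.

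The main obstacle is the disciplined bookkeeping in the case split. One must pick \emph{the last} exit edge of $W$ from $P_i(s_3:z_i)$ rather than the first, so that the tail of $W$ after $u'$ avoids $P_i(s_3:z_i)$ entirely and directly witnesses $u'\rightarrow t_3$. Then, in each subcase, one has to verify that the candidate witness---either $v$ or $u'$---sees every required source and every required terminal by concatenating appropriate segments of $R$, $P_i$, $Q_j$, and $W$, and that these concatenations are genuine directed paths in the acyclic graph.
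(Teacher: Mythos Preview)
Your proposal is correct and fleshes out exactly what the paper leaves implicit. The paper records only the one-line ``since otherwise $\Gamma^{\ldots}_{\ldots}\neq\emptyset$'' for each part; you have correctly identified that the Case~1.1 hypothesis (every $(s_2,t_3)$ path meets $P_i(s_3:z_i)$) is the missing ingredient needed to bring $s_2$ into play in~(i) and to supply the third terminal in~(ii)--(iii), and your position-of-$u'$-versus-$v$ case split is the natural way to make this precise. One minor remark: your insistence on choosing the \emph{last} exit edge of $W$ on $P_i(s_3:z_i)$ is not actually essential---since $u'$ lies on $W$, the relation $u'\rightarrow t_3$ holds along the tail of $W$ regardless of whether later segments of $W$ revisit $P_i(s_3:z_i)$, so any shared edge would serve; but the choice is harmless and the argument goes through as written.
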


Since for any $i$, removing all the edges on $P_i$ disconnects
$(s_2,t_3)$, and no single
edge in the network disconnects both $(s_2,t_3)$ and
$(s_3,t_1)$ (by hypothesis (b) of the lemma), 
by Lemma~\ref{lemma:algo} we can transmit $x_2+x_3$
to $t_3$ and $x_3$ to $t_1$. The sub-network (say $\set{N}_1$) used
for this purpose is a
grail with either even or odd number of handles like those in
Fig.~\ref{fig:multigrail3}, and w.l.o.g., let us assume that the $(s_3,t_1)$
path taking part in the grail is $P_1$. By this coding (on the grail),
$z_1$ receives $x_3$ and $t_3$ receives $x_2+x_3$.
We now have the following two sub-cases under Case 1.1:

{\it Case1.1.1: For some path in $\mathscr{Q}(s_3,t_2)$, say $Q_1$,
removing all the edges on $Q_1(s_3:y_1)$ does not disconnect $(s_1,t_3)$.}

By Observations~\ref{observation:3}(ii) and \ref{observation:3}(iii), $Q_1$ is node-disjoint from the grail $\set{N}_1$
except at the dark-shaded part shown in Fig.~\ref{fig:multigrail3}.
Since this dark-shaded part also carries $x_3$, we can transmit
$x_3$ on $Q_1(s_3:y_1)$ without affecting the coding on grail $\set{N}_1$.
Further, by Observations~\ref{observation:1a}(ii), (iii), (iv) and Observation~\ref{observation:1b}
we can simultaneously communicate $x_1+x_2$ to $z_1$ and $y_1$ via $e$.
Then $t_1$ and $t_2$ get $x_1+x_2+x_3$ from $z_1$ and $y_1$ respectively.
\begin{figure*}[!t]
\centering
\subfigure[]{
\includegraphics[width=1.2in]{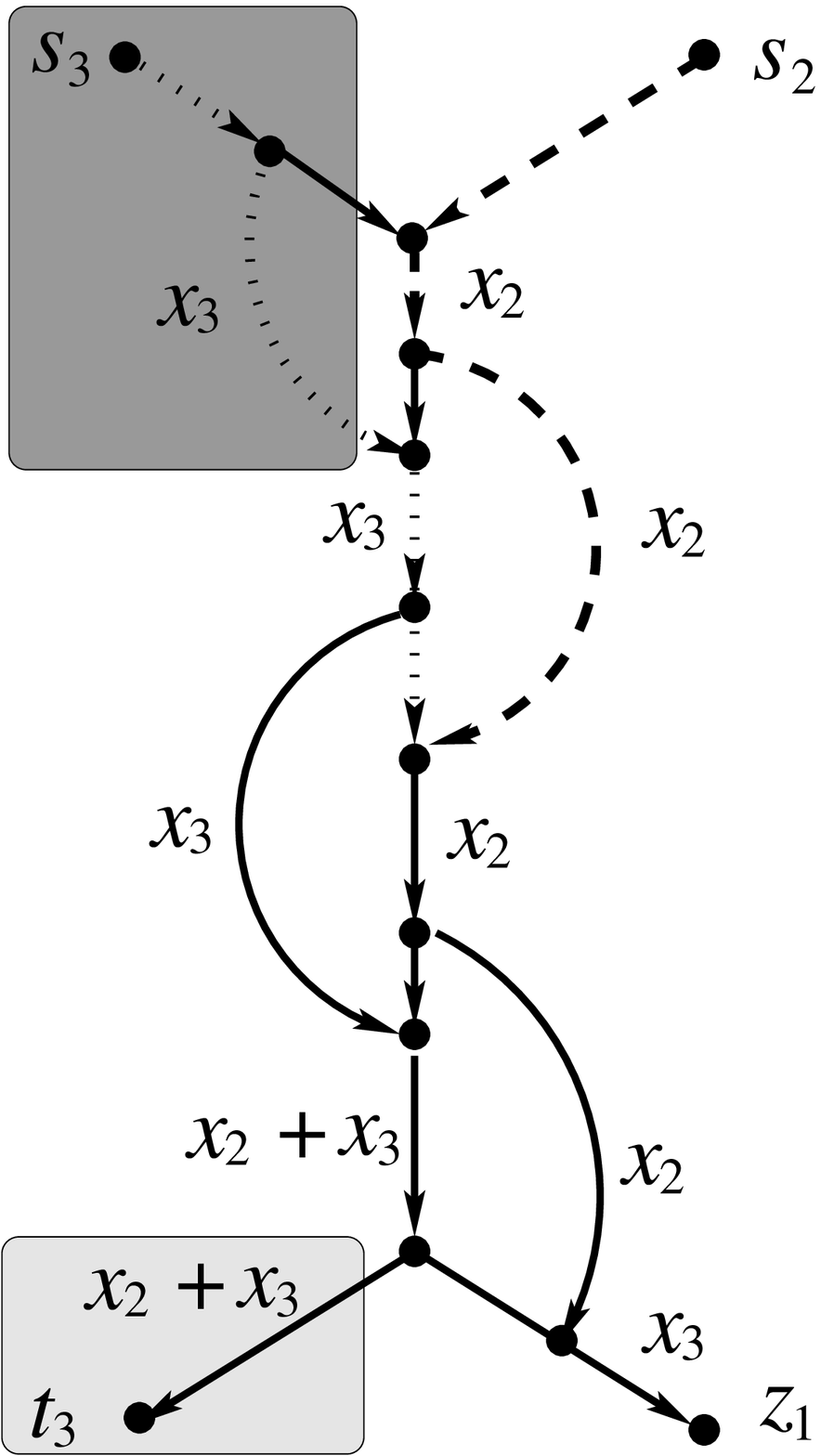}
\label{fig:multigrail3-a}
}
\subfigure[]{
\includegraphics[width=1.2in]{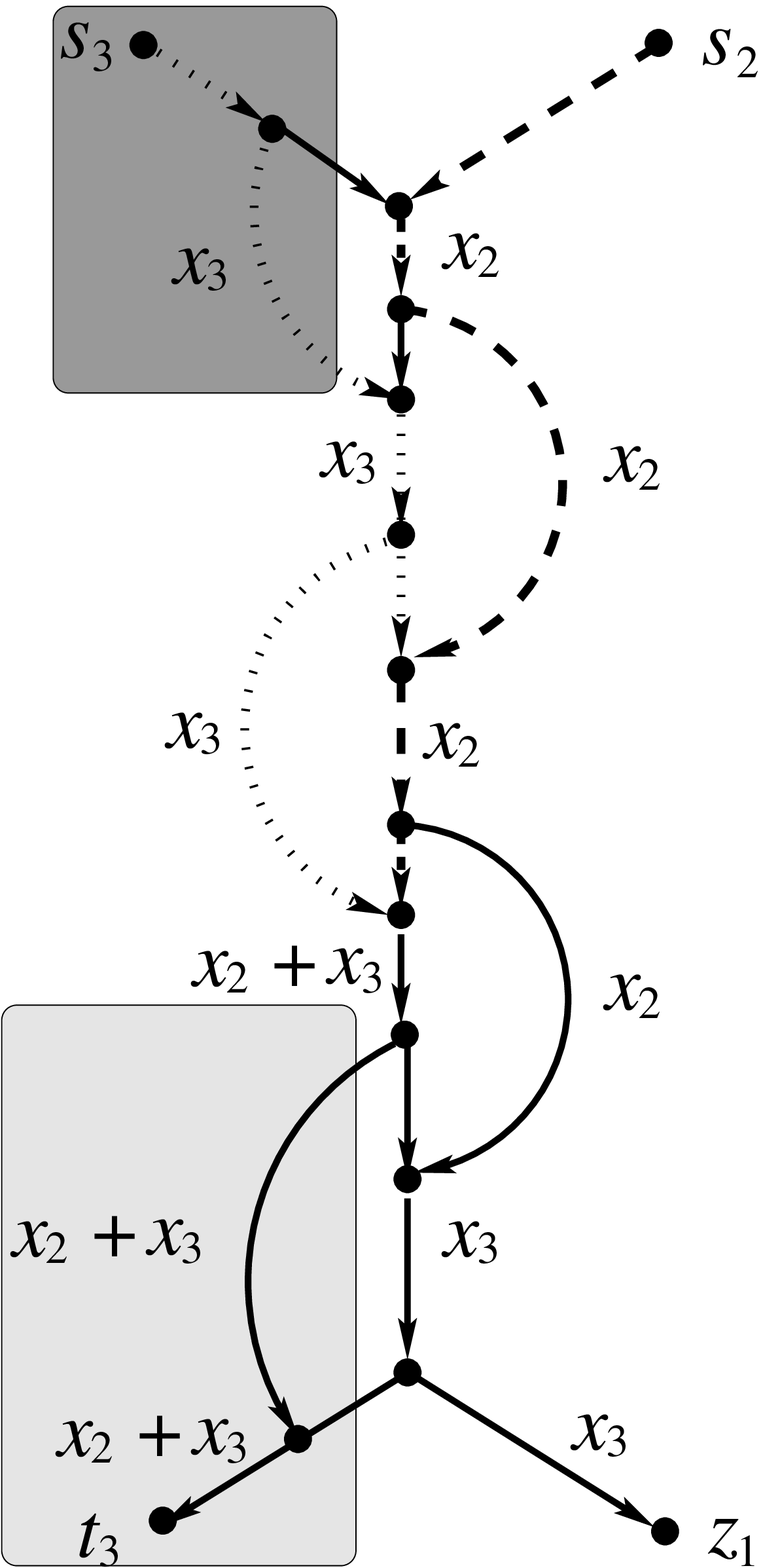}
\label{fig:multigrail3-b}
}
\subfigure[]{
\includegraphics[width=1.2in]{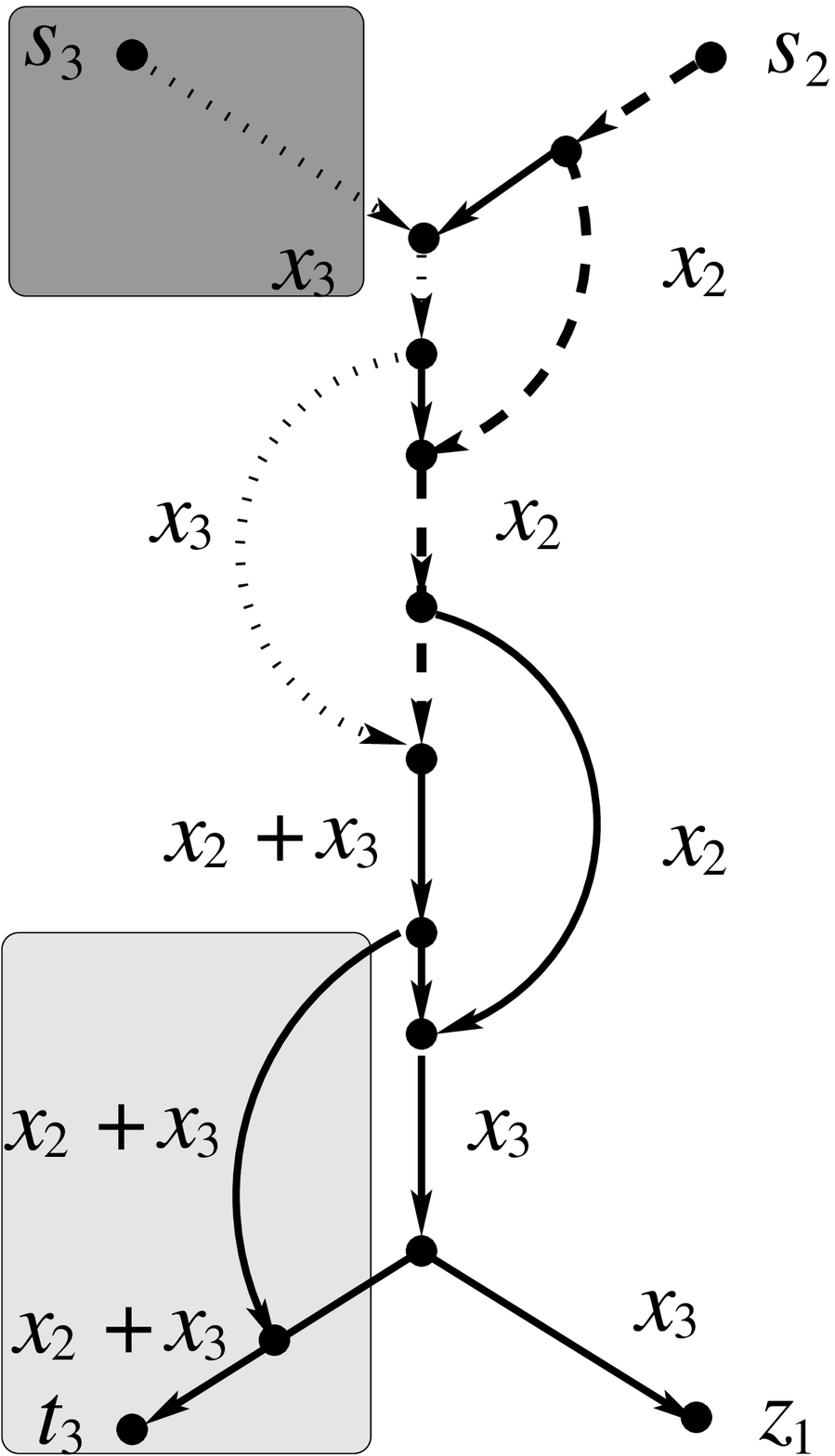}
\label{fig:multigrailo3-a}
}
\subfigure[]{
\includegraphics[width=1.2in]{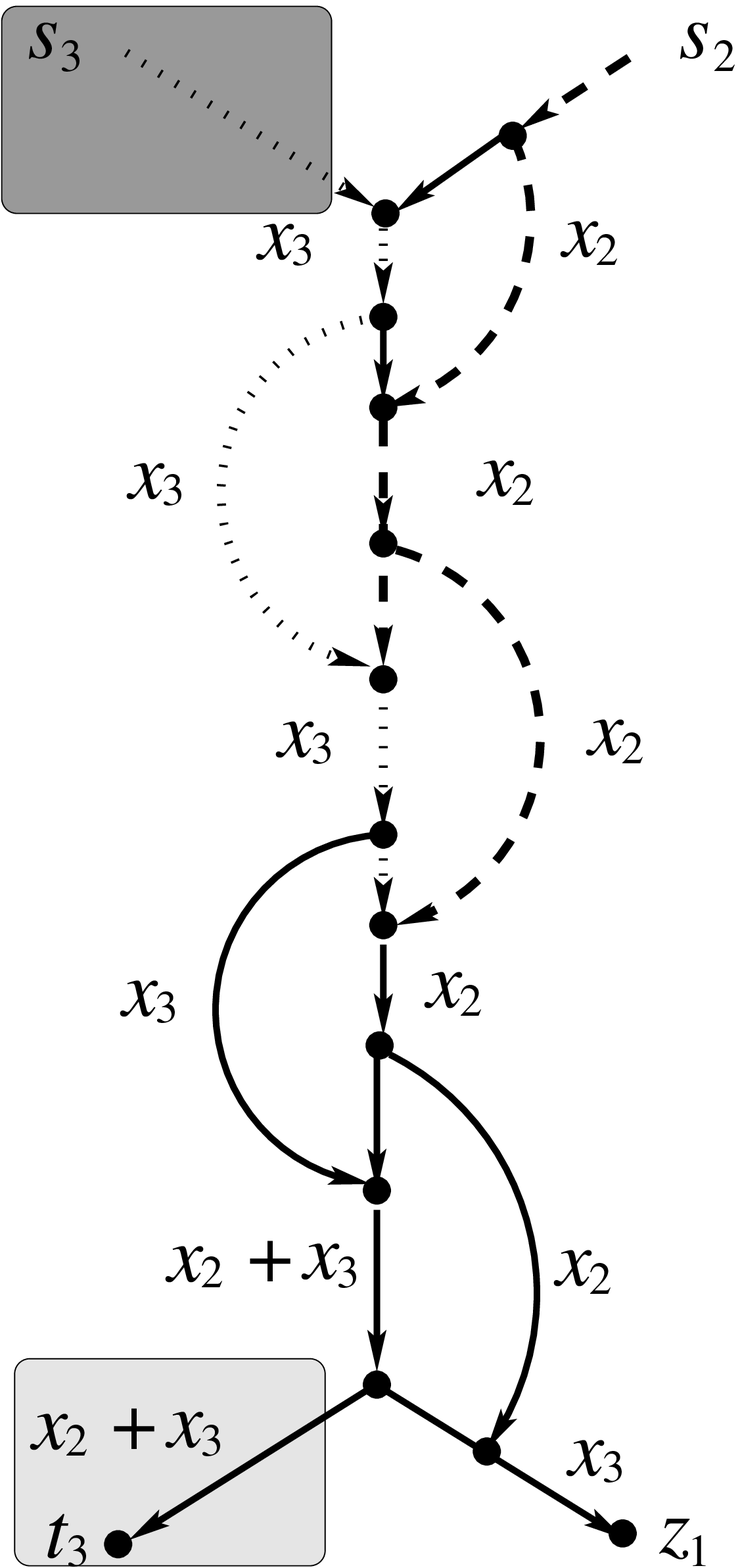}
\label{fig:multigrailo3-b}
}
\caption[]{The coding on grail $\set{N}_1$ for Case 1.1 of Lemma~\ref{lemma:general}}
\label{fig:multigrail3}
\end{figure*}
By the hypothesis of Case 1.1.1, there exists a $(s_1,t_3)$ path, say $P(s_1,t_3)$,
which is edge-disjoint from $Q_1$. By Observation~\ref{observation:3}(i), 
$P(s_1,t_3)$ is also node-disjoint from the grail $\set{N}_1$ except at the light-shaded
part shown in Fig.~\ref{fig:multigrail3} which
carries $x_2+x_3$. This and Observations~\ref{observation:1a}(i), and \ref{observation:1b}
imply that we can now simultaneously transmit $x_3$ along $P(s_1,t_3)$
till it meets the grail $\set{N}_1$ without conflicts in the existing coding. The first node in the light-shaded part
of $\set{N}_1$ which is also on $P(s_1,t_3)$
can clearly compute $x_1+x_2+x_3$ and communicate this to $t_3$.
This completes the proof for Case 1.1.1.
As an illustrative example, in Fig.~\ref{fig:example1},
we show the complete XOR coding solution for the case where the grail $\set{N}_1$
is the one in Fig.~\ref{fig:multigrail3-b}.
\begin{figure}[h]
\centering
\includegraphics[width=2.4in]{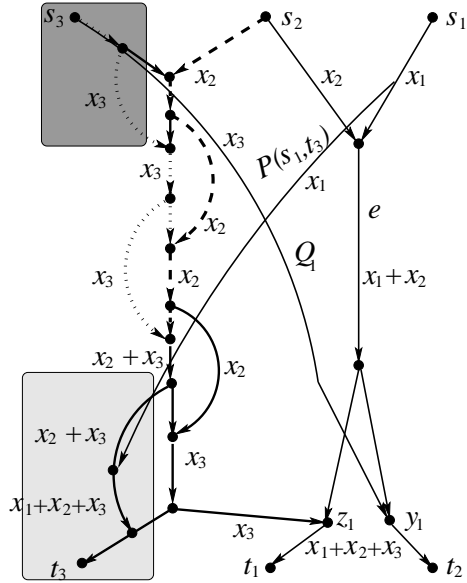}
\caption{An illustration of the coding scheme for Case 1.1.1 of Lemma~\ref{lemma:general}}
\label{fig:example1}
\end{figure}

{\it Case 1.1.2: For any $Q_j \in \mathscr{Q}(s_3,t_2)$, removing all
the edges on $Q_j(s_3:y_j)$ disconnects $(s_1,t_3)$.}

Since for any $j$, removing all the edges on $Q_j$ disconnects
$(s_1,t_3)$, and no single
edge in the network disconnects both $(s_1,t_3)$ and
$(s_3,t_2)$ (by hypothesis (b) of the lemma), 
by Lemma~\ref{lemma:algo0} we can transmit $x_1$
to $t_3$ and $x_3$ to $t_2$. The sub-network (say $\set{N}_2$) used
for this purpose is a
grail with either even or odd number of handles like those in
Fig.~\ref{fig:multigrail4}, and w.l.o.g., let us assume that the $(s_3,t_2)$
path taking part in the grail is $Q_2$. By this coding (on the grail),
$y_2$ receives $x_3$ and $t_3$ receives $x_1$.

\begin{figure*}[!t]
\centering
\subfigure[]{
\includegraphics[width=1.2in]{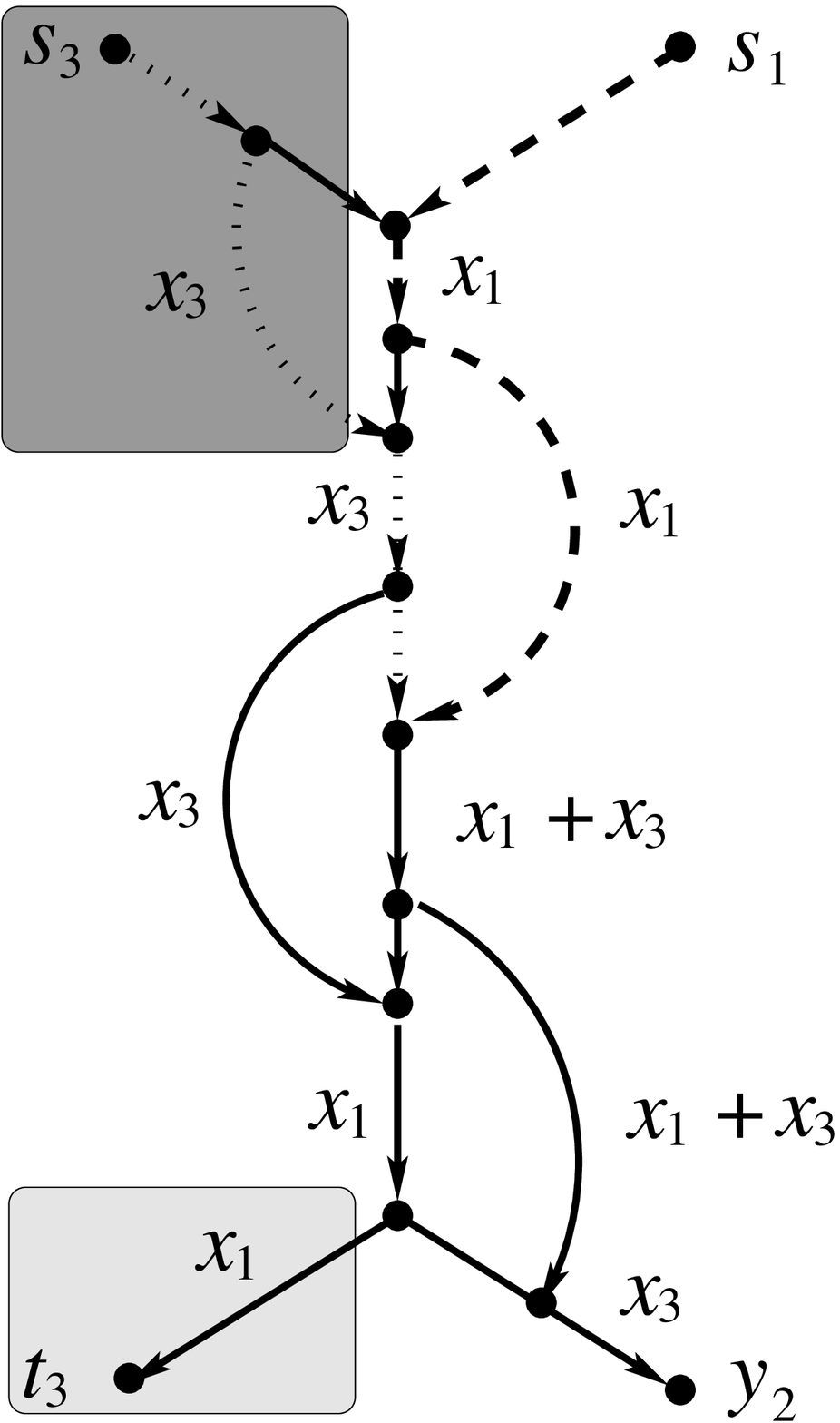}
\label{fig:multigrail4-a}
}
\subfigure[]{
\includegraphics[width=1.2in]{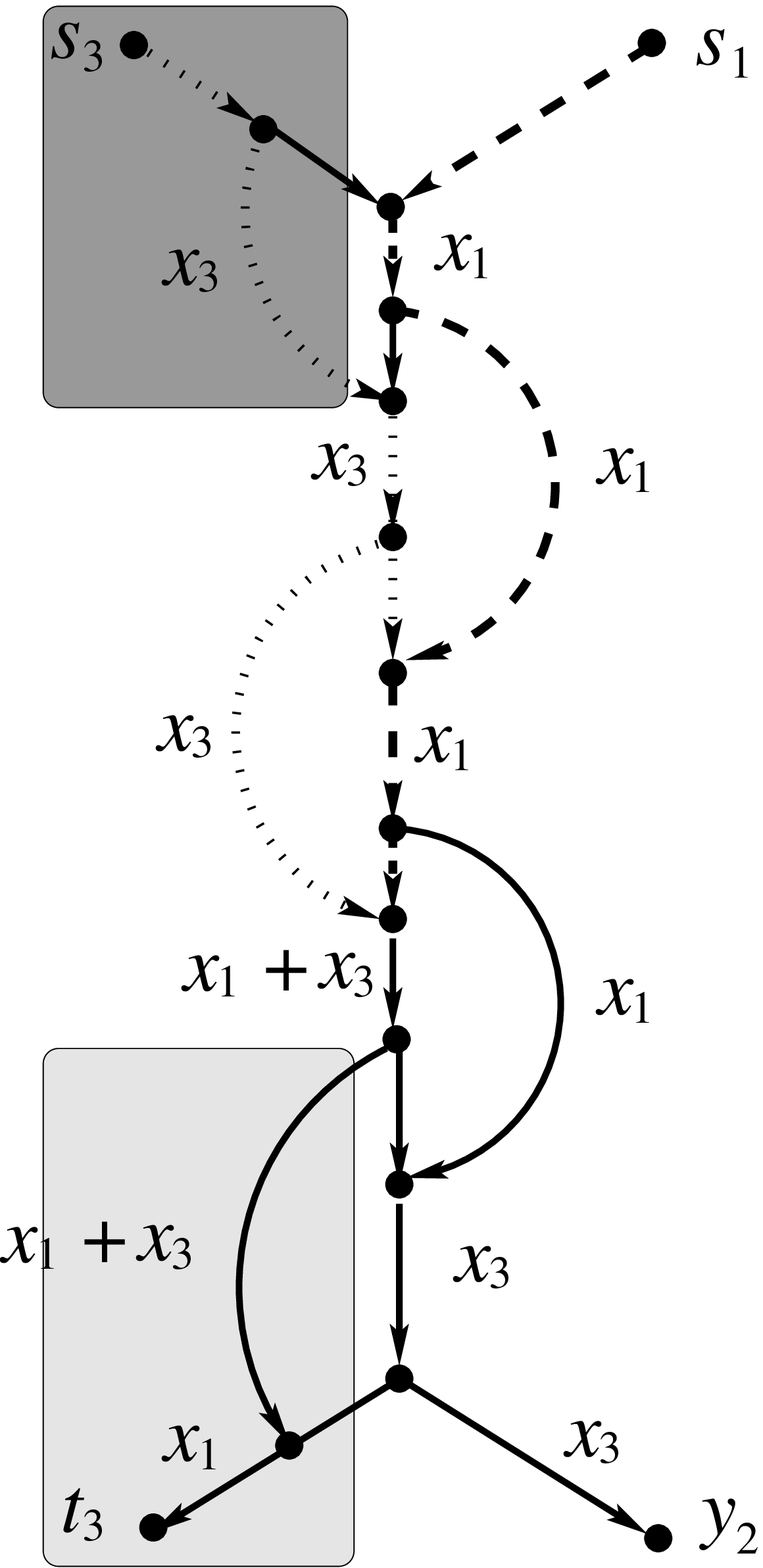}
\label{fig:multigrail4-b}
}
\subfigure[]{
\includegraphics[width=1.2in]{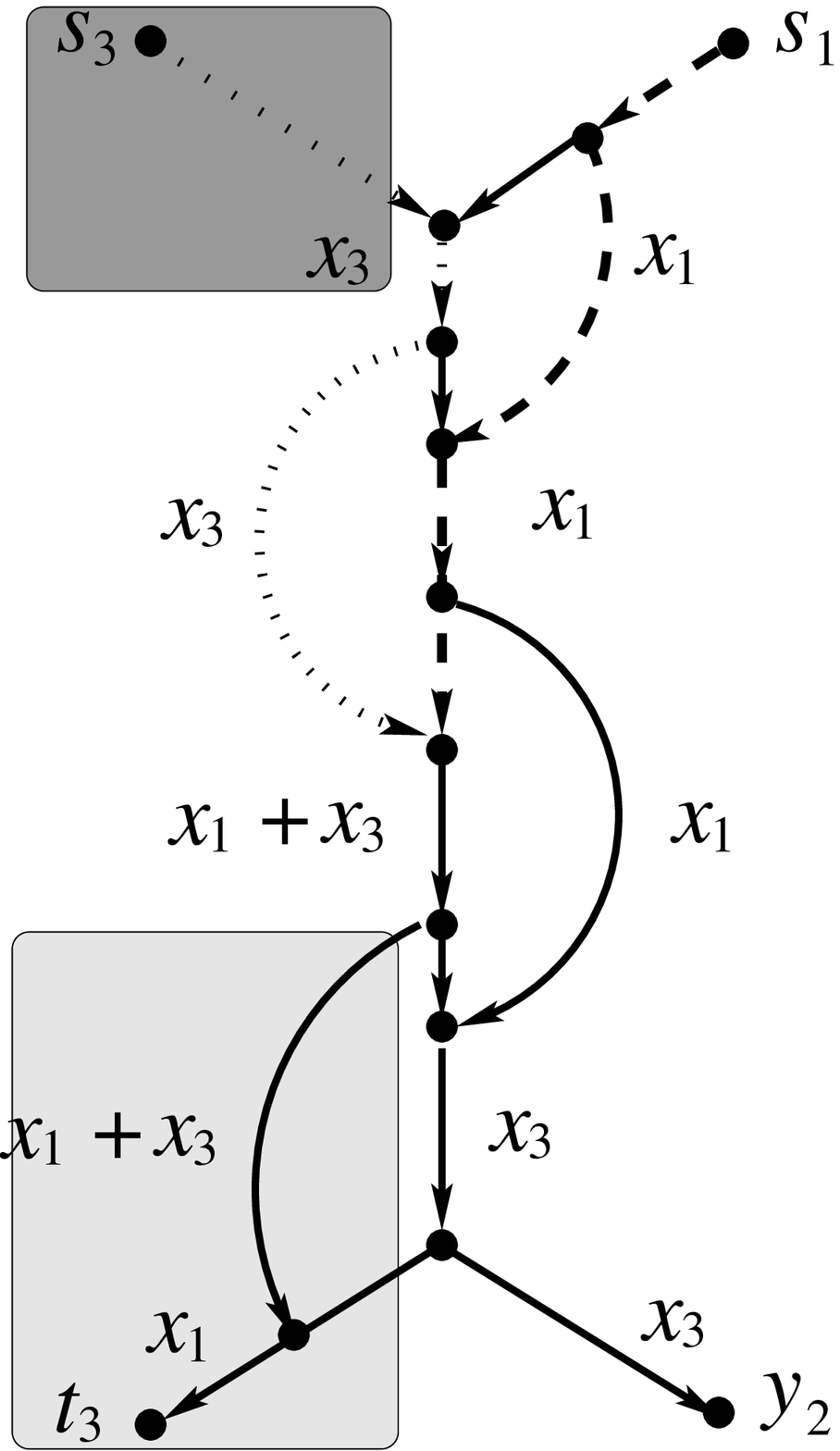}
\label{fig:multigrailo4-a}
}
\subfigure[]{
\includegraphics[width=1.2in]{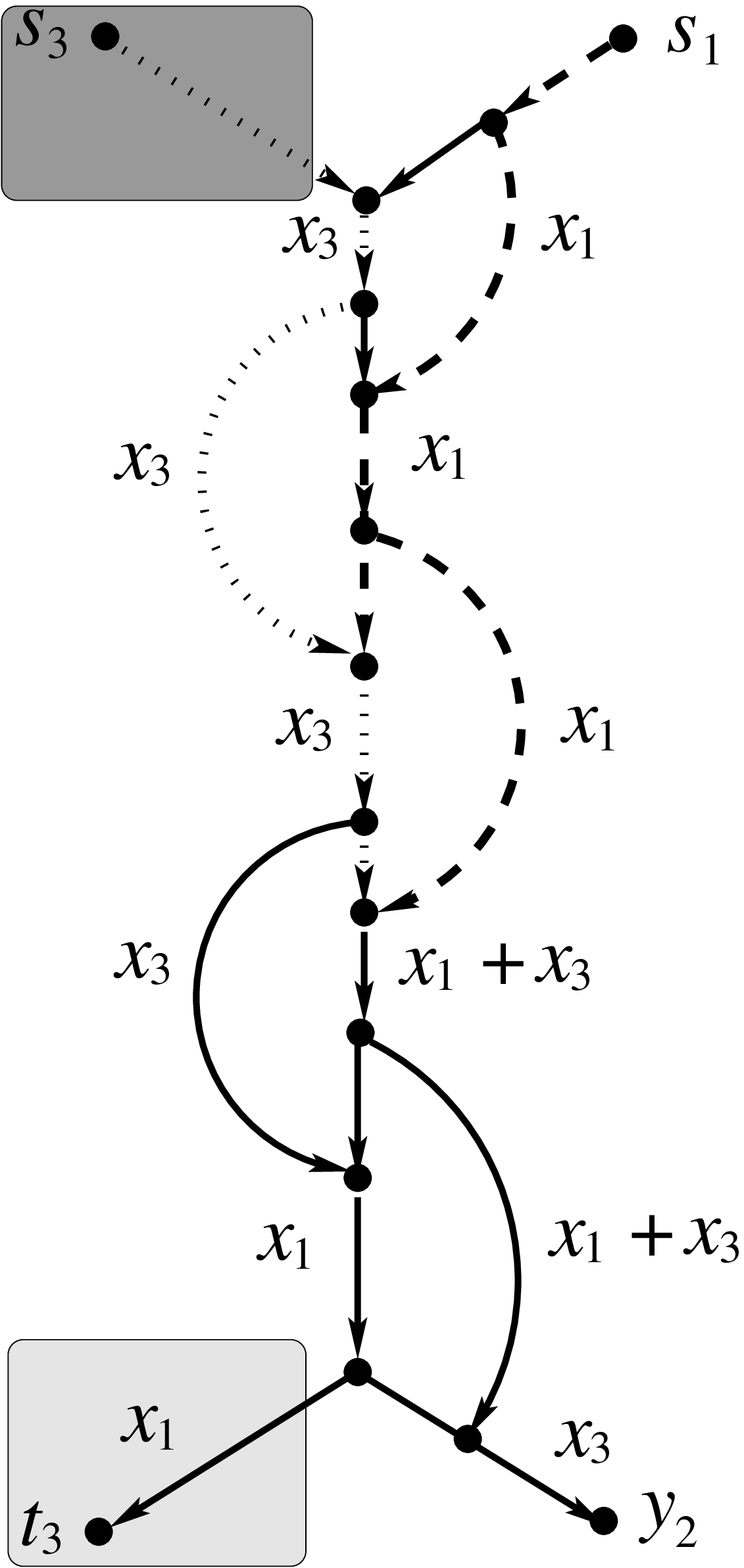}
\label{fig:multigrailo4-b}
}
\caption[]{The coding on grail $\set{N}_2$ for Case 1.1.2 of Lemma~\ref{lemma:general}}
\label{fig:multigrail4}
\end{figure*}

Because of Observation~\ref{observation:3}, the grails $\set{N}_1$ and $\set{N}_2$ can only intersect
in the following possible ways: (i) The dark-shaded part of $\set{N}_1$ intersects with the dark-shaded
part of $\set{N}_2$ and/or
(ii) The light-shaded part of $\set{N}_1$ intersects with the light-shaded part of $\set{N}_2$.
The remaining parts of the grails are node-disjoint. Now the dark-shaded parts of both the grails carry $x_3$,
so such an intersection does not cause any conflict. As for the intersection
between the light-shaded parts of the two grails, it can be easily worked out
that in all possible cases, $t_3$ can easily recover $x_1+x_2+x_3$ by XOR coding.
(One can check that the sub-network formed by the intersection in the
light-shaded parts enables communication of $x_1+x_2+x_3$ to $t_3$ basically
as the sum of some of the inputs to that part - and this is always
feasible in a connected 1-terminal network.)
By the inferences in Observations~\ref{observation:1a},\ref{observation:1b}, we can simultaneously communicate $x_1+x_2$ to $z_1$ and $y_2$ via $e$.
Then $t_1$ and $t_2$ get $x_1+x_2+x_3$ from $z_1$ and $y_2$ respectively.
This completes the proof for Case 1.1.2.
As an illustrative example, in Fig.~\ref{fig:example2},
we show the complete XOR coding solution for the case where the grail $\set{N}_1$
is the one in Fig.~\ref{fig:multigrail3-b} and the grail $\set{N}_2$
is the one in Fig.~\ref{fig:multigrail4-b} and their light-shaded parts and dark-shaded parts intersect
as shown in the figure.
\begin{figure}[h]
\centering
\includegraphics[width=2.7in]{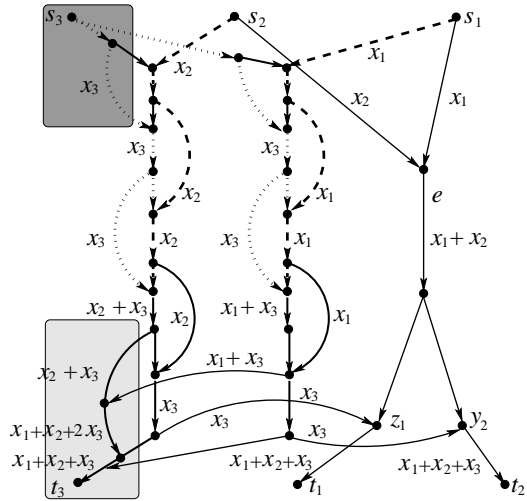}
\caption{An illustration of the coding scheme for Case 1.1.2 of Lemma~\ref{lemma:general}}
\label{fig:example2}
\end{figure}

{\it Case 1.2: For any $P_i \in \mathscr{P}(s_3,t_1)$, removing all
the edges on $P_i(s_3:z_i)$ disconnects $(s_1,t_3)$ but not $(s_2,t_3)$.}

This is the symmetric counterpart of Case 1.1, and the proof is skipped.

{\it Case 1.3: There exist paths $P_1,P_2\in \mathscr{P}(s_3,t_1)$ such
that removing all the edges on $P_1(s_3:z_1)$
disconnects $(s_1,t_3)$ and removing all the edges on $P_2(s_3:z_2)$
disconnects $(s_2,t_3)$.}

By the hypothesis of the case, any $(s_1,t_3)$ (resp. $(s_2,t_3)$) path
shares common edges with $P_1$ (resp. $P_2$) (the reader may like to keep Fig.~\ref{fig:case0} in mind).
If any such $(s_1,t_3)$ (resp. $(s_2,t_3)$) path
shares nodes with $P_2$ (resp. $P_1$), then $P_2$ (resp. $P_1$) has a node $v$
s.t. $\{s_1,s_2,s_3\} \rightarrow v \rightarrow \{t_1,t_3\}$, i.e.,
$\Gamma^{s_1,s_2,s_3}_{t_1,t_3} \neq \emptyset$,
which contradicts Assumption~\ref{assumption:simplify}.
So this is not the case.

Since the network is connected, there
is a $(s_3,t_2)$ path, say $Q_1$. So, under this case, we have a subnetwork
as shown in Fig.~\ref{fig:case0}. Now, again by Assumption~\ref{assumption:simplify}, one can easily
verify that $Q_1$ does not share a node with the rest of the
subnetwork except on the path-segments $P_1(s_3:v_1)$ above $v_1$,
$P_2(s_2:v_2)$ above $v_2$ and the $(\head{e},t_2)$ path-segment below $\head{e}$.
So the coding
scheme shown in Fig.~\ref{fig:case0} completes the proof of this case.
In particular, $t_1$ and $t_3$ use $x_1$
obtained from $P_1$ and
$x_2+x_3$ obtained from $P_2$ to get $x_1+x_2+x_3$, while $t_2$ uses $x_3$ obtained from $Q_1$ and $x_1+x_2$ obtained
from $e$ to get $x_1+x_2+x_3$.

\begin{figure}[h]
\centering
\includegraphics[height=2.2in]{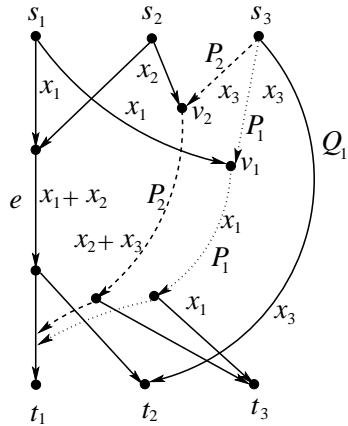}
\caption{The sub-network and code for Case 1.3 of Lemma~\ref{lemma:general}}
\label{fig:case0}
\end{figure}

{\it Case 2: There exists some path in $\mathscr{P}(s_3,t_1)$, say $P_1$, such that
removing all the edges on $P_1(s_3:z_1)$ does not disconnect either $(s_1,t_3)$ or $(s_2,t_3)$.}

We can have the following sub-cases under Case 2:

{\it Case 2.1: For any $Q_j \in \mathscr{Q}(s_3,t_2)$, removing all
the edges on $Q_j(s_3:y_j)$ disconnects $(s_1,t_3)$ and/or $(s_2,t_3)$.}

This subcase statement is the symmetric counterpart of the Case 1 statement.
Case 2.1 is thus a special case (because of the additional constraint
in the Case 2 statement) of that symmetric counterpart and so
the proof follows in a similar way.

{\it Case 2.2:
For some path in $\mathscr{Q}(s_3,t_2)$, say $Q_1$, removing all the
edges on $Q_1(s_3:y_1)$
does not disconnect $(s_1,t_3)$ or $(s_2,t_3)$.}

This is considered in two further sub-cases:

{\it Case 2.2.1:
Removing all the edges on the pair of paths $P_1(s_3:z_1)$ and $Q_1(s_3:y_1)$
simultaneously does not disconnect $(s_1,t_3)$ or $(s_2,t_3)$.}

By the hypothesis
of Case 2.2.1, there exist $(s_1,t_3)$ and $(s_2,t_3)$ paths, called respectively
$P(s_1,t_3)$ and $P(s_2,t_3)$, which are edge-disjoint from both $P_1$ and $Q_1$.
Consider a lowest node $v$ in ancestral order in the set 
$\Gamma_{t_3}\cap (P_1 \cup Q_1)$.
Such a node is above $z_1$ or $y_1$ by Observation~\ref{observation:2}.
W.l.o.g., let us assume that $v$ is on $P_1$, and $P'$ is a $(v,t_3)$ path.
Now, $P(s_3,t_3) = P_1(s_3:v)P'$ is a $(s_3,t_3)$ path.
By the hypothesis
of Case 2.2.1, one can communicate
$x_1+x_2+x_3$ to $t_3$ via XOR coding on $P(s_1,t_3),P(s_2,t_3)$ and $P(s_3,t_3)$, while simultaneously
communicating $x_3$ on $P_1(s_3:z_1)$ and $Q_1(s_3:y_1)$.
Further, by
Observations~\ref{observation:1a} and \ref{observation:1b}, one can also communicate
$x_1+x_2$ to $z_1$ and $y_1$ via $e$.
Nodes $z_1$ and $y_1$
can then recover $x_1+x_2+x_3$
and transmit this to $t_1$ and $t_2$ respectively.

{\it Case 2.2.2:
Removing all the edges on the pair of paths $P_1(s_3:z_1)$ and $Q_1(s_3:y_1)$
simultaneously disconnects $(s_1,t_3)$ or $(s_2,t_3)$ or both.}
\begin{figure}[h]
\centering
\includegraphics[height=2.2in]{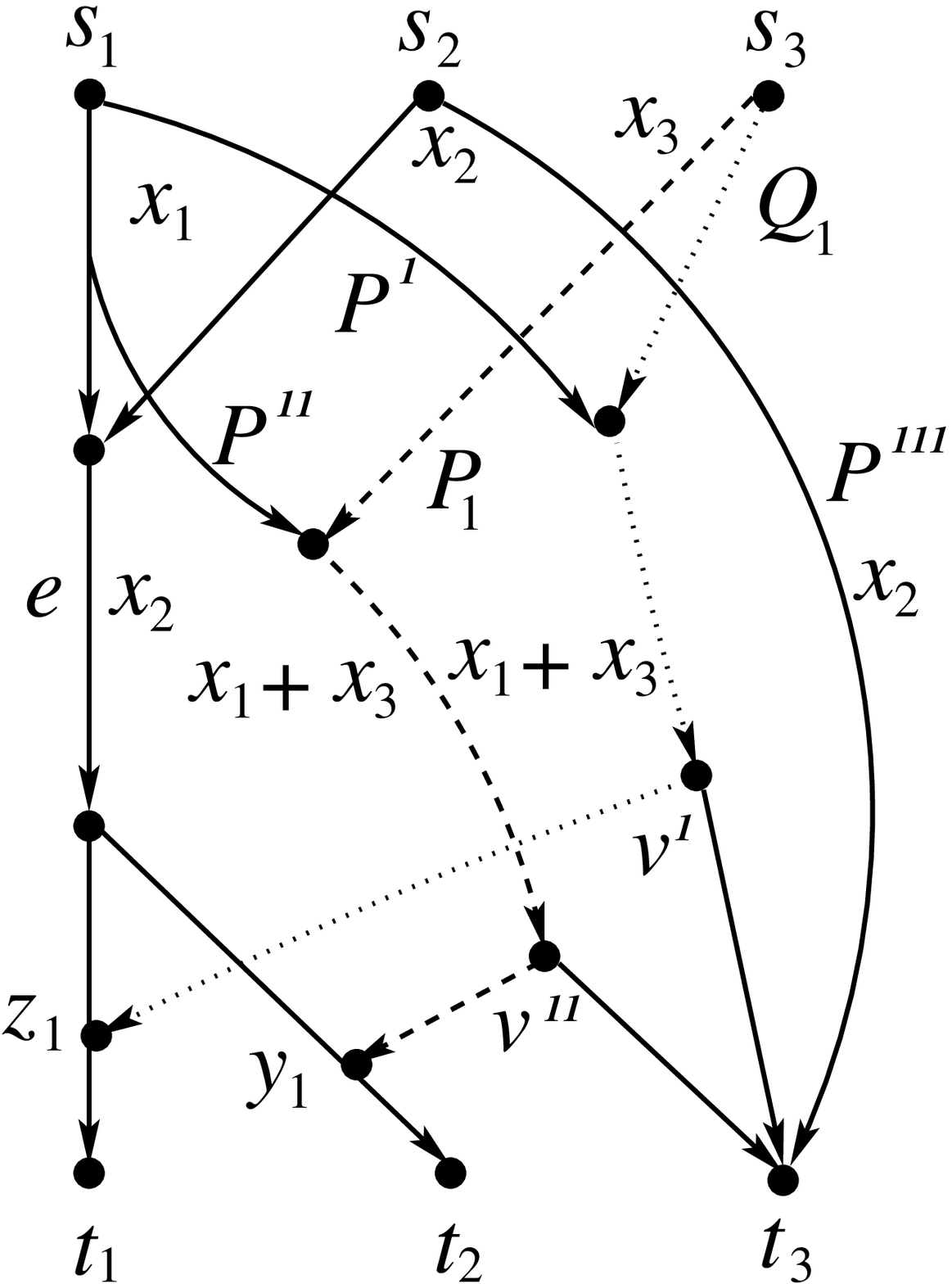}
\caption{The sub-network and code for Case 2.2.2 Lemma~\ref{lemma:general}}
\label{fig:case0'}
\end{figure}

W.l.o.g., let us assume that removing all the edges on $P_1(s_3:z_1)$ and $Q_1(s_3:y_1)$ simultaneously
disconnects $(s_1,t_3)$. By the hypothesis of the case, in the network formed
by removing all the edges on $P_1(s_3:z_1)$, removing all the edges on $Q_1(s_3:y_1)$
disconnects $(s_1,t_3)$. Hence there exists a $(s_1,t_3)$ path $P'$ which shares edges with
$Q_1(s_3:y_1)$. By Assumption~\ref{assumption:simplify}, $P'$ is node-disjoint from $P_1$ and $Q_1(y_1:t_2)$
since otherwise, $\Gamma^{s_1,s_3}_{t_1,t_2,t_3}\neq \emptyset$.
By similar reasoning, there exists a $(s_1,t_3)$ path $P''$ which shares edges with
$P_1(s_3:z_1)$ but is node-disjoint from $Q_1$ and $P_1(z_1:t_1)$.
This and Observation~\ref{observation:1a}(ii),(iii),(iv) imply
that there exists a subnetwork as shown in Fig.~\ref{fig:case0'}. Here $P'''$
is any $(s_2,t_3)$ path.
By Assumption~\ref{assumption:simplify} and Observation~\ref{observation:1a}(i), it can be verified that
$P'''$ does not share any node with the rest of the sub-network except on the path segments
$P'(v':t_3)$ below $v'$, $P''(v'':t_3)$ below $v''$, and the $(s_2,\tail{e})$ path segment above $\tail{e}$.
So the coding scheme shown in Fig.~\ref{fig:case0'} completes the proof of this case. (The reader may note that
the subnetwork in Fig.~\ref{fig:case0'} is actually the reverse network of the one in Fig.~\ref{fig:case0}.)

{\bf Proof of Lemma~\ref{lemma:nopair}:}

Let $\set{N}$ satisfy the hypotheses of Lemma~\ref{lemma:nopair} and consider the edges
$e_1,e_2$ and the labeling of the sources and terminals with which hypothesis (2)
of Lemma~\ref{lemma:nopair} is satisfied.
By hypothesis 2(d), the set $\set{R}(s_3,t_3)=\{R_1,R_2,...\}$ of all $(s_3,t_3)$ paths that do not contain
either $e_1$ or $e_2$ is not empty.
W.lo.g., let $\set{N}$ satisfy Assumption~\ref{assumption:simplify}.
Because $e_1,e_2$ satisfy hypothesis (2), and because no path in $\set{R}(s_3,t_3)$
contains either of them, we have
\begin{observation}\label{observation:1.1}
No $(s_3,t_3)$ path in $\set{R}(s_3,t_3)$ contains a node $v$ such that $s_i\rightarrow v$
or $v\rightarrow t_j, i,j\in\{1,2\}$ i.e. $s_1,s_2,t_1,t_2$ are not connected to any path
in $\set{R}(s_3,t_3)$.
\end{observation}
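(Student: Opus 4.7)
The plan is to argue by contradiction. Suppose there exist $R \in \set{R}(s_3,t_3)$ and a node $v$ on $R$ such that one of $s_i \rightarrow v$ ($i \in \{1,2\}$) or $v \rightarrow t_j$ ($j \in \{1,2\}$) holds. By the symmetry of Lemma~\ref{lemma:nopair}'s hypothesis between the triples $(s_1,t_1,e_1)$ and $(s_2,t_2,e_2)$, together with reverse-network duality (Lemma~\ref{lemma:reverse}, which preserves the Lemma's hypotheses and interchanges the conditions $s_i\rightarrow v$ with $v\rightarrow t_i$), it suffices to treat a single representative case, say $s_1 \rightarrow v$.

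The pivotal step is to show that $s_1 \rightarrow v$ forces $\head{e_1} \rightarrow v$ via a path lying in $\set{N} - \{e_1,e_2\}$. For any path $P$ from $s_1$ to $v$, the concatenation $P \cdot R(v:t_3)$ is an $(s_1,t_3)$-path in $\set{N}$; by hypothesis (2)(a) it must contain $e_1$; and since $R(v:t_3) \subseteq R$ contains neither $e_1$ nor $e_2$, we must have $e_1 \in P$, whence $\head{e_1} \rightarrow v$. The $\head{e_1}$-to-$v$ subpath of $P$ cannot revisit $e_1$ by acyclicity, nor can it use $e_2$ (else $\head{e_1} \rightarrow \tail{e_2}$, i.e.\ $e_1 \rightarrow e_2$, violating hypothesis (2)(c)). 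Analogous deductions in the other three cases yield $\head{e_2} \rightarrow v$ (from $s_2\rightarrow v$), $v \rightarrow \tail{e_1}$ (from $v\rightarrow t_1$), and $v \rightarrow \tail{e_2}$ (from $v\rightarrow t_2$), each via a path in $\set{N} - \{e_1,e_2\}$.

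Combined violations are eliminated at once. Under $s_1\rightarrow v$, further assuming $v\rightarrow t_1$ produces the cycle $\head{e_1} \rightarrow v \rightarrow \tail{e_1} \rightarrow \head{e_1}$ through $e_1$, violating the DAG property; further assuming $v\rightarrow t_2$ yields $\head{e_1} \rightarrow v \rightarrow \tail{e_2}$, i.e.\ $e_1 \rightarrow e_2$, contradicting (2)(c); the cross-combinations starting from $s_2\rightarrow v$ are symmetric. This already gives that, if $s_1 \rightarrow v$, then $v \nrightarrow t_1$ and $v \nrightarrow t_2$, and symmetrically for the other starting hypotheses.

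To close the remaining solo case $s_1\rightarrow v$ with no $v\rightarrow t_j$, I would invoke Assumption~\ref{assumption:simplify} at $\head{e_1}$: this node has $\{s_1,s_3\}$ among its ancestors (via any $(s_1,t_3)$- and $(s_3,t_1)$-path through $e_1$) and $\{t_1,t_3\}$ among its descendants (via the $(s_3,t_1)$-path and via $v$ and $R(v:t_3)$), so Lemma~\ref{lemma:5con} and Assumption~\ref{assumption:simplify} force $s_2 \nrightarrow \head{e_1}$ and $\head{e_1} \nrightarrow t_2$. Additionally, a short preliminary argument shows that no endpoint of $e_1$ or $e_2$ can lie strictly inside $R$ (otherwise $s_3 \rightarrow t_1$ or $s_1 \rightarrow t_3$ would persist in $\set{N}-e_1$, contradicting the disconnection hypotheses). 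I would then splice $R(s_3:v)$ with the $\head{e_1}\rightarrow v$ path in $\set{N}-\{e_1,e_2\}$ to identify a node whose set of source-ancestors and terminal-descendants violates Lemma~\ref{lemma:5con}'s hypothesis, producing the sought contradiction with Assumption~\ref{assumption:simplify}. The main technical hurdle is this last splicing step: tracking the two paths through $v$ carefully enough to guarantee that the resulting ``rich'' node cannot be bypassed by a detour through $e_1$ or $e_2$, so that the Assumption-violation is genuine.
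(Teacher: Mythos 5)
Your opening moves are sound: the reduction by symmetry, the deduction that every $(s_1,v)$ path must contain $e_1$ (hence $\head{e_1}\rightarrow v$ by a path avoiding both $e_1$ and $e_2$), and the elimination of all the ``combined'' configurations (a cycle through $e_1$, or $\head{e_1}\rightarrow v\rightarrow\tail{e_2}$ contradicting $e_1\nrightarrow e_2$, etc.) are all correct. But the ``solo'' case you flag as the main hurdle cannot be closed, because the observation read literally is false under the hypotheses of Lemma~\ref{lemma:nopair} together with Assumption~\ref{assumption:simplify}. Take the network with edges $s_1\rightarrow\tail{e_1}$, $s_3\rightarrow a\rightarrow\tail{e_1}$, $e_1$, $\head{e_1}\rightarrow t_1$, $\head{e_1}\rightarrow v$, $s_3\rightarrow b\rightarrow v\rightarrow c\rightarrow t_3$, $s_2\rightarrow\tail{e_2}$, $s_3\rightarrow\tail{e_2}$, $e_2$, $\head{e_2}\rightarrow t_2$, $\head{e_2}\rightarrow t_3$, plus four two-hop paths $s_1\rightarrow t_1$, $s_1\rightarrow t_2$, $s_2\rightarrow t_1$, $s_2\rightarrow t_2$ through distinct intermediate nodes. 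One checks that $e_1$ disconnects exactly $(s_1,t_3),(s_3,t_1)$, $e_2$ exactly $(s_2,t_3),(s_3,t_2)$, that $e_1\nrightarrow e_2$ and $e_2\nrightarrow e_1$, that hypothesis (1) holds (the only candidate pair fails condition 3 of Theorem~\ref{theorem:mainres2} because of $R_1=s_3bvct_3$), and that Assumption~\ref{assumption:simplify} holds since every node sees at most two sources and two terminals. Yet $v$ lies on $R_1\in\set{R}(s_3,t_3)$ and $s_1\rightarrow v$ through $e_1$. So the splicing step you propose can never produce a node violating the hypothesis of Lemma~\ref{lemma:5con}: there is none.

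What the paper's one-sentence justification actually establishes, and what every later use of the observation requires, is the weaker statement that $s_1,s_2,t_1,t_2$ are not connected to any node of a path in $\set{R}(s_3,t_3)$ by a path \emph{avoiding $e_1$ and $e_2$}; equivalently, every $(s_1,v)$ path contains $e_1$, every $(s_2,v)$ path contains $e_2$, every $(v,t_1)$ path contains $e_1$, and every $(v,t_2)$ path contains $e_2$. This is exactly your pivotal step (concatenate the offending path with the appropriate half of $R$ and invoke hypothesis (2)), and together with your combined-case exclusions it yields Observation~\ref{observation:1'}(iii) and the claim that $R_1$ meets the subnetwork of Fig.~\ref{fig:nopair1} only on the $(s_3,\tail{e_1})$, $(s_3,\tail{e_2})$, $(\head{e_1},t_3)$ and $(\head{e_2},t_3)$ segments --- note that the paper itself permits precisely those sharings, which are your unresolved solo configurations. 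You should therefore abandon the literal statement and record the ``only through $e_1$ or $e_2$'' version instead; your proof of that version is already complete.
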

This means that as shown in Fig.~\ref{fig:nopair1}, any $R_1\in\set{R}(s_3,t_3)$
does not share any node with the rest of the sub-network except on the $(s_3,\tail{e_1})$ path-segment above $\tail{e_1}$,
the $(s_3,\tail{e_2})$ path segment above $\tail{e_2}$, the $(\head{e_1},t_3)$ path segment below $\head{e_1}$
and the $(\head{e_2},t_3)$ path segment below $\head{e_2}$.
\begin{figure}[h]
\centering
\includegraphics[width=1.25in]{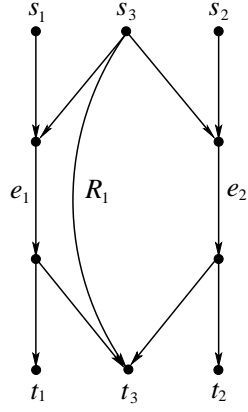}
\caption{The sub-network and code for Case 1 of Lemma~\ref{lemma:nopair}}\label{fig:nopair1}
\end{figure}
Let $\set{P}(s_1,t_2)=\{P_1,P_2,...\}$ be the set of all $(s_1,t_2)$ paths
and let $\set{Q}(s_2,t_1)=\{Q_1,Q_2,...\}$ be the set of all $(s_2,t_1)$ paths.
\begin{observation}\label{observation:1'}
By Assumption~\ref{assumption:simplify},

(i) No $(s_1,t_2)$ path contains any node from $\Gamma^{s_3}_{\tail{e_1}}\cup
\Gamma^{s_3}_{\tail{e_2}}\cup \Gamma^{s_2}_{\tail{e_2}}\cup\{\tail{e_1}\}\cup\{\tail{e_2}\}
\cup  \Gamma^{\tail{e_1}}\cup  \Gamma^{\tail{e_2}}_{t_2,t_3}$.

(ii) No $(s_2,t_1)$ path contains any node from $\Gamma^{s_3}_{\tail{e_1}}\cup
\Gamma^{s_3}_{\tail{e_2}}\cup \Gamma^{s_1}_{\tail{e_2}}\cup\{\tail{e_1}\}\cup\{\tail{e_2}\}
\cup  \Gamma^{\tail{e_1}}_{t_1,t_3}\cup  \Gamma^{\tail{e_2}}$.

Observation~\ref{observation:1.1} gives,

(iii) any path in $\set{R}(s_3,t_3)$ is node-disjoint from any path in $\set{P}(s_1,t_2)$ or $\set{P}(s_2,t_1)$.
\end{observation}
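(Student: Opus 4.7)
The plan is to prove parts (i) and (ii) by contradiction using Assumption~\ref{assumption:simplify}, and to derive part (iii) immediately from Observation~\ref{observation:1.1}.

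For part (i), I would suppose for contradiction that some $(s_1,t_2)$ path passes through a node $v$ lying in the forbidden union, and go case-by-case over its seven component sets, in each case exhibiting a witness node (either $v$ itself, $\tail{e_1}$, or $\tail{e_2}$) that satisfies one alternative of the hypothesis of Lemma~\ref{lemma:5con}, thereby contradicting Assumption~\ref{assumption:simplify}. For example, when $v\in\Gamma^{s_3}_{\tail{e_1}}$, the node $v$ is reached from $s_1$ (along the $(s_1,t_2)$ path) and from $s_3$ (by definition of $\Gamma^{s_3}$), and reaches $t_2$ (along the path) as well as $t_1$ and $t_3$ through $\tail{e_1}\to\head{e_1}$ (using hypothesis 2(a)), realising the ``$\geq 2$ sources, all terminals'' form. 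In the cases $v\in\Gamma^{s_3}_{\tail{e_2}}\cup\Gamma^{s_2}_{\tail{e_2}}\cup\{\tail{e_2}\}$, the witness shifts to $\tail{e_2}$: the path supplies $s_1\to\tail{e_2}$ and hypothesis 2(b) supplies $s_2,s_3\to\tail{e_2}$, while $\tail{e_2}$ reaches $t_2,t_3$, triggering the ``all sources, $\geq 2$ terminals'' form. For $v\in\{\tail{e_1}\}\cup\Gamma^{\tail{e_1}}$ the witness is $\tail{e_1}$, which reaches $t_2$ via $v$ and $t_1,t_3$ via $e_1$ while being reached by $s_1,s_3$. For $v\in\Gamma^{\tail{e_2}}_{t_2,t_3}$ the witness is $v$ itself, reached by all three sources ($s_1$ via the path and $s_2,s_3$ via $\tail{e_2}$).

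Part (ii) is handled by the same template after the relabeling $s_1\leftrightarrow s_2$, $t_1\leftrightarrow t_2$, $e_1\leftrightarrow e_2$ together with the symmetric disconnection facts in hypothesis 2. Each of its seven component sets likewise reduces to an application of Lemma~\ref{lemma:5con} at either $v$, $\tail{e_1}$, or $\tail{e_2}$, contradicting Assumption~\ref{assumption:simplify}.

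Part (iii) is immediate: if a path in $\set{R}(s_3,t_3)$ shared a node $v$ with a path in $\set{P}(s_1,t_2)$ (respectively $\set{Q}(s_2,t_1)$), then $s_1\rightarrow v$ (respectively $s_2\rightarrow v$), contradicting Observation~\ref{observation:1.1}, which says no node lying on a path in $\set{R}(s_3,t_3)$ is reachable from $s_1$ or $s_2$. The main obstacle is not depth but breadth: identifying the correct witness node among $\{v,\tail{e_1},\tail{e_2}\}$ in each of the seven component cases of parts (i) and (ii); once that choice is made, each case reduces to a short chain of reachability implications using the disconnection properties of $e_1$ and $e_2$.
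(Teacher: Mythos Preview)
Your proposal is correct and is exactly the argument the paper leaves implicit: the paper states parts (i)--(ii) as consequences of Assumption~\ref{assumption:simplify} and part (iii) as a consequence of Observation~\ref{observation:1.1} without spelling out the case analysis, and your component-by-component reduction to Lemma~\ref{lemma:5con} is precisely what is intended.

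One small caution on your symmetry claim for part (ii): the relabeling $s_1\leftrightarrow s_2$, $t_1\leftrightarrow t_2$, $e_1\leftrightarrow e_2$ does not send the statement of (i) exactly onto (ii) as written in the paper, since the third component of the forbidden set in (ii) is $\Gamma^{s_1}_{\tail{e_2}}$ rather than $\Gamma^{s_1}_{\tail{e_1}}$. Your template still handles this component directly (if $v\in\Gamma^{s_1}_{\tail{e_2}}$ lies on an $(s_2,t_1)$ path, then $s_1,s_2\rightarrow v$ and $v\rightarrow t_1$ via the path and $v\rightarrow t_2,t_3$ via $\tail{e_2}$, so $v$ itself is the witness), but you should verify that case on its own rather than appeal to the symmetry.
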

\begin{figure*}[!t]
\centering
\subfigure[]{
\includegraphics[height=2.1in]{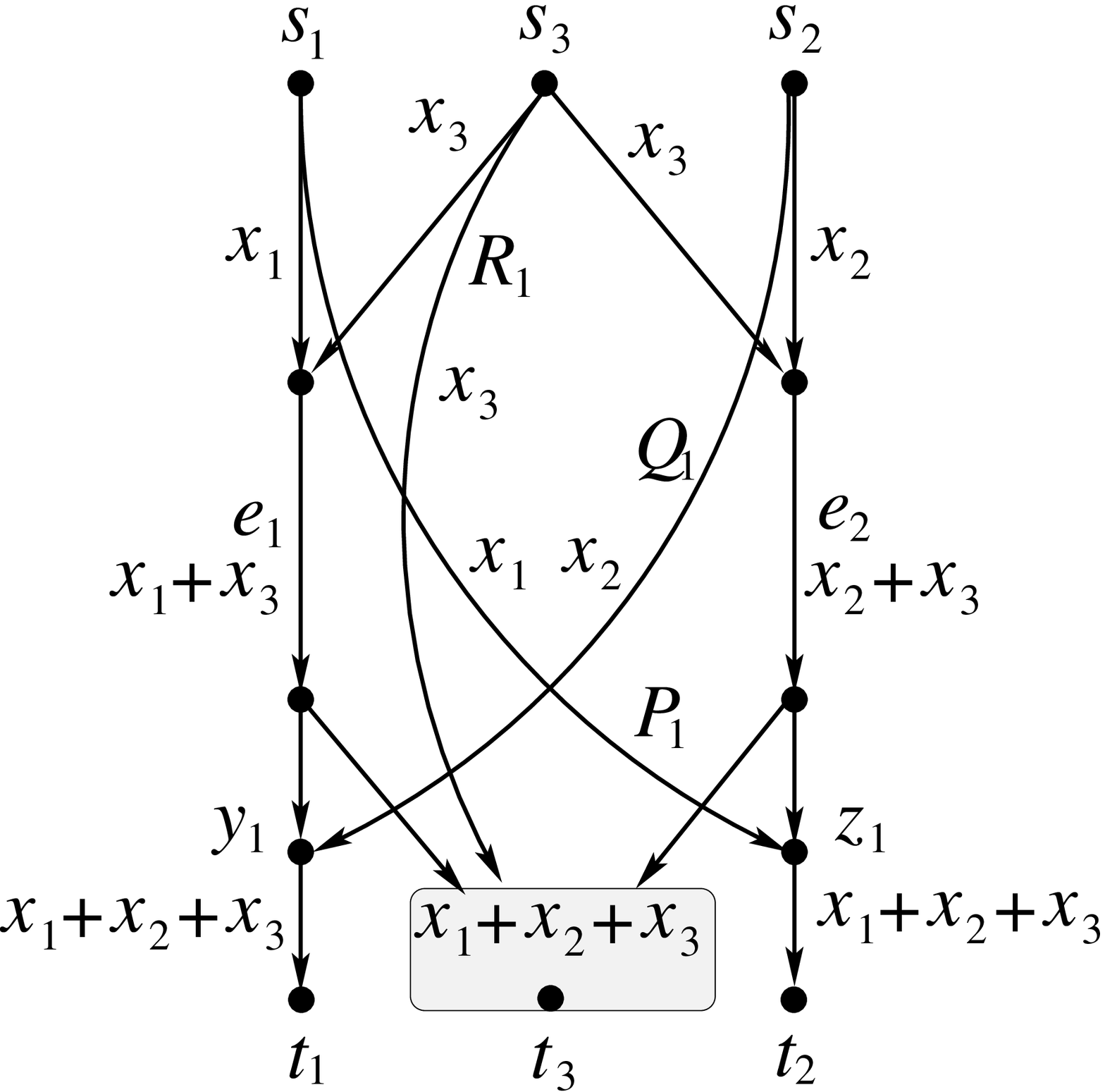}
\label{fig:not5}
}
\subfigure[]{
\includegraphics[height=2.1in]{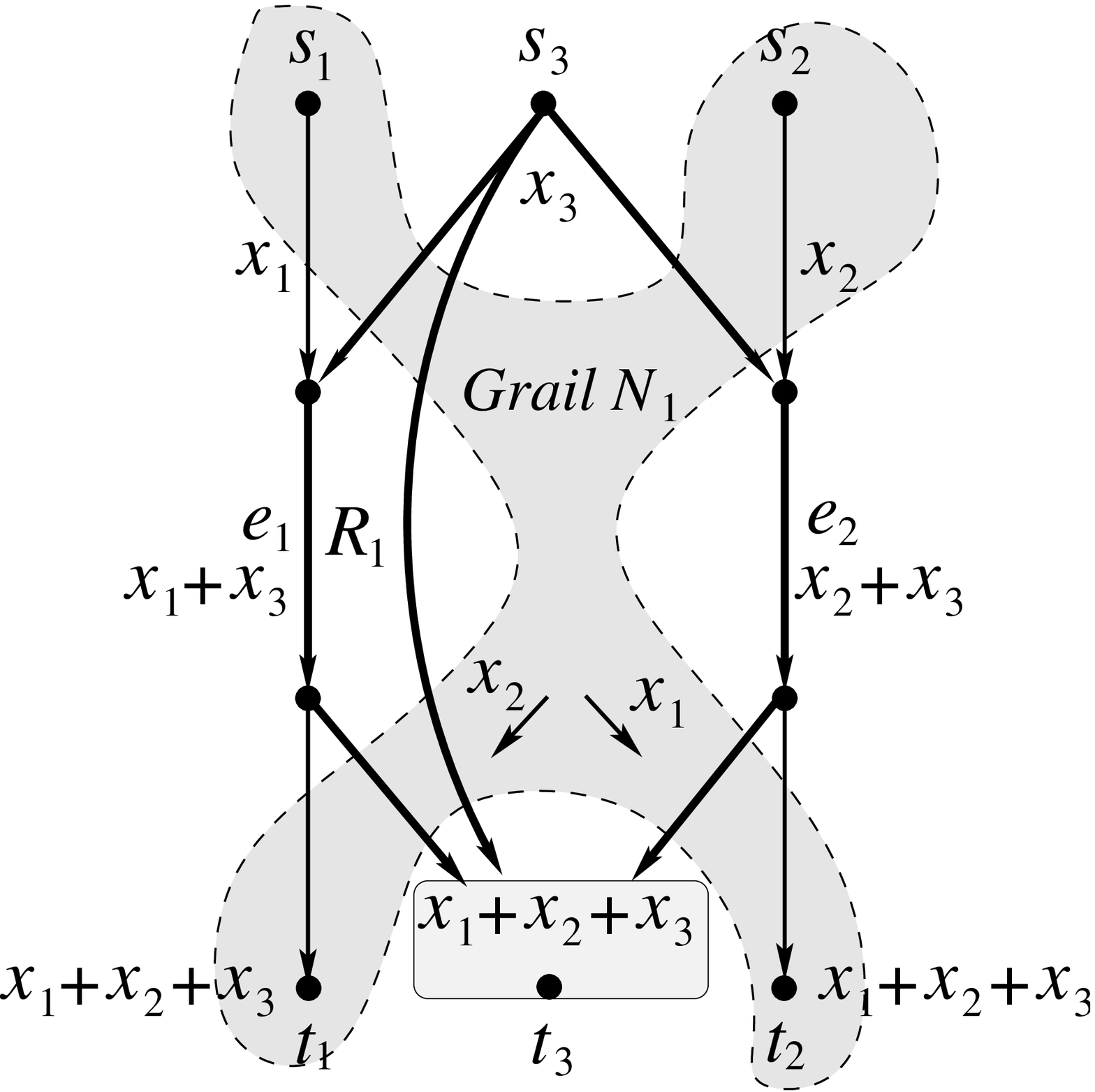}
\label{fig:nopair2}
}
\subfigure[]{
\includegraphics[height=2.1in]{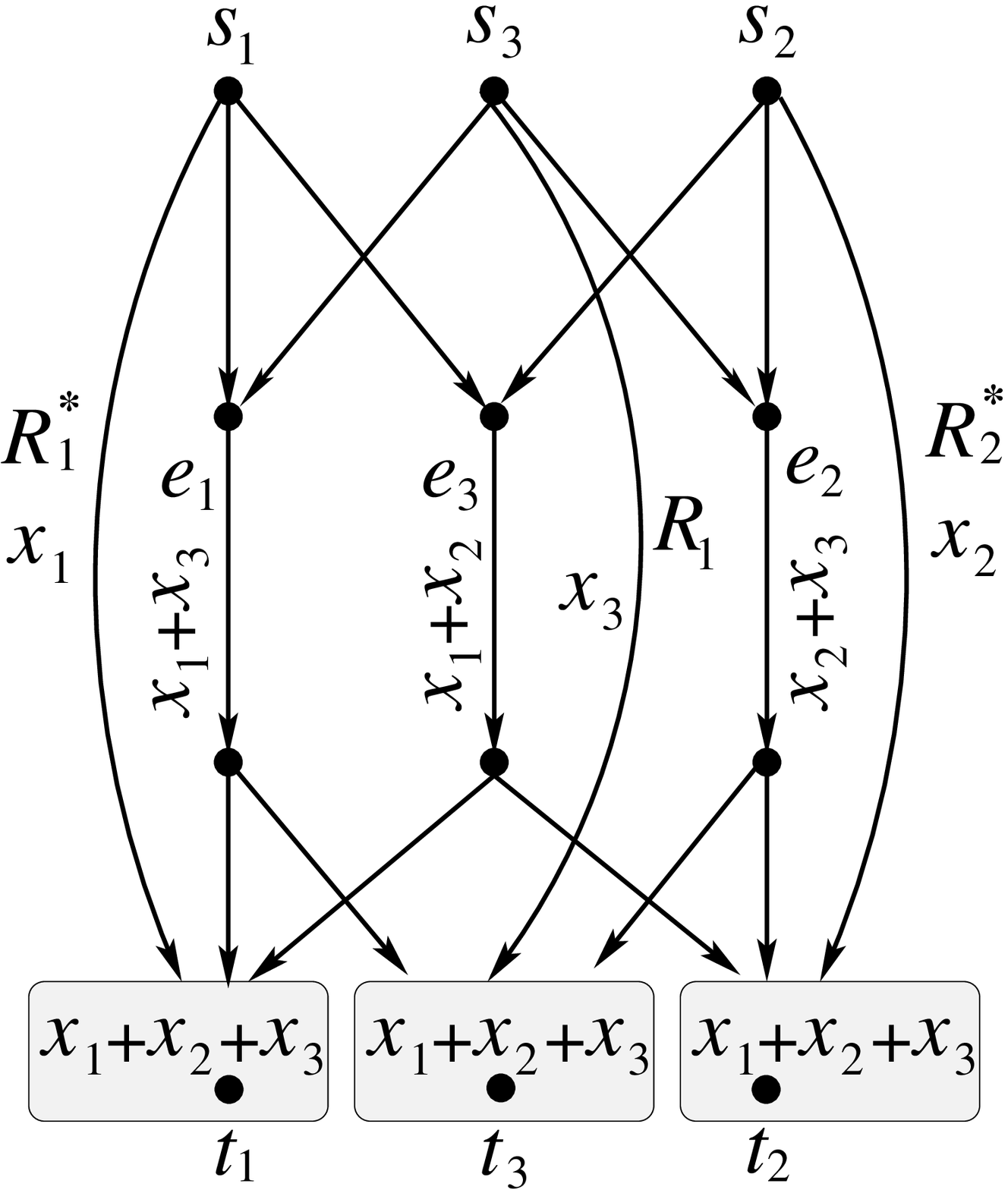}
\label{fig:not3}
}
\caption[]{The coding for the different cases under Lemma~\ref{lemma:nopair}.
The shaded rectangle containing a terminal
is drawn to mean that irrespective of the order in which the three incoming paths
meet, this terminal can always recover
$x_1+x_2+x_3$ by XOR coding.
}
\label{fig:nopair}
\end{figure*}
For any $P_i\in \set{P}(s_1,t_2)$ let $z_i$ be the first descendant of $e_2$ on $P_i$.
Similarly, for any $Q_j\in \set{Q}(s_2,t_1)$ let $y_j$ be the first descendant of $e_1$ on $Q_j$.
We consider two cases:

{\it Case 1: There exist $P_1\in \set{P}(s_1,t_2)$ and $Q_1\in\set{Q}(s_2,t_1)$
such that $P_1$ and $Q_1$ are edge-disjoint.}

In this case there exists a subnetwork shown in
Fig.~\ref{fig:not5} because, by Observation~\ref{observation:1'}, we have that (i)
$P_1$ does not share any node with the rest of the subnetwork
except on the $(s_1,\tail{e_1})$ path segment above $\tail{e_1}$ and
the  $(\head{e_2},t_2)$ path segment below $\head{e_2}$, and (ii) 
$Q_1$ does not share any node with the rest of the subnetwork
except on the $(s_2,\tail{e_2})$ path segment above $\tail{e_2}$ and
the  $(\head{e_1},t_1)$ path segment below $\head{e_1}$. The XOR coding scheme shown in Fig.~\ref{fig:not5}
completes the proof.
The shaded rectangle containing $t_3$
is drawn to mean that irrespective of the order in which the three incoming paths (carrying
$x_1+x_3$, $x_3$ and $x_2+x_3$) meet, $t_3$ can always recover
$x_1+x_2+x_3$ by XOR coding.

{\it Case 2: For any $P_i\in \set{P}(s_1,t_2)$ removing all the edges of $P_i(s_1,z_i)$
disconnects $(s_2,t_1)$.}

{\it Case 2.1: There does not exist a single edge which disconnects both $(s_1,t_2)$ and $(s_2,t_1)$.}

By Lemma~\ref{lemma:algo0}, we can use
a grail subnetwork $\set{N}_1$ to transmit $x_1$ to $t_2$ and $x_2$ to $t_1$.
This grail network $\set{N}_1$  and its coding is as shown in Fig.~\ref{fig:multigrail1} but with
$t_1$ and $t_2$ interchanged.
The situation then is as shown in Fig~\ref{fig:nopair2}. Here the details
of the grail $\net_1$ are suppressed for clarity and it is represented by
a shaded region. 
By Observation~\ref{observation:1.1},
$\set{N}_1$ is node-disjoint from $e_1, e_2,R_1$ and the $(s_3,\tail{e_1}),(s_3,\tail{e_2}),(\head{e_1},t_3)$
and $(\head{e_2},t_3)$ path segments (which are shown with thick edges in the figure).
The coding scheme (shown in the figure) where $e_1$ is used to communicate $x_1+x_3$ to $t_1,t_3$,
$e_2$ is used to communicate $x_2+x_3$ to $t_2,t_3$, $R_1$ is used to communicate $x_3$ to $t_3$
and the grail $\set{N}_1$ used to communicate $x_1$ to $t_2$ and $x_2$ to $t_1$ completes the proof for
this case.

{\it Case 2.2: There exists an edge $e_3$ which disconnects both $(s_1,t_2)$ and $(s_2,t_1)$.}

In this case we will show the existence of a subnetwork as shown in Fig.~\ref{fig:not3}.
It is easy to see by Assumption~\ref{assumption:simplify} that,
$e_1\nrightarrow e_3,e_2\nrightarrow e_3,e_3\nrightarrow e_1,e_3\nrightarrow e_2.$
Now, removing the pair $e_1,e_3$ does not disconnect $(s_1,t_1)$
since otherwise $\set{N}$ would satisfy the hypothesis of Theorem~\ref{theorem:mainres2}
for some labelling of the sources and terminals.
For the same reason, removing the pair $e_2,e_3$ does not disconnect $(s_2,t_2)$.
Hence there exists a $(s_1,t_1)$ path $R_1^*$ not containing $e_1$
or $e_3$ and a $(s_2,t_2)$ path $R_2^*$ not containing $e_2$ or $e_3$.

Because of the conditions that $e_1,e_2,e_3$ satisfy, we have
(i) there is no node $v$ on $R_1^*$ satisfying either $s_2\rightarrow v$, $s_3\rightarrow v$,
$v\rightarrow t_2$ or $v\rightarrow t_3$ and  (ii) there is no node $v$ on $R_2^*$ satisfying either $s_1\rightarrow v$, $s_3\rightarrow v$,
$v\rightarrow t_1$ or $v\rightarrow t_3$.

This and Observation~\ref{observation:1'} imply the existence of the subnetwork 
subnetwork shown in Fig.~\ref{fig:not3} such that

(i) $R_1^*$ does not share any node with the rest of the subnetwork
except on the $(s_1,\tail{e_1})$ path segment above $\tail{e_1}$,  the $(s_1,\tail{e_3})$ path segment above $\tail{e_3}$,
the $(\head{e_1},t_1)$ path segment below $\head{e_1}$, 
and the $(\head{e_3},t_1)$ path segment below $\head{e_3}$; 
and

(ii) $R_2^*$ does not share any node with the rest of the subnetwork
except on the $(s_2,\tail{e_2})$ path segment above $\tail{e_2}$,  the $(s_2,\tail{e_3})$ path segment above $\tail{e_3}$,
the  $(\head{e_2},t_2)$ path segment below $\head{e_1}$
and the  $(\head{e_3},t_2)$ path segment below $\head{e_3}$.

The XOR code shown in Fig.~\ref{fig:not3} completes the proof.

{\bf Proof of Lemma~\ref{lemma:suffcon}:}

We will only prove part A  of Lemma~\ref{lemma:suffcon}.
Since our proof is constructive, the proof of the other parts will follow
from the coding solutions offered in the proof of part A.

If $\kappa(\set{N})\geq 5$, $\set{N}$ is XOR solvable over any field
by Lemma~\ref{lemma:5con}. Hence in the remaining part of this proof,
we only consider networks with $\kappa(\set{N})=0,1,2$ and $4$
and prove Lemma~\ref{lemma:suffcon} for each value. In the light of Lemma~\ref{lemma:5con},
it is enough to prove Lemma~\ref{lemma:suffcon} for networks
satisfying Assumption~\ref{assumption:simplify}. Then if $\set{C}\neq \emptyset$, it only contains
maximum-disconnecting edges such that there is a path from exactly two sources to its tail
and there is a path from its head to exactly two terminals.

$ \bullet{}\boldsymbol{\kappa(\set{N})=0:}$ In this case,
there exist two edge-disjoint paths between
each source-terminal pair. The main result of~\cite{langberg3}
is that such a sum-network is solvable over fields of odd characteristic.
In the following, we present a significantly different proof which
also gives a stronger result that such a network is solvable over {\it any}
field by a {\it XOR code}.

We consider two cases depending on whether or not
$\mathscr{C} =\emptyset$:

\begin{claim}{Sum-networks with $\kappa=0$ and $\mathscr{C} =\emptyset$ are
XOR solvable over any field.}\label{claim:0class1}
\end{claim}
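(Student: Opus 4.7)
The plan is to reduce Claim~\ref{claim:0class1} to a direct instance of Lemma~\ref{lemma:con1}. By Definition~\ref{definition:kappa}, when $\kappa(\set{N}) = 0$ every edge is, by convention, a maximum-disconnecting edge. Consequently the hypothesis $\mathscr{C} = \emptyset$ translates into the statement that no edge $e$ of $\set{N}$ has both a path from at least two sources reaching $\tail{e}$ and a path from $\head{e}$ to at least two terminals. This is verbatim the hypothesis of Lemma~\ref{lemma:con1}, which already produces an XOR code over any field. So the claim follows with essentially no additional work beyond noting this equivalence of hypotheses.

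For concreteness, the coding scheme inherited from Lemma~\ref{lemma:con1} looks as follows. Pick any two terminals, say $t_1$ and $t_2$, and treat the resulting $3$-source $2$-terminal sub-problem via Lemma~\ref{lemma:R} to obtain an XOR code on some subnetwork $\set{N}_1$ that delivers $x_1 + x_2 + x_3$ to both $t_1$ and $t_2$. Independently, pick any $(s_1, t_3)$, $(s_2, t_3)$, $(s_3, t_3)$ paths and let $\set{N}_2$ be their union; terminal $t_3$ XORs its three incoming symbols. The only point to verify is that the two codes do not conflict on $\set{N}_1 \cap \set{N}_2$. Any shared edge $e$ reaches $t_3$ (through $\set{N}_2$) and at least one of $t_1, t_2$ (through $\set{N}_1$), so it has two terminals downstream; the hypothesis $\mathscr{C} = \emptyset$ then forces exactly one source to reach $\tail{e}$, and hence $e$ carries that single source's symbol under both codes.

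There is no real obstacle here — the claim is essentially a tautology once Lemma~\ref{lemma:con1} is in hand. Its value is structural rather than technical: isolating $\mathscr{C} = \emptyset$ from $\mathscr{C} \neq \emptyset$ cleanly sets up the remaining sub-case of the $\kappa(\set{N}) = 0$ portion of Lemma~\ref{lemma:suffcon}, where some edge genuinely does have two sources upstream and two terminals downstream and a separate argument (leveraging Lemma~\ref{lemma:general}) will be required.
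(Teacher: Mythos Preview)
Your proposal is correct and matches the paper's own proof, which is the single line ``The proof follows by Lemma~\ref{lemma:con1}.'' Your observation that $\kappa=0$ makes every edge maximum-disconnecting, so that $\mathscr{C}=\emptyset$ becomes exactly the hypothesis of Lemma~\ref{lemma:con1}, is precisely the intended reduction; the additional recap of the Lemma~\ref{lemma:con1} code is accurate but not needed for the claim itself.
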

{\it Proof:}
The proof follows by Lemma~\ref{lemma:con1}.

\begin{claim}{Sum-networks with $\kappa=0$ and $\mathscr{C} \neq \emptyset$ are
XOR solvable over any field.}\label{claim:0class2}
\end{claim}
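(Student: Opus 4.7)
The plan is to deduce the claim as an immediate corollary of Lemma~\ref{lemma:general}. The key point is that the hypothesis $\kappa(\set{N})=0$ makes condition (b) of Lemma~\ref{lemma:general} automatic: by definition, no single edge disconnects any source-terminal pair when $\kappa=0$, so in particular no single edge can simultaneously disconnect the pair $(s_2,t_3),(s_3,t_1)$, nor the pair $(s_1,t_3),(s_3,t_2)$, for any labeling of the sources and terminals. Hence condition (b) of Lemma~\ref{lemma:general} holds vacuously.

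It remains to produce the edge needed in condition (a), and this is exactly what the hypothesis $\set{C}\neq\emptyset$ provides. I would pick any edge $e\in\set{C}$. Recall that we are working under Assumption~\ref{assumption:simplify}, and as noted in the paragraph immediately preceding this claim, under that assumption every edge of $\set{C}$ has paths from exactly two sources to $\tail{e}$ and paths from $\head{e}$ to exactly two terminals. Relabel the sources so that $s_1,s_2$ are the two sources with $\{s_1,s_2\}\rightarrow\tail{e}$, and relabel the terminals so that $t_1,t_2$ are the two terminals with $\head{e}\rightarrow\{t_1,t_2\}$. This gives $\{s_1,s_2\}\rightarrow e\rightarrow\{t_1,t_2\}$, which is condition (a) of Lemma~\ref{lemma:general}.

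Both hypotheses of Lemma~\ref{lemma:general} being satisfied with this labeling, the lemma yields a XOR coding solution valid over any field, completing the proof of the claim. There is no real obstacle here: all of the substantive case-analysis has already been absorbed into Lemma~\ref{lemma:general}, and the role of the assumption $\kappa(\set{N})=0$ is only to make condition (b) trivial, while the role of $\set{C}\neq\emptyset$ (together with Assumption~\ref{assumption:simplify}) is only to exhibit the edge required for condition (a).
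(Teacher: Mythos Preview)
Your proposal is correct and follows essentially the same approach as the paper: invoke Lemma~\ref{lemma:general}, with hypothesis (a) supplied by an edge in $\set{C}$ (after suitable relabeling) and hypothesis (b) holding vacuously since $\kappa=0$. Your appeal to Assumption~\ref{assumption:simplify} to get \emph{exactly} two sources and two terminals is harmless but unnecessary, as the definition of $\set{C}$ already guarantees \emph{at least} two on each side, which is all that condition (a) requires.
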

{\it Proof:}
The proof follows by Lemma~\ref{lemma:general} since hypothesis (a) of the lemma follows from $\set{C}\neq \emptyset$
with suitable labeling of the sources and the terminals and hypothesis (b)
follows from $\kappa =0$.

$ \bullet{}\boldsymbol{\kappa(\set{N})=1:}$ 
We consider two cases depending on whether or not
$\mathscr{C} =\emptyset$:

\begin{claim}{Sum-networks with $\kappa=1$ and $\mathscr{C} =\emptyset$ are
XOR solvable over any field.}\label{claim:1class1}
\end{claim}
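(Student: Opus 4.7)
The plan is to obtain this claim as an immediate consequence of the reduction lemma (Lemma~\ref{lemma:reduction}) combined with the $\kappa=0$ analysis that has just been completed in Claims~\ref{claim:0class1} and~\ref{claim:0class2}. Specifically, the hypotheses of the present claim match the hypotheses of Lemma~\ref{lemma:reduction} exactly: we have a connected 3s/3t sum-network with $\kappa(\set{N}) = 1 > 0$ and $\set{C} = \emptyset$. The XOR-solvability version of Lemma~\ref{lemma:reduction} then guarantees that $\set{N}$ is XOR solvable over a field $F$ provided that every 3s/3t sum-network with $\kappa < 1$, i.e., every such network with $\kappa = 0$, is XOR solvable over $F$.

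That inductive hypothesis has, however, been established immediately above: Claim~\ref{claim:0class1} (via Lemma~\ref{lemma:con1}) handles the sub-case $\kappa = 0$, $\set{C} = \emptyset$, and Claim~\ref{claim:0class2} (via Lemma~\ref{lemma:general}) handles the sub-case $\kappa = 0$, $\set{C} \neq \emptyset$; in both sub-cases the resulting network is XOR solvable over \emph{any} field. Chaining these with Lemma~\ref{lemma:reduction} therefore proves the claim with essentially no further work.

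There is no real combinatorial obstacle at this step: all the difficulty is absorbed into Lemma~\ref{lemma:reduction} (which carries out the parallel-edge reduction together with the reverse-network argument via Lemma~\ref{lemma:reverse}) and into the $\kappa=0$ coding constructions already given. The only thing to be careful about is the bookkeeping, namely that the hypothesis $\set{C} = \emptyset$ of the present claim is precisely what is needed to invoke Lemma~\ref{lemma:reduction}, and that the ``XOR solvable'' version of the reduction lemma passes through because the added parallel edges can be removed without introducing any non-$\pm 1$ coefficients into the resulting code.
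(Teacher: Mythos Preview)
Your proposal is correct and matches the paper's own proof essentially verbatim: the paper simply invokes Lemma~\ref{lemma:reduction} together with the already-established XOR solvability of all $\kappa=0$ networks (Claims~\ref{claim:0class1} and~\ref{claim:0class2}). The extra bookkeeping you spell out about the XOR version of the reduction and the split into the two $\kappa=0$ sub-cases is accurate and just makes explicit what the paper leaves implicit.
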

{\it Proof:}
The proof follows from Lemma~\ref{lemma:reduction} since we have proved that networks with $\kappa=0$
are XOR solvable over any field.

\begin{claim}\label{claim:k1}{Sum-networks with $\kappa=1$ and $\mathscr{C} \neq \emptyset$
are XOR solvable over any field.}\label{claim:1class2}
\end{claim}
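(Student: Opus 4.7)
The plan is to reduce the claim directly to Lemma~\ref{lemma:general} by exhibiting the hypothesized edge and labeling, mirroring the strategy used in the proof of Claim~\ref{claim:0class2}. Since $\mathscr{C}\neq\emptyset$, I would pick any edge $e\in\mathscr{C}$. By the definition of $\mathscr{C}$ there are at least two sources with a path to $\tail{e}$ and at least two terminals reachable from $\head{e}$; label two such sources $s_1,s_2$ and two such terminals $t_1,t_2$, and let $s_3$, $t_3$ be the remaining source and terminal. With this labeling, $e$ satisfies $\{s_1,s_2\}\rightarrow e\rightarrow\{t_1,t_2\}$, which is exactly hypothesis (a) of Lemma~\ref{lemma:general}.

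For hypothesis (b) of Lemma~\ref{lemma:general}, I need that no single edge simultaneously disconnects both $(s_2,t_3)$ and $(s_3,t_1)$, and similarly that no single edge simultaneously disconnects both $(s_1,t_3)$ and $(s_3,t_2)$. But $\kappa(\set{N})=1$ means, by Definition~\ref{definition:kappa}, that no single edge of $\set{N}$ disconnects more than one source-terminal pair. Hence hypothesis (b) is vacuously satisfied for \emph{any} labeling, in particular for the one chosen above. Lemma~\ref{lemma:general} then yields that $\set{N}$ is XOR solvable over any field, proving the claim.

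There is no substantive obstacle here: the heavy lifting has already been absorbed into Lemma~\ref{lemma:general}. The only things to double-check are that membership $e\in\mathscr{C}$ genuinely supplies the two distinct sources and two distinct terminals required to instantiate the labels $s_1,s_2,t_1,t_2$ (which is immediate from the definition of $\mathscr{C}$), and that the condition $\kappa(\set{N})=1$ is strong enough to kill hypothesis (b) regardless of which labeling we chose (which is also immediate). Thus the case $\kappa=1$, $\mathscr{C}\neq\emptyset$ is a direct corollary of Lemma~\ref{lemma:general}, just as the case $\kappa=0$, $\mathscr{C}\neq\emptyset$ was in Claim~\ref{claim:0class2}.
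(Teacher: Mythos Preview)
Your proposal is correct and matches the paper's own proof essentially verbatim: the paper also invokes Lemma~\ref{lemma:general}, obtaining hypothesis~(a) from $\mathscr{C}\neq\emptyset$ via a suitable labeling and hypothesis~(b) from $\kappa=1$.
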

{\it Proof:}
The proof follows by Lemma~\ref{lemma:general} since hypothesis (a) of the lemma follows from $\set{C}\neq \emptyset$
with suitable labeling of the sources and the terminals and hypothesis (b)
follows from $\kappa =1$.

$ \bullet{}\boldsymbol{\kappa(\set{N})=2:}$ 
We will prove that networks with $\kappa=2$ are linearly solvable.
But solvability in this case is not necessarily over $F_2$, and even over other
fields, the solvability may not be by XOR coding.
Specifically, this happens only in Case 1.2 under $\set{C} \neq \emptyset$
(Claim~\ref{claim:noXOR}).

We consider two cases depending on whether or not $\mathscr{C} =\emptyset$:

\begin{claim}{Sum-networks with $\kappa=2$ and $\mathscr{C} =\emptyset$ are
XOR solvable over any field.}\label{claim:2class1}
\end{claim}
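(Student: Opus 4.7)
The proof plan is essentially a direct invocation of Lemma~\ref{lemma:reduction}, completely parallel to the proof of Claim~\ref{claim:1class1}. Since we are in the case $\kappa(\set{N}) = 2 > 0$ and $\set{C} = \emptyset$, Lemma~\ref{lemma:reduction} reduces XOR solvability of $\set{N}$ to XOR solvability of all connected $3s/3t$ sum-networks with $\kappa < 2$.

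First I would note that the reduction hypothesis has already been discharged in the preceding claims: Claims~\ref{claim:0class1} and~\ref{claim:0class2} together establish XOR solvability over any field for all connected $3s/3t$ sum-networks with $\kappa = 0$ (splitting on whether $\set{C}$ is empty), and Claims~\ref{claim:1class1} and~\ref{claim:1class2} do the same for $\kappa = 1$. Hence every $3s/3t$ sum-network with $\kappa \in \{0,1\}$ is XOR solvable over any field $F$. Applying Lemma~\ref{lemma:reduction} with $k=2$ then yields that $\set{N}$ itself is XOR solvable over $F$, and since $F$ was arbitrary, the claim follows.

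There is no real obstacle here beyond checking that the bookkeeping matches the hypotheses of Lemma~\ref{lemma:reduction} verbatim, namely $\kappa(\set{N}) > 0$ (true since $\kappa = 2$) and $\set{C} = \emptyset$ (assumed in the claim). Both conditions hold by hypothesis, so the argument is a single sentence invoking the reduction lemma. This is why the case $\set{C} = \emptyset$ is treated uniformly for all nonzero $\kappa$: the structural work is done once in Lemma~\ref{lemma:reduction}, and each individual claim only needs the inductive fact that smaller $\kappa$ is already handled, which is the case once Claims~\ref{claim:0class1} through~\ref{claim:1class2} are in place.
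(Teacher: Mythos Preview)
Your proposal is correct and matches the paper's own proof essentially verbatim: the paper simply invokes Lemma~\ref{lemma:reduction} together with the already-established XOR solvability for $\kappa=0,1$. There is nothing to add.
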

{\it Proof:} The proof follows from Lemma~\ref{lemma:reduction}
since we have proved that networks with $\kappa=0,1$
are XOR solvable over any field.

\begin{claim}{Sum-networks with $\kappa=2$ and $\mathscr{C} \neq \emptyset$
are linearly solvable over all fields except possibly $F_2$.}\label{claim:2class2}
\label{claim:noXOR}
\end{claim}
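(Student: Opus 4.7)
The plan is to use Lemma~\ref{lemma:general} as the primary structural tool, since the hypothesis $\set{C} \neq \emptyset$ automatically supplies an edge satisfying condition (a) of that lemma. First I would fix any $e \in \set{C}$. Under Assumption~\ref{assumption:simplify}, exactly two sources reach $\tail{e}$ and exactly two terminals are reachable from $\head{e}$, because a third such connection at either end would make $\tail{e}$ or $\head{e}$ satisfy the hypothesis of Lemma~\ref{lemma:5con}. After relabeling I may assume $\{s_1,s_2\} \rightarrow e \rightarrow \{t_1,t_2\}$ with $s_3 \nrightarrow e$ and $e \nrightarrow t_3$, so condition (a) of Lemma~\ref{lemma:general} holds for this labeling.

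The main case split is on whether condition (b) of Lemma~\ref{lemma:general} also holds. If no single edge simultaneously disconnects $(s_2,t_3)$ and $(s_3,t_1)$, and no single edge simultaneously disconnects $(s_1,t_3)$ and $(s_3,t_2)$, then Lemma~\ref{lemma:general} applies and yields an XOR solution over every field (in particular, linear solvability over every field). Otherwise, by the symmetry between the two forbidden patterns, I may assume an edge $e_1$ disconnects $(s_1,t_3)$ and $(s_3,t_2)$. Since $\kappa(\set{N}) = 2$, $e_1$ disconnects exactly these two pairs, and the same Assumption~\ref{assumption:simplify} argument applied to $\tail{e_1}$ and $\head{e_1}$ forces $\{s_1,s_3\} \rightarrow e_1 \rightarrow \{t_2,t_3\}$ with $s_2 \nrightarrow e_1$ and $e_1 \nrightarrow t_1$; in particular $e_1 \in \set{C}$ as well.

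From here I would run a sub-case analysis driven by the interaction between $e$ and $e_1$ (whether $e \rightarrow e_1$, $e_1 \rightarrow e$, or neither) and by whether a second obstructing edge $e_2$ disconnecting the dual cross-pair $(s_2,t_3),(s_3,t_1)$ also exists. In the benign sub-cases (grouped as Case 1.1) I would reapply Lemma~\ref{lemma:general} under a permuted labeling of the sources and terminals, or invoke Lemma~\ref{lemma:nopair} directly, each of which produces an XOR solution over every field. The genuinely hard sub-case (Case 1.2) is the one in which two crossing obstructions plus a surviving $(s_3,t_3)$ bottleneck reproduce the structure of Fig.~\ref{fig:x3}; here no XOR code can succeed, yet a general scalar linear code does succeed over any field other than $F_2$.

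The hard part is thus the explicit construction for Case 1.2. I would exhibit a scalar linear code in which the edge $e$ carries $x_1 + x_2$, the edge $e_1$ carries $x_1 + x_3$, an analogous bottleneck carries $x_2 + x_3$, and one critical internal edge carries the combination $x_1 + x_2 + 2x_3$ (or a symmetric variant). Each terminal then decodes its demanded sum by solving a $2 \times 2$ system whose determinant is $\pm 2$, which is a unit in $F$ precisely when $\mathrm{char}(F) \neq 2$. Verifying that every terminal can indeed assemble these messages along edge-disjoint paths dictated by the sub-case geometry, and that no additional single-edge obstruction forces the sub-case into the non-solvable configuration of Theorem~\ref{theorem:mainres}, is the main technical burden; failure of this verification would precisely place the network into the scope of Theorem~\ref{theorem:mainres}, contradicting the hypotheses chosen at the outset.
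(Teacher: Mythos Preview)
Your opening is fine: fixing $e\in\set{C}$, normalizing to $\{s_1,s_2\}\to e\to\{t_1,t_2\}$, and splitting on whether condition~(b) of Lemma~\ref{lemma:general} holds is essentially what the paper does (the paper first splits by \emph{which} two pairs $e$ disconnects and then, in its Case~1, by condition~(b); its Cases~2 and~3 are exactly the reductions you describe as ``reapply Lemma~\ref{lemma:general} under a permuted labeling'' and the reverse-network trick). So the benign branch is handled the same way.

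The gap is in your hard sub-case. The paper does \emph{not} build an ad-hoc code with a determinant~$\pm2$. Instead, once condition~(b) fails it observes that $e$ and the obstructing edge $e'$, after a relabeling, already satisfy conditions~1, 2 and~4 of Theorem~\ref{theorem:mainres2} (since $\kappa=2$ forces ``exactly'' two pairs, and Assumption~\ref{assumption:simplify} gives condition~4). It then makes a single clean dichotomy: either \emph{some} edge pair satisfies all four conditions of Theorem~\ref{theorem:mainres2}, in which case the independently proved sufficiency part of Theorem~\ref{theorem:mainres2} supplies the linear code over every $F\neq F_2$; or no such pair exists, in which case $e,e'$ feed directly into Lemma~\ref{lemma:nopair} and the network is XOR solvable. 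Your proposed construction (three bottlenecks carrying $x_1+x_2$, $x_1+x_3$, $x_2+x_3$ plus an internal edge carrying $x_1+x_2+2x_3$) does not match that construction, and you have not shown that the required paths to each terminal exist in a general $\kappa=2$ network---the paper's Theorem~\ref{theorem:mainres2} argument carefully builds a separate subnetwork $\set{N}^*$ from four specific paths and uses coefficients $\alpha,\beta,\gamma$ with $\alpha\in F\setminus\{0,1\}$, not just~$2$.

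Finally, your fallback (``failure of this verification would place the network into the scope of Theorem~\ref{theorem:mainres}, contradicting the hypotheses'') is circular: the necessity direction of Theorem~\ref{theorem:mainres} is proved \emph{using} Lemma~\ref{lemma:suffcon}, which in turn rests on the present claim (see the dependency graph in Fig.~\ref{fig:dependence}). You cannot invoke it here. The correct escape hatch is exactly the sufficiency of Theorem~\ref{theorem:mainres2}, which is proved independently of this claim.
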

{\it Proof:}
 Let us assume that $e \in \mathscr{C}$ and that
$\{s_1,s_2\} \rightarrow e \rightarrow \{t_1,t_2\}$. 

Now $e$ can disconnect two source-terminal pairs in essentially three different
ways.
It can disconnect either $(s_1,t_1)$ and $(s_2,t_2)$, $(s_1,t_1)$ and $(s_2,t_1)$, or $(s_1,t_1)$ and $(s_1,t_2)$.
We consider each case in turn.

\textit{Case 1: Edge $e$ disconnects $(s_1,t_1)$ and $(s_2,t_2)$.}

\textit{Case 1.1: There does not exist an edge disconnecting
either $(s_2,t_3)$ and $(s_3,t_1)$; or $(s_1,t_3)$ and $(s_3,t_2)$.}

The network is easily seen to satisfy the hypotheses of Lemma~\ref{lemma:general} and
is thus XOR solvable over any field.

\textit{Case 1.2: There exists an edge $e'$ disconnecting
either $(s_2,t_3)$ and $(s_3,t_1)$; or $(s_1,t_3)$ and $(s_3,t_2)$.}

{\it Note:} The networks shown in Fig.~\ref{fig:nonbinarysolvable} fall under this case, and
they are not solvable over $F_2$ but linearly solvable over any other
field though not by XOR coding~\cite{RaiD:09b}.

In this case, $e,e'$ satisfy conditions $1,2$ in Theorem~\ref{theorem:mainres2}
for a suitable relabeling of the sources since $\kappa=2$,
and condition $4$ by Assumption~\ref{assumption:simplify}.
Thus, if there does not exist an edge pair satisfying
all the four conditions in Theorem~\ref{theorem:mainres2}, then by taking $e,e'$ as
$e_1,e_2$ under a suitable relabeling of the sources and terminals, the hypotheses of
Lemma~\ref{lemma:nopair} are satisfied, and the network is XOR solvable over any field.
If on the other hand, there exists an edge pair satisfying
all the four conditions in Theorem~\ref{theorem:mainres2}, then by the sufficiency part of Theorem~\ref{theorem:mainres2}
(proved independently later, see the dependency graph in Fig.~\ref{fig:dependence}), the network is not solvable over $F_2$ but
linearly solvable over all other fields.

\textit{Case 2: Edge $e$ disconnects $(s_1,t_1)$ and $(s_2,t_1)$}

We assume that
there is no maximum-disconnecting edge disconnecting $(s_i,t_j)$
and $(s_{i'},t_{j'})$, $i\neq i', j\neq j'$ for any labeling of the sources and terminals, since otherwise such an edge satisfies {\it Case 1} and the
proof follows from the proof of that case.
So there does not exist another edge $e'$ which disconnects $(s_2,t_3)$ and $(s_3,t_1)$;
or $(s_1,t_3)$ and $(s_3,t_2)$. Thus the hypotheses of Lemma~\ref{lemma:general} are satisfied.
and the network is linearly solvable over any field using XOR coding
in this case.

\textit{Case 3: Edge $e$ disconnects $(s_1,t_1)$ and $(s_1,t_2)$}

In this case, the reverse network falls under {\it Case 2}, and is
thus XOR solvable over any field.
The proof then follows by Lemma~\ref{lemma:reverse}.

$ \bullet{}\boldsymbol{\kappa(\set{N})=4:}$ We consider three cases:

\textit{Case 1: A maximum-disconnecting edge $e$ disconnects
one terminal from all the sources, and another
terminal from a single source.}

This case can not occur under Assumption~\ref{assumption:simplify},
since then that edge would be connected to all the sources and two
terminals.

\textit{Case 2: A maximum-disconnecting edge $e$ disconnects
one source from all the terminals, and another
source from a single terminal.}

This case can not occur under Assumption~\ref{assumption:simplify},
since then that edge would be connected to two sources and all the three
terminals.

\textit{Case 3: A maximum-disconnecting edge $e$ disconnects
both $s_1$ and $s_2$ from both $t_1$ and $t_2$.}

Under the given labeling of the sources and terminals, 
hypothesis (a) of Lemma~\ref{lemma:general} is satisfied by $e$. We now show that
hypothesis (b) is also satisfied. If this was not so, let there exist an edge $e'$
which disconnects (w.l.o.g.) $(s_2,t_3)$ and $(s_3,t_1)$. This implies that $s_2\rightarrow e'\rightarrow t_1$
i.e. $s_2\rightarrow t_1$. But since $e$ disconnects $(s_2,t_1)$, we must have $e\rightarrow e'$ (or $e'\rightarrow e$).
Then it is easy to check that 
$s_1,s_2 \rightarrow \head{e} \rightarrow t_1,t_2,t_3$
(or resp. $s_2,s_3 \rightarrow \head{e'} \rightarrow t_1,t_2,t_3$),
which violates Assumption~\ref{assumption:simplify}.
Hence hypothesis (b) of Lemma~\ref{lemma:general} is also satisfied.
Then the network is XOR solvable over any field by the lemma.

{\bf Proof of Lemma~\ref{lemma:converse1}}

Part A of the lemma follows from the sufficiency part of Theorem~\ref{theorem:mainres2}
(proved later independently, see Fig.~\ref{fig:dependence}).

Now we prove part B of the lemma.
The following observation sums up some of the things we have already proved, and which we
will draw upon.
\begin{observation}\label{observation:f}
(i)
If $\kappa(\net ) = 0,1,$ or $\geq 4$, then by Lemma~\ref{lemma:suffcon}, the network
is XOR solvable over any field. 

(ii) If $\kappa(\net ) = 2$, only under Case 1.2 of $\kappa = 2$
in the proof of Lemma~\ref{lemma:suffcon}, the network may not be solvable over $F_2$.
In all the other cases, the network is XOR solvable over any field.
Further, networks under Case 1.2 were shown to be of two types, namely, they either satisfied
the hypothesis of Theorem~\ref{theorem:mainres2} and were not solvable over $F_2$
but linearly solvable over other fields,
or, they did not satisfy the hypothesis of Theorem~\ref{theorem:mainres2}
and were XOR solvable over any field. 
\end{observation}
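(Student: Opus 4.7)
The plan is to read Observation~\ref{observation:f} as a summary of what the proof of Lemma~\ref{lemma:suffcon} has already established, and to verify both parts by tracing through that proof, rather than by introducing new arguments.

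For part (i), I would invoke Lemma~\ref{lemma:suffcon} directly. Its part C says that whenever $\kappa(\set{N}) \notin \{2,3\}$, the network is XOR solvable over every field. The values $\kappa = 0, 1,$ and $\kappa \geq 4$ all satisfy this hypothesis, so the conclusion of part (i) is immediate. There is essentially nothing to prove beyond noting this inclusion.

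For part (ii), I would revisit the $\kappa(\set{N}) = 2$ portion of the proof of Lemma~\ref{lemma:suffcon} and check that the only place where XOR solvability over every field failed was Case 1.2 under $\set{C} \neq \emptyset$. Concretely: Claim~\ref{claim:2class1} handles $\set{C} = \emptyset$ via Lemma~\ref{lemma:reduction}, giving XOR solvability over all fields; under $\set{C} \neq \emptyset$, Cases 2 and 3 reduce to Lemma~\ref{lemma:general} (directly or through reverse-network considerations via Lemma~\ref{lemma:reverse}) and thus give XOR solvability over any field; Case 1.1 also invokes Lemma~\ref{lemma:general} to the same effect. Only Case 1.2 produces the exceptional behavior, and the dichotomy there is exactly the one recorded in the observation: if some edge pair meets all four conditions of Theorem~\ref{theorem:mainres2}, then by the sufficiency part of that theorem the network is non-solvable over $F_2$ but linearly solvable over every other field; otherwise the edges $e, e'$ identified in Case 1.2 satisfy the hypotheses of Lemma~\ref{lemma:nopair}, yielding XOR solvability over any field.

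The only real obstacle is making sure the case split in Lemma~\ref{lemma:suffcon}'s proof is exhaustive under $\kappa = 2$ and that in each of Cases 1.1, 2, 3 the invocation of Lemma~\ref{lemma:general} is genuinely justified. For this I would re-examine the way relabeling of sources/terminals is used in each case, and confirm that hypothesis (b) of Lemma~\ref{lemma:general} is indeed forced by the $\kappa = 2$ assumption together with Assumption~\ref{assumption:simplify}, so that XOR-over-any-field is the conclusion everywhere except in Case 1.2. Once this bookkeeping is done, the observation is nothing more than a clean restatement of what has already been proved.
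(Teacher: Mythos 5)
Your proposal is correct and matches the paper's treatment: the paper states Observation~\ref{observation:f} without a separate proof, explicitly as a recap of what the proof of Lemma~\ref{lemma:suffcon} already established, and your verification (part C of Lemma~\ref{lemma:suffcon} for $\kappa \in \{0,1\}\cup\{4,5,\ldots\}$, and the case-by-case trace of the $\kappa=2$ argument isolating Case 1.2 under $\set{C}\neq\emptyset$ with its two-way dichotomy via Theorem~\ref{theorem:mainres2} and Lemma~\ref{lemma:nopair}) is exactly the intended justification. No gaps.
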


For networks in part B of the lemma,
consider the network $\net^*$ obtained by
adding parallel edges to the edges in $\set{C}$ as in the proof of Lemma~\ref{lemma:reduction}.
Now, $\kappa (\net^* ) \leq 2$, so, as inferred in Observation~\ref{observation:f},
 $\net^*$ is either XOR
solvable over all fields or $\net^*$ is a $\kappa=2$ network
having an edge pair satisfying conditions 1-4 of Theorem~\ref{theorem:mainres2}
and thus is not solvable over $F_2$ but linearly solvable over other fields.
We will show that $\set{N}^*$ does not contain such an edge pair
and, thus, is XOR solvable over all fields.
Then, as was shown in the proof of Lemma~\ref{lemma:reduction}, $\net$ too will be XOR
solvable over all fields, thereby proving the lemma.

Suppose this was not true, and $\set{N}^*$ has $e_1,e_2$ satisfying conditions 1-4
of Theorem~\ref{theorem:mainres2}.
Since we are only adding parallel edges in the process of constructing $\set{N}^*$ from $\net$,
$e_1,e_2$ satisfy conditions 3 and 4 of Theorem~\ref{theorem:mainres2} in $\net$ itself.
Further, the only way $e_1$ (resp. $e_2$) could dissatisfy condition 1 (resp. 2) of Theorem~\ref{theorem:mainres2} in $\net$ itself
would be if it also disconnected an additional source-terminal pair in $\net$, which will mean that $e_1$ (resp. $e_2$) $\in \set{C}$, thus contradicting
$\set{C} = \emptyset$. Thus $e_1,e_2$ satisfy conditions 1-4
of Theorem~\ref{theorem:mainres2} in $\net$ itself. This gives a contradiction.

{\bf Proof of Lemma~\ref{lemma:converse2}:}

Let $\set{N}$ be a nonsolvable network with $\kappa(\net)=3$ containing an edge
$e_2$ satisfying conditions 3 and 4 of Theorem~\ref{theorem:mainres}.
We will show that the statement of Lemma~\ref{lemma:converse2} holds for $\net$.
We assume that $\set{N}$ satisfies Assumption~\ref{assumption:simplify}
since otherwise the network is XOR solvable over all fields
by Lemma~\ref{lemma:5con}.

We note that there can not exist an edge $e'$ which disconnects $(s_1,t_2)$ and $(s_2,t_1)$ (or $(s_1, t_3)$ and $(s_2, t_1)$, or $(s_1, t_2)$ and
$(s_3, t_1)$ - though we are not using these)
since otherwise $s_2\rightarrow e'\rightarrow t_2$ implies $e'\rightarrow e_2$
or $e_2\rightarrow e'$ (since $e_2$ disconnects $(s_2,t_2)$), any of which contradicts Assumption~\ref{assumption:simplify}.

Then, if there does not exist an edge $e_1$ that satisfies
conditions $1$ and $2$ of Theorem~\ref{theorem:mainres},
$\set{N}$ satisfies the hypotheses in Lemma~\ref{lemma:general} for a suitable relabeling of the
sources and terminals
and so is XOR solvable over any field. So an edge $e_1$
satisfying conditions $1$ and $2$ of Theorem~\ref{theorem:mainres} exists.

Further, $s_1\nrightarrow{\head{e_2}}$ $\&$ $s_1\nrightarrow{\tail{e_2}}$
(by Assumption~\ref{assumption:simplify}) $\&$ $s_1\rightarrow{\tail{e_1}}$ $\Rightarrow$ $e_1\nrightarrow{e_2}$.
Similarly, $s_2\nrightarrow{\head{e_1}}$ $\&$ $s_2\nrightarrow{\tail{e_1}}$ $\&$ $s_2\rightarrow{\tail{e_2}}$ $\Rightarrow$ $e_2\nrightarrow{e_1}$.  
 
Thus, we have so far proved that there exists $e_1,e_2$ satisfying
conditions 1,2,3,4, and 6 of Theorem~\ref{theorem:mainres}.

Now, we argue that
the only way an edge $e'$ disconnecting exactly
$(s_i,t_j)$ and $(s_{i'},t_{j'})$, $i\neq i',j\neq j'$
can exist in this network is if it disconnects
exactly $(s_1,t_3)$ and $(s_3,t_1)$. We prove this using a sequence of
four steps in the following.

(i) If $\{i,i'\} = \{j,j'\} = \{1,3\}$ is not true, then
$e_2 \rightarrow e'$ or $e' \rightarrow e_2$. This is because,
if, w.l.o.g., $i= 2$, since one of $j,j'$ has to be $2$ or $3$,
there is a path from $s_2$ to $t_2$ or $t_3$ via $e'$.
Since $e_2$ disconnects $(s_2,t_2)$ and $(s_2,t_3)$, $e'$
must be an ancestor or descendant of $e_2$.

(ii) If $e'$ is a descendant or ancestor of $e_2$,
then $\{i,i'\} = \{j,j'\} = \{2,3\}$ by Assumption~\ref{assumption:simplify}.

(iii) If $e'$ is a descendant or ancestor of $e_2$,
then $e'$ can not disconnect exactly the source-terminal pairs
$(s_2,t_3)$ and $(s_3,t_2)$ (or
$(s_2,t_2)$ and $(s_3,t_3)$ -this case follows similarly, and will
not be elaborated). Otherwise, after removing $e'$, there exists a $(s_2,t_2)$ path $P$ containing $e_2$. If $e'$ is an ancestor
of $e_2$, then after removing $e'$, $P(s_2:\tail{e_2})$ concatenated
with any $(\tail{e_2}, t_3)$ path gives a $(s_2,t_3)$ path not
containing $e'$ and thus gives a contradiction. Similarly we can reach
a contradiction if $e'$ is a descendant of $e_2$.

(iv) If $\{i,i'\} = \{j,j'\} = \{1,3\}$, then $e'$ can not disconnect
exactly $(s_1,t_1)$ and $(s_3,t_3)$. This follows by similar arguments
as in (iii) above by considering the edge $e_1$.

This proves that the only way an edge $e'$ disconnecting exactly
$(s_i,t_j)$ and $(s_{i'},t_{j'})$, $i\neq i',j\neq j'$
can exist in this network is if it disconnects
exactly $(s_1,t_3)$ and $(s_3,t_1)$. 
Thus an edge pair satisfying conditions 1-4 of Theorem~\ref{theorem:mainres2}
does not exist in this network.
So, the network satisfies
condition 1 of Lemma~\ref{lemma:nopair}.
Now, if $(e_1,e_2)$ do not satisfy condition 5 of Theorem~\ref{theorem:mainres},
then since they satisfy conditions 1,2,3,4, and 6 of Theorem~\ref{theorem:mainres}, they also
satisfy condition 2 in Lemma~\ref{lemma:nopair}.
Thus both the conditions in Lemma~\ref{lemma:nopair} are satisfied and thus the network
is XOR solvable over all fields. Hence for nonsolvability, 
$(e_1,e_2)$ satisfy condition 5 of Theorem~\ref{theorem:mainres} as well.

{\bf Proof of Theorem~\ref{theorem:mainres}:}

The major part of the proof is in two parts. In the Sufficiency part,
we show that once conditions 1)-6) in part A are satisfied by
two edges in
a connected network, the network has the capacity $2/3$ and is thus not
solvable. This will prove part D as well as the sufficiency of part A
of the theorem. In the necessity part of the proof, we will show
that if a pair of edges satisfying conditions 1)-6) in part A does not exist,
then the network is linearly solvable over any field.
Parts B and C will be proved in parallel.
The reader may find it useful to keep Fig.~\ref{fig:non-solvable} in mind while going through the proof.

{{\bf Sufficiency:}}

We will show that the capacity of a connected $3s/3t$ sum-network
satisfying the hypothesis of Theorem ~\ref{theorem:mainres} is $2/3$ and
thus is not solvable.
It was proved in~\cite[Theorem 4]{RaiDS:itw10} using time-sharing arguments that the
coding capacity of any connected $3s/3t$ network
is at least $2/3$. Hence all we need to prove is that the capacity of a network
satisfying conditions $1-6$ of Theorem~\ref{theorem:mainres} is $\leq{2/3}$.
The idea of this proof is similar to that of \cite[Theorem 6]{RaiDS:itw10}.
Suppose there is a $(k,n)$ fractional coding solution for the network.
That is, the messages at the sources are $x_1,x_2,x_3 \in F^k$,
the terminals recover the sum $x_1+x_2+x_3 \in F^k$, and each edge in the
network carries an element from $F^n$. We allow non-linear coding.
Let the symbols transmitted over
$e_1$ and $e_2$ be denoted by $Y_{e_1}$ and $Y_{e_2}$ respectively.
Let us add an edge $e_2^*$ from $\head{e_2}$ to $t_3$ and an
edge $e_1^*$ from $\head{e_1}$ to $t_3$. Clearly this new network $\set{N}^*$
also satisfies the six conditions of Theorem~\ref{theorem:mainres}
and is stronger than $\set{N}$. We show that the capacity of $\set{N}^*$
itself is bounded by $2/3$.

Since $\set{N}^*$ is a connected sum-network and
satisfies the hypothesis of Theorem~\ref{theorem:mainres},  

1. Conditions $1,2 \Rightarrow \{s_1,s_3\} \rightarrow \tail{e_1}$ and
$\head{e_1} \rightarrow \{t_1,t_3\}$.

2. Condition $3,4 \Rightarrow \{s_2,s_3\} \rightarrow \tail{e_2}$ and
$\head{e_2} \rightarrow \{t_2,t_3\}$.

3. Conditions $1,3,6 \Rightarrow s_1 \nrightarrow \tail{e_2},s_2 \nrightarrow \tail{e_1}$.

By statement 3 above, $Y_{e_1}$
is only a function of $x_1$ and $x_3$, but not of $x_2$; and $Y_{e_2}$ is only a function of
$x_2$ and $x_3$, but not of $x_1$. Let us denote
them as $Y_{e_1} = \phi(x_1, x_3)$ and $Y_{e_2} = \psi(x_2, x_3)$.

\begin{claim}\label{claim:fone} (i) $\phi(x_1,x_3)$ is a 1-1 function
of $x_3$ for a fixed value of $x_1$ and a 1-1 function of
$x_1$ for a fixed value of $x_3$.
(ii) $\psi(x_2,x_3)$ is a 1-1 function
of $x_2$ for a fixed value of $x_3$ and a 1-1 function of $x_3$ for
a fixed value of $x_2$. 
\end{claim}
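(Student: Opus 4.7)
My plan is to apply the standard network-coding cut argument at each of the three terminals and then read off the four one-to-one assertions from the resulting functional identities. The claim's setup already gives $Y_{e_1} = \phi(x_1,x_3)$ and $Y_{e_2} = \psi(x_2,x_3)$, so what remains is to exploit the decoding constraints at $t_1$, $t_2$, and $t_3$.

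Three cut statements are key: by condition~2, $\{e_1\}$ separates $s_3$ from $t_1$; by condition~4, $\{e_2\}$ separates $\{s_2,s_3\}$ from $t_2$; and by conditions~1, 3, 5 combined, $\{e_1,e_2\}$ separates all three sources from $t_3$ in $\set{N}$, a property which should extend to $\set{N}^*$ since $\set{N}^*$ is claimed to still satisfy the same six conditions. The standard cut lemma then yields functions $G_1, G_2, G_3$ with
\[
G_1(\phi(x_1,x_3),\, x_1,\, x_2) \;=\; x_1+x_2+x_3,
\]
\[
G_2(\psi(x_2,x_3),\, x_1) \;=\; x_1+x_2+x_3,
\]
\[
G_3(\phi(x_1,x_3),\, \psi(x_2,x_3)) \;=\; x_1+x_2+x_3.
\]

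From $G_1$, fixing $x_1,x_2$ and varying $x_3$ makes the right side run injectively, forcing $x_3 \mapsto \phi(x_1,x_3)$ to be one-to-one and giving one half of (i). From $G_2$, for any fixed $x_1$ the value $\psi(x_2,x_3)$ determines $x_2+x_3$, so $\psi(x_2,x_3)=\psi(x_2',x_3')$ forces $x_2+x_3=x_2'+x_3'$; specializing to $x_2=x_2'$ yields one-to-one-ness in $x_3$ for fixed $x_2$, and specializing to $x_3=x_3'$ yields one-to-one-ness in $x_2$ for fixed $x_3$, covering all of (ii). From $G_3$, if $\phi(x_1,x_3)=\phi(x_1',x_3)$ for some $x_1 \neq x_1'$, then for any $x_2$ both arguments of $G_3$ coincide, giving $x_1+x_2+x_3 = x_1'+x_2+x_3$, a contradiction; this completes (i).

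The step I expect to be most delicate is justifying the cut at $t_3$ in $\set{N}^*$. Since the added edges $e_1^*$ and $e_2^*$ reach $t_3$ while bypassing $\{e_1,e_2\}$, it is not immediate that $\mathrm{info}(t_3)$ in $\set{N}^*$ is a function of $(Y_{e_1},Y_{e_2})$ alone: one must verify that no additional source information is smuggled into $\head{e_1}$ or $\head{e_2}$ through other incoming edges. Resolving this calls on the paper's claim that $\set{N}^*$ inherits all six conditions, whose consequences include $s_1 \nrightarrow \head{e_1}$ in $\set{N}-\{e_1\}$ and $s_2 \nrightarrow \head{e_2}$ in $\set{N}-\{e_2\}$, combined with the $s_3$-disconnections already available from conditions~2 and~4; these together pin every message reaching $\head{e_1}$ or $\head{e_2}$ to a function of $(Y_{e_1},Y_{e_2})$. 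Once that cut statement is in place, the four deductions above are mechanical.
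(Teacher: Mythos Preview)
Your argument is correct and rests on the same cut-at-a-terminal idea as the paper, but the paper organizes it more economically. For each of the four injectivity statements the paper uses a \emph{single}-edge cut: condition~2 at $t_1$ gives $\phi$ one-to-one in $x_3$; condition~1 at $t_3$ (all $(s_1,t_3)$ paths pass through $e_1$) gives $\phi$ one-to-one in $x_1$; and by symmetry conditions~3 and~4 handle the two halves of~(ii). In particular, the paper never needs the two-edge cut $\{e_1,e_2\}$ at $t_3$ for this claim, so your $G_3$ and the attendant worry about whether the cut survives in $\set{N}^*$ are avoidable here: condition~1 alone already forces $x_1$'s entire influence on $t_3$ to factor through $Y_{e_1}$. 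Your use of $G_2$ to extract both halves of~(ii) at once from condition~4 is a legitimate shortcut, and the paper in fact makes the same observation (that $\psi$ determines $x_2+x_3$) in the very next claim.
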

\begin{proof} We prove (i). The proof of (ii) is similar.

Since $t_1$ can recover $x_1+x_2+x_3$, for any fixed
values of $x_1$ and $x_2$, the set of messages received by the
terminal $t_1$ is a 1-1 function of $x_3$ as $x_1+x_2+x_3$
is a 1-1 function of $x_3$ for fixed $x_1$ and $x_2$.
But by condition $2$ of Theorem \ref{theorem:mainres},
all $(s_3,t_1)$ paths pass through $e_1$. Hence
$\phi(x_1,x_3)$ is a 1-1 function of $x_3$ for a fixed value of $x_1$.

Similarly, since $t_3$ can recover $x_1+x_2+x_3$, for any
fixed values of $x_2$ and $x_3$, the set of messages received by
the terminal $t_3$ is a 1-1 function of $x_1$ as $x_1+x_2+x_3$
is a 1-1 function of $x_1$ for fixed $x_2$ and $x_3$. But by condition
$1$ of Theorem \ref{theorem:mainres}, all $(s_1,t_3)$
paths pass through $e_1$. Hence $\phi(x_1,x_3)$
is a 1-1 function of $x_1$ for a fixed value of $x_3$.
\end{proof}

\begin{claim}\label{claim:recall} In $\set{N}^*$ the node $t_3$ can
 recover $x_1,x_2$ and $x_3$.
\end{claim}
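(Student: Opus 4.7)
The plan is to show that at $t_3$ in $\set{N}^*$ three pieces of information become available---the direct copies of $Y_{e_1}$ and $Y_{e_2}$ carried by the added edges $e_1^*$ and $e_2^*$, and the decoded sum $x_1+x_2+x_3$ which $t_3$ must still recover as a terminal of $\set{N}^*$---and that these together let $t_3$ peel off $x_1$ first, then $x_3$, and finally $x_2$.

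The key intermediate step is to show that $Y_{e_2}$ alone already pins down $x_2+x_3$. For this I would look at the terminal $t_2$ in the underlying network $\set{N}$. Condition~4 of Theorem~\ref{theorem:mainres} says that every path from $s_2$ or $s_3$ to $t_2$ must cross $e_2$. A short structural argument then shows that, for every edge $f$ incoming to $t_2$ with $f\neq e_2$, every path from $s_2$ (or $s_3$) to $\tail{f}$ is itself forced through $e_2$---otherwise concatenation with $f$ would produce an $s_2$-to-$t_2$ (resp.\ $s_3$-to-$t_2$) path avoiding $e_2$. Unwinding the code definition, this implies that the message on each such $f$ depends on $(x_2,x_3)$ only through $Y_{e_2}$, so the entire observation of $t_2$ is a function of $(x_1,Y_{e_2})$. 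Since $t_2$ decodes the sum, we can write $x_1+x_2+x_3 = h_2(x_1,Y_{e_2})$ for all triples, and if two pairs $(x_2,x_3)$ and $(x_2',x_3')$ yield the same value of $Y_{e_2}$, then fixing any $x_1$ and equating the right-hand sides forces $x_2+x_3 = x_2'+x_3'$. Hence there is a function $\tilde h_2$ with $x_2+x_3 = \tilde h_2(Y_{e_2})$.

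Given this, the decoding at $t_3$ is immediate. Using the directly-received $Y_{e_2}$, $t_3$ computes $x_2+x_3 = \tilde h_2(Y_{e_2})$, and subtracts from the decodable sum $x_1+x_2+x_3$ to recover $x_1$. Claim~\ref{claim:fone}(i), applied to the now-known $x_1$ together with $Y_{e_1}=\phi(x_1,x_3)$, recovers $x_3$ using the 1-1-ness of $\phi(x_1,\cdot)$. Finally Claim~\ref{claim:fone}(ii), applied to $x_3$ and $Y_{e_2}=\psi(x_2,x_3)$, recovers $x_2$ using the 1-1-ness of $\psi(\cdot,x_3)$.

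The main obstacle is the structural factorization underlying the intermediate claim---cleanly justifying that the $(x_2,x_3)$-dependence of every edge incoming to $t_2$ truly funnels through $Y_{e_2}$, so that $h_2$ has only $(x_1,Y_{e_2})$ as its arguments. Once that is in place, the rest of the proof is a single algebraic subtraction together with two appeals to Claim~\ref{claim:fone}.
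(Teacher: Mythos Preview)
Your proposal is correct and follows essentially the same route as the paper: show that $Y_{e_2}=\psi(x_2,x_3)$ determines $x_2+x_3$ (the paper phrases this as ``$\psi$ is a 1-1 function of $x_2+x_3$'', invoking condition~4 directly without spelling out the cut argument you give), subtract from the sum at $t_3$ to get $x_1$, then invert $\phi(x_1,\cdot)$ and $\psi(\cdot,x_3)$ via Claim~\ref{claim:fone}. Your structural justification that $t_2$'s entire observation is a function of $(x_1,Y_{e_2})$ is exactly what the paper's one-line appeal to condition~4 is hiding, so you have simply made that step explicit.
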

\begin{proof} For a fixed $x_1$, $x_1+x_2+x_3$ is a $1-1$ function
of $x_2+x_3$. Since $t_2$ recovers $x_1+x_2+x_3$, by condition 4,
it implies that $\psi(x_2,x_3)$ is a $1-1$ function of $x_2+x_3$.
But since $t_3$ also gets $\psi(x_2,x_3)$ via
$e_2^*$, it can also recover $x_2+x_3$. Then by subtracting this from
$x_1+x_2+x_3$, $t_3$ can get $x_1$. Then using $x_1$ and $\phi(x_1,x_3)$, which it gets
via $e_1^*$ and which is a 1-1 function of $x_3$ for fixed $x_1$,
$t_3$ can recover $x_3$.
As $\psi(x_2,x_3)$ is a 1-1 function of $x_2$ for a fixed $x_3$,
$t_3$ can recover $x_2$. Hence $t_3$ can recover $x_1,x_2$ and $x_3$.
\end{proof}
 Now $(x_1, x_2, x_3)$ takes $|F|^{3k}$ possible values.
 On the other hand, by conditions $1,3$ and $5$
 of Theorem \ref{theorem:mainres}, $\{(e_1), (e_2)\}$ is a cut
 between $\{s_1,s_2,s_3\}$ and $t_3$ (even in $\set{N}^*$),
 and this cut can carry at most $|F|^{2n}$
 possible different message-pairs.
So $|F|^{2n} \geq |F|^{3k} \Rightarrow k/n \leq 2/3$. Thus
the capacity of $\set{N}^*$  and hence of $\set{N}$ is bounded by
$2/3$. As this rate is achievable in $\set{N}$, the capacity of
$\set{N}$ is exactly $2/3$.

{\bf Necessity:}

In this part, we will show that if a network does not satisfy the
conditions 1)-6) in part A of the theorem, then the network is
solvable. Parts B and C of the theorem will also be proved in parallel.
In light of Lemma~\ref{lemma:suffcon} and Lemma~\ref{lemma:converse1},
we only need to concern ourselves with networks $\net$ having $\kappa(\net)=3$
and $\set{C}\neq\emptyset$. In light of Lemma~\ref{lemma:5con}, we can also additionally
assume that $\net$ satisfies Assumption~\ref{assumption:simplify}.
Then $\net$ contains an edge $e_2$ satisfying conditions $3,4$ of Theorem~\ref{theorem:mainres}
for suitable labeling of the sources and the terminals. The desired result then follows
from Lemma~\ref{lemma:converse2}.

{\bf Proof of Theorem~\ref{theorem:mainres2}:}

The reader may find it useful to keep Fig.~\ref{fig:nonbinarysolvable}
in mind while going through the proof.

{\bf Sufficiency:}

Let $\set{N}$ satisfy the hypothesis of Theorem~\ref{theorem:mainres2}.

{\it Part 1: Non-solvability of $\set{N}$ over $F_2$.}

Let the symbols transmitted over
$e_1$ and $e_2$ be denoted by $Y_{e_1}$ and $Y_{e_2}$ respectively.
By arguments similar to those given in the proof of
Sufficiency of Theorem~\ref{theorem:mainres}, one can show that
$Y_{e_1}$ is a function of only $x_1$ and $x_3$ and
$Y_{e_2}$ is a function of only $x_2$ and $x_3$.
Let us call them $f(x_1,x_3)$ and $g(x_2,x_3)$ respectively.
\begin{claim}\label{claim:fone2} (i) $f(x_1,x_3)$ is a 1-1 function of $x_3$
for a fixed value of $x_1$ and a 1-1 function of
$x_1$ for a fixed value of $x_3$.
(ii) $g(x_2,x_3)$ is a 1-1 function
of $x_2$ for a fixed value of $x_3$ and a 1-1 function of $x_3$ for
a fixed value of $x_2$. 
\end{claim}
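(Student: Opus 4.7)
The plan is to mimic the proof of Claim~\ref{claim:fone} essentially verbatim, since the structural facts it exploits about $e_1$ and $e_2$ reappear as hypotheses~1 and 2 of Theorem~\ref{theorem:mainres2}. The only input I need from the paragraph immediately preceding the claim is that $Y_{e_1}=f(x_1,x_3)$ depends only on $(x_1,x_3)$ and $Y_{e_2}=g(x_2,x_3)$ depends only on $(x_2,x_3)$; everything else follows from a short decoding-at-a-terminal argument.

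For part (i), I would first fix $x_1,x_2\in F^k$ and vary $x_3$. Since $t_1$ must decode $x_1+x_2+x_3$, and this sum is a 1--1 function of $x_3$ once $x_1$ and $x_2$ are fixed, the tuple of messages arriving at $t_1$ must change with $x_3$. By condition~1 of Theorem~\ref{theorem:mainres2}, every $(s_3,t_1)$ path passes through $e_1$, so the only quantity reaching $t_1$ that can depend on $x_3$ is $f(x_1,x_3)$. Hence $f(x_1,\cdot)$ must be 1--1 for each fixed $x_1$. The second half, that $f(\cdot,x_3)$ is 1--1 for each fixed $x_3$, follows by the symmetric dual: fix $x_2,x_3$, vary $x_1$, and apply the same reasoning at $t_3$, using that every $(s_1,t_3)$ path also passes through $e_1$ (again from condition~1).

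For part (ii), the argument is the mirror image with $e_2,g$ in place of $e_1,f$, and with terminals $t_2$ and $t_3$ playing the roles of $t_1$ and $t_3$. The two cut facts needed — that every $(s_3,t_2)$ path and every $(s_2,t_3)$ path crosses $e_2$ — are exactly condition~2 of Theorem~\ref{theorem:mainres2}.

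I do not anticipate a real obstacle here: once the dependency structure of $Y_{e_1}$ and $Y_{e_2}$ is in hand, the claim is mechanical, and it works equally well for non-linear $(k,n)$ fractional codes because the argument only uses the invertibility of $x_3 \mapsto x_1+x_2+x_3$ and $x_1 \mapsto x_1+x_2+x_3$ in the alphabet group. Notice that we only invoke the ``disconnects'' portion of conditions~1 and~2 (all relevant source--terminal paths go through the single edge) and not the ``exactly'' qualifier, so this claim would continue to hold under somewhat weaker hypotheses — consistent with the analogous Claim~\ref{claim:fone} for Theorem~\ref{theorem:mainres}.
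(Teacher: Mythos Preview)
Your proposal is correct and follows exactly the approach the paper takes: the paper's proof of Claim~\ref{claim:fone2} is literally ``The proof is the same as the one given for Claim~\ref{claim:fone},'' and you have reproduced that argument with the appropriate substitutions (using conditions~1 and~2 of Theorem~\ref{theorem:mainres2} in place of conditions~1--4 of Theorem~\ref{theorem:mainres}). Your additional remark that only the ``disconnects'' portion of the hypotheses is used, not the ``exactly'' qualifier, is also correct.
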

\textit{Proof:} 
The proof is the same as the one given for Claim~\ref{claim:fone}.

If $\set{N}$ is solvable over $F_2$,
then $f(x_1,x_3)$ is a function of $F_2\times F_2$ into $F_2$.
It is easy to verify that all such functions can be represented by
polynomials of the form
$\alpha x_1 + \beta x_3 + \gamma x_1x_3 + \delta$
for $\alpha, \beta, \gamma, \delta \in F_2$.
It is also easy to verify that the only such functions
that satisfy Claim~\ref{claim:fone2}(i) are of the form 
$x_1+x_3+\delta$ for $\delta \in F_2$. Hence w.l.o.g., we assume that 
$f=x_1+x_3$. By similar arguments, we assume
$g=x_2+x_3$.

But by conditions (1-3) of Theorem~\ref{theorem:mainres2}, $\{e_1,e_2\}$ is a cut between $\{s_1,s_2,s_3\}$
and $t_3$. So $t_3$ can obtain $x_1+x_2+x_3$ only if for
some $\alpha, \beta, \gamma, \delta \in F_2$,
$\alpha f(x_1,x_3) + \beta g(x_2,x_3) + \gamma f(x_1,x_3)g(x_2,x_3) + \delta= x_1+x_2+x_3$
$\Rightarrow \alpha(x_1+x_3)+\beta(x_2+x_3) + \gamma(x_1+x_3)(x_2+x_3)+\delta = x_1+x_2+x_3$.
Now, substituting $x_1=x_2=x_3=0$ in this equation gives $\delta=0$
while substituting $x_1=x_2=x_3=1$ gives $\delta=1$
 --- a contradiction since $1 \neq 0$ in $F_2$.
Hence $\set{N}$ is not solvable over $F_2$.

{\it Part 2: Solvability of $\set{N}$ over all other fields.}

For this part let $F$ be any field except $F_2$.

Since $e_1$ does not disconnect $(s_1,t_1)$ and $e_2$ does not disconnect $(s_2,t_2)$, let

a) $Q_1$ be a $(s_1,t_1)$ path not containing $e_1$,

b) $Q_2$ be a $(s_2,t_2)$ path not containing $e_2$,

c) $R_1$ be a $(s_1,t_2)$ path and 

d) $R_2$ be a $(s_2,t_1)$ path. 

In this case there exists a subnetwork shown in Fig.~\ref{fig:thm2a}, where the shaded circular region
means that $Q_1,Q_2,R_1,R_2$ may share edges. They do not share
edges with other parts of the network shown in the figure.
This is because,

(i) By condition $1,4$ of Theorem~\ref{theorem:mainres2}, $Q_1$
does not contain $e_2$ or any node from the $(s_3,\tail{e_1})$,
$(s_3,\tail{e_2})$, $(s_2,\tail{e_2})$, $(\head{e_1},t_3)$ or $(\head{e_2},t_3)$ path segments. It
also does not contain $e_1$ by definition. 

(ii)
Similarly  by condition $2,4$
of Theorem~\ref{theorem:mainres2}, $Q_2$
does not contain $e_1$ or any node from the $(s_3,\tail{e_1})$,
$(s_3,\tail{e_2})$, $(s_1,\tail{e_1})$, $(\head{e_1},t_3)$ or $(\head{e_2},t_3)$ path segments.
It also does not contain $e_2$ by definition. 

(iii)
By condition $4$ of Theorem~\ref{theorem:mainres2},
$R_1$ or $R_2$ does not contain both $e_1$ and $e_2$. Then by condition $1$,
$R_1$ can not contain $e_2$ or a node from the $(s_3,\tail{e_2})$
or $(s_2,\tail{e_2})$ or $(\head{e_2},t_3)$,
and by condition $2$ it can not contain $e_1$ or a node from $(s_3,\tail{e_1})$
or $(\head{e_1},t_3)$.
Similarly by condition $2$,
$R_2$ can not contain $e_1$ or a node from $(s_3,\tail{e_1})$
or $(s_1,\tail{e_1})$ or $(\head{e_1},t_3)$,
and by condition $1$ it can not contain $e_2$ or a
node from $(s_3,\tail{e_2})$ or $(\head{e_2},t_3)$.

\begin{figure}[h]
\centering
\subfigure[]{
\includegraphics[height=2.5in]{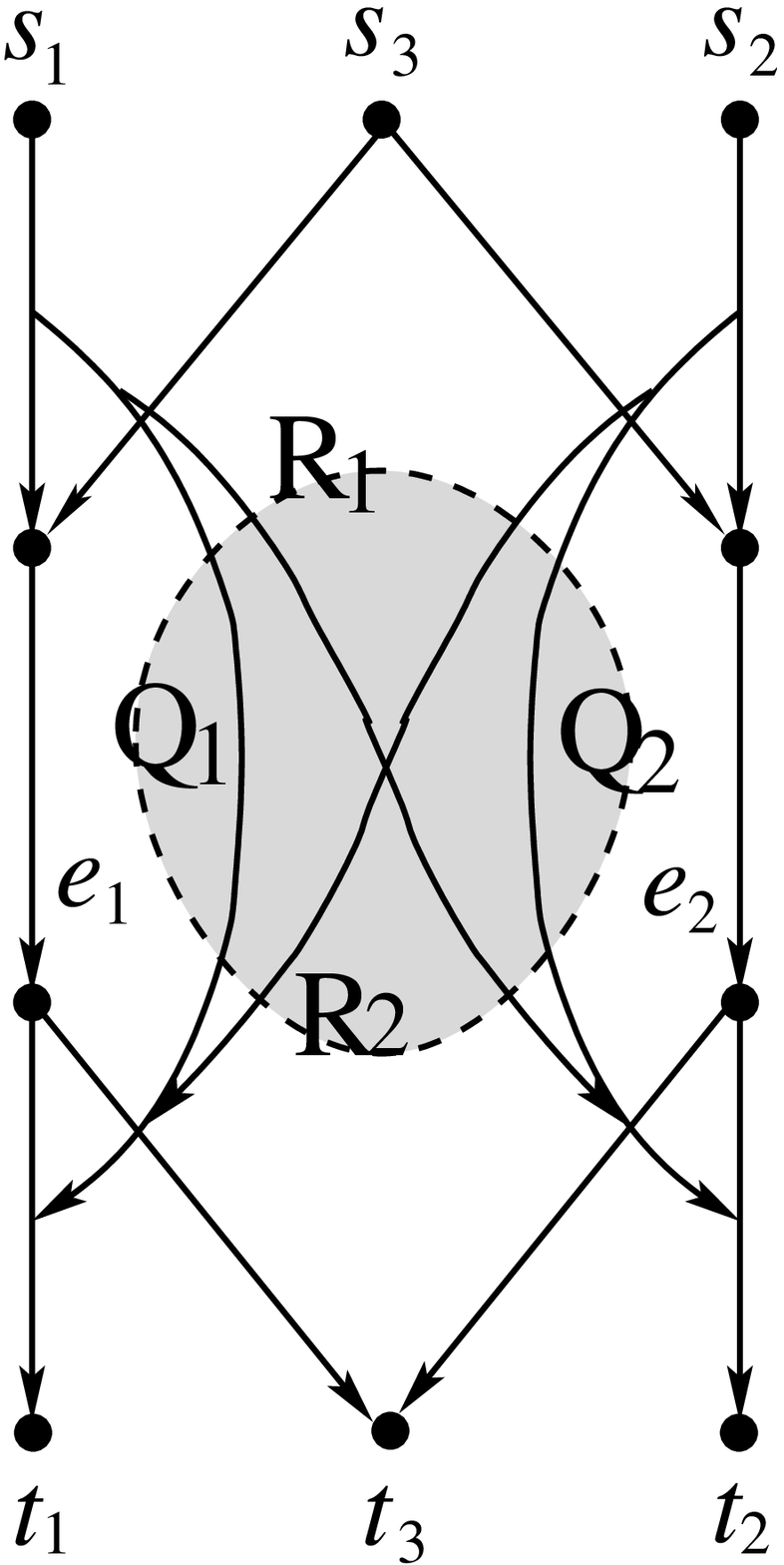}
\label{fig:thm2a}
}
\subfigure[]{
\includegraphics[height=2.5in]{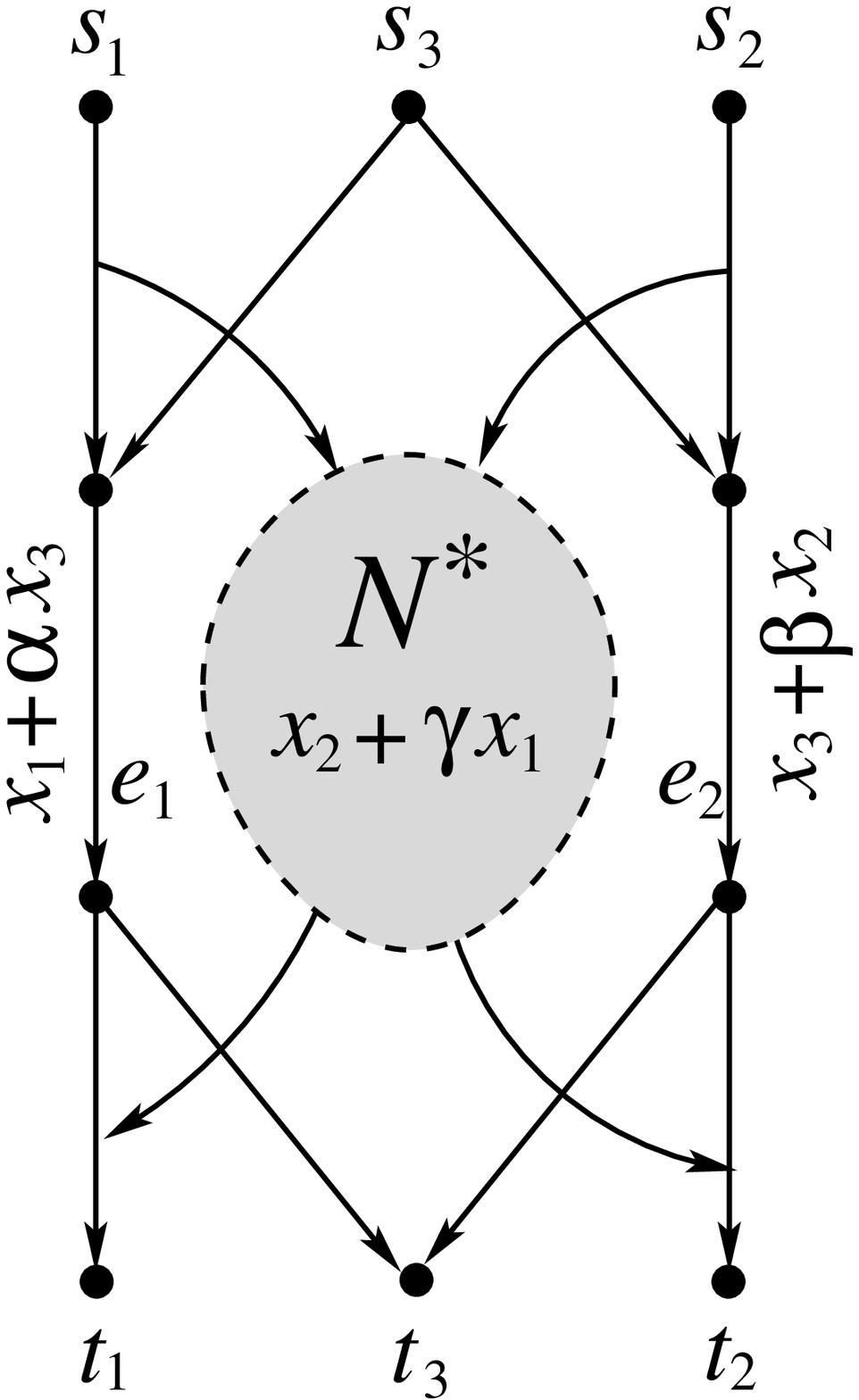}
\label{fig:thm2b}
}
\caption{The sub-network and the code over other fields}
\end{figure}

Now we give the coding scheme over any field $F\neq F_2$.
Let $\alpha \in F\backslash \{0,1\}$,
$\beta = (1-\alpha)^{-1}$ and $\gamma =1 - \alpha^{-1}$.
Consider the sub-network $\set{N}^*$ formed by considering all the nodes
of $\set{N}$,
but only those edges from $\set{N}$ belonging to the paths $Q_1,Q_2,R_1$ or $R_2$.
Due to the statements above, $\set{N}^*$ does not contain $e_1,e_2$ or edges from the $(s_3,\tail{e_1})$,
$(s_3,\tail{e_2})$, $(\head{e_1},t_3)$ or $(\head{e_2},t_3)$ path segments;
and further, in $\set{N}^*$,
$\{s_1,s_2\} \rightarrow \{t_1,t_2\}$. So using the edges in $\set{N}^*$,
and by pre-multiplying $x_1$ by $\gamma$ at $s_1$,
we can communicate $x_2+\gamma x_1$ to $t_1$ and $t_2$ by Lemma~\ref{lemma:R}.
Then in $\set{N}$ we can simultaneously transmit
$x_1+\alpha x_3$ on $e_1$ and $x_3+\beta x_2$ on $e_2$. This is shown in Fig.~\ref{fig:thm2b}.
By obtaining $x_1+\alpha x_3$
through $P(\head{e_1},t_3)$ and $x_3+\beta x_2$
through $P(\head{e_2},t_3)$, $t_3$ can
get $x_1+x_2+x_3 = (x_1+\alpha x_3) + \beta^{-1}(x_3+\beta x_2)$.
Using $x_2+\gamma x_1$ (received from $\set{N}^*$)
and $x_1+\alpha x_3$ (received from $e_1$),
$t_1$ can get $x_1+x_2+x_3 = (x_2+\gamma x_1)+\alpha^{-1}(x_1+\alpha x_3)$.
Similarly, $t_2$ can combine $x_2+\gamma x_1$ (received on
$\set{N}^*$) and $x_3+\beta x_2$ (received from $e_2$) to get
$x_1+x_2+x_3 = \gamma^{-1}(x_2+\gamma x_1) + (x_3+\beta x_2)$.

{\bf Necessity:}

We wish to show that networks which are not solvable over $F_2$ but solvable over all other fields
have an edge pair satisfying conditions 1-4 of Theorem~\ref{theorem:mainres2}.

From Lemma~\ref{lemma:suffcon} we see
that networks with
$\kappa =0,1,\geq 4$ are XOR solvable over all fields.
Lemma~\ref{lemma:converse2} shows that networks with $\kappa(\set{N}) =3$ and $\set{C} \neq \emptyset$
are either XOR
solvable over all fields or are nonsolvable.
Lemma~\ref{lemma:converse1} shows that a network with $\kappa(\set{N}) =3$ and $\set{C} = \emptyset$
either satisfies
the four conditions in Theorem~\ref{theorem:mainres2} and is nonsolvable over $F_2$ but solvable over other fields; or does not satisfy the conditions 
in Theorem~\ref{theorem:mainres2} and is XOR solvable over all fields.
The proof of Lemma~\ref{lemma:suffcon} for networks  with $\kappa =2$
shows that networks with $\kappa=2$ which are not solvable over $F_2$ but solvable over all other fields (some networks in {\it Case 1.2})
have an edge pair satisfying conditions 1-4 of Theorem~\ref{theorem:mainres2}.
Thus the necessity of Theorem~\ref{theorem:mainres2} holds for all $3s/3t$ networks. 

\section{Conclusion}\label{section:conclusion}
We presented a set of necessary and sufficient conditions for solvability
of a $3$-source $3$-terminal sum-network over any field $F$.
The conditions are the
same for all fields except $F_2$. This explains the existence of
the networks in Fig.~\ref{fig:nonbinarysolvable} which are not solvable
over $F_2$ though they are solvable over any other field. The conditions present
full insight into the case of $3$-sources and $3$-terminals - the smallest
sum-networks with non-trivial characterization.

The complexity of the proofs for this very specific case makes it
clear that stronger tools are needed to characterize the problem
for higher number of sources and terminals. However, this is not
surprising, considering that sum-networks have been proved~ \cite{RaiD:09c}
to be
equivalent to the multiple-unicast networks as a class of problems.
Even for multiple-unicast networks, explicit
characterization of solvable networks is not available. Except for
the double-unicast problem~\cite{wang2,shenvi2}, only cut based necessary conditions~\cite{yan1}
are known to the best of our knowledge.
It is fair to expect that tools developed to analyze/characterize
will have strong relation with each other for these two classes of
problems.
\section{Acknowledgement}\label{section:acknowledgement}
This work was supported in part by Bharti Centre for Communication at
IIT Bombay and a project from the Department of Science and Technology
(DST), India.


\begin{thebibliography}{10}

\bibitem{ahlswede1}
R.~Ahlswede, N.~Cai, S.-Y.~R. Li, and R.~W. Yeung, ``Network information
  flow,'' {\em IEEE Trans. Inform. Theory}, vol.~46, no.~4, pp.~1204--1216,
  2000.

\bibitem{li1}
S.-Y.~R. Li, R.~W. Yeung, and N.~Cai, ``Linear network coding,'' {\em IEEE
  Trans. Inform. Theory}, vol.~49, no.~2, pp.~371--381, 2003.

\bibitem{koetter2}
R.~Koetter and M.~M\'{e}dard, ``An algebraic approach to network coding,'' {\em
  IEEE/ACM Transactions on Networking}, vol.~11, no.~5, pp.~782--795, 2003.

\bibitem{jaggi1}
S.~Jaggi, P.~Sanders, P.~A. Chou, M.~Effros, S.~Egner, K.~Jain, and
  L.~Tolhuizen, ``Polynomial time algorithms for multicast network code
  construction,'' {\em IEEE Trans. Inform. Theory}, vol.~51, no.~6,
  pp.~1973--1982, 2005.

\bibitem{ho1}
T.~Ho, R.~Koetter, M.~M\'{e}dard, M.~Effros, J.~Shi, and D.~Karger, ``A random
  linear network coding approach to multicast,'' {\em IEEE Trans. Inform.
  Theory}, vol.~52, no.~10, pp.~4413--4430, 2006.

\bibitem{yeung1}
R.~Yeung, {\em Information Theory and Network Coding}.
\newblock Springer, 2008.

\bibitem{korner1}
J.~Korner and K.~Marton, ``How to encode the modulo-two sum of binary
  sources,'' {\em IEEE Trans. Inform. Theory}, vol.~25, no.~2, pp.~219--221,
  1979.

\bibitem{han1}
T.~S. Han and K.~Kobayashi, ``A dichotomy of functions $f(x,y)$ of correlated
  sources $(x,y)$,'' {\em IEEE Trans. Inform. Theory}, vol.~33, no.~1,
  pp.~69--86, 1987.

\bibitem{krithivasan1}
D.~Krithivasan and S.~S. Pradhan, ``An achievable rate region for distributed
  source coding with reconstruction of an arbitrary function of the sources,''
  in {\em Proceedings of IEEE International Symposium on Information Theory},
  (Toronto, Canada), pp.~56--60, 2008.

\bibitem{doshi1}
V.~Doshi, D.~Shah, M.~M\'{e}dard, and S.~Jaggi, ``Distributed functional
  compression through graph coloring,'' in {\em Proceedings of Data compression
  Conference}, pp.~93--102, 2007.

\bibitem{orlitsky1}
A.~Orlitsky and J.~R. Roche, ``Coding for computing,'' {\em IEEE Trans. Inform.
  Theory}, vol.~47, no.~3, pp.~903--917, 2001.

\bibitem{feng1}
H.~Feng, M.~Effros, and S.~A. Savari, ``Functional source coding for networks
  with receiver side information,'' in {\em Proceedings of the Allerton
  Conference on Communication, Control, and Computing}, September 2004.

\bibitem{gallager2}
R.~G. Gallager, ``Finding parity in a simple broadcast network,'' {\em IEEE
  Trans. Inform. Theory}, vol.~34, pp.~176--180, 1988.

\bibitem{giridhar}
A.~Giridhar and P.~R. Kumar, ``Computing and communicating functions over
  sensor networks,'' {\em IEEE J. Select. Areas Commun.}, vol.~23, no.~4,
  pp.~755--764, 2005.

\bibitem{kanoria}
Y.~Kanoria and D.~Manjunath, ``On distributed computation in noisy random
  planar networks,'' in {\em Proceedings of ISIT, Nice, France}, 2008.

\bibitem{boyd}
S.~Boyd, A.~Ghosh, B.~Prabhaar, and D.~Shah, ``Gossip algorithms: design,
  analysis and applications,'' in {\em Proceedings of IEEE INFOCOM},
  pp.~1653--1664, 2005.

\bibitem{varshney1}
P.~K. Varshney, {\em Distributed detection and data fusion}.
\newblock Springer-Verlag New York, Inc., 1996.

\bibitem{chair1}
Z.~Chair and P.~K. Varshney, ``Optimal data fusion in multiple sensor detection
  systems,'' {\em IEEE Trans. Aerosp. Electron. Syst.}, vol.~22, no.~1,
  pp.~98--101, 1986.

\bibitem{hu1}
J.~Hu and R.~S. Blum, ``On the optimality of finite-level quantizations for
  distributed signal detection,'' {\em IEEE Trans. Inform. Theory}, vol.~47,
  no.~4, pp.~1665--1671, 2001.

\bibitem{ramamoorthy}
A.~Ramamoorthy, ``Communicating the sum of sources over a network,'' in {\em
  Proceedings of ISIT, Toronto, Canada, July 06-11}, pp.~1646--1650, 2008.

\bibitem{RaiD:09a}
B.~K. Rai, B.~K. Dey, and A.~Karandikar, ``Some results on communicating the
  sum of sources over a network,'' in {\em Proceedings of NetCod 2009}, 2009.

\bibitem{RaiD:09b}
B.~K. Rai and B.~K. Dey, ``Feasible alphabets for communicating the sum of
  sources over a network,'' in {\em Proceedings of IEEE International Symposium
  on Information Theory}, (Seoul, Korea), 2009.

\bibitem{langberg3}
M.~Langberg and A.~Ramamoorthy, ``Communicating the sum of sources in a
  3-sources/3-terminals network,'' in {\em Proceedings of IEEE International
  Symposium on Information Theory}, (Seoul, Korea), 2009.

\bibitem{RaiDS:itw10}
B.~K. Rai, B.~K. Dey, and S.~Shenvi, ``Some bounds on the capacity of
  communicating the sum of sources,'' in {\em Proceedings of IEEE Information
  Theory Workshop}, 2010.

\bibitem{appuswamy1}
R.~Appuswamy, M.~Franceschetti, N.~Karamchandani, and K.~Zeger, ``Network
  computing capacity for the reverse butterfly network,'' in {\em Proceedings
  of IEEE International Symposium on Information Theory}, (Seoul, Korea), 2009.

\bibitem{appuswamy2}
R.~Appuswamy, M.~Franceschetti, N.~Karamchandani, and K.~Zeger, ``Network
  coding for computing,'' in {\em Proceedings of Annual Allerton Conference},
  (UIUC, IIlinois, USA), 2008.

\bibitem{appuswamy3}
R.~Appuswamy, M.~Franceschetti, N.~Karamchandani, and K.~Zeger, ``Network
  coding for computing part i : Cut-set bounds,'' {\em available at
  http://arxiv.org/abs/0912.2820}.

\bibitem{RaiD:09c}
B.~K. Rai and B.~K. Dey, ``Sum-networks: system of polynomial equations,
  reversibility, insufficiency of linear network coding, unachievability of
  coding capacity,'' {\em available at http://arxiv.org/abs/0906.0695}.

\bibitem{koetter3}
R.~Koetter, M.~Effros, T.~Ho, and M.~M\'{e}dard, ``Network codes as codes on
  graphs,'' in {\em Proceedings of the 38th annual conference on information
  sciences and systems (CISS)}, 2004.

\bibitem{dougherty3}
R.~Dougherty and K.~Zeger, ``Nonreversibility and equivalent constructions of
  multiple-unicast networks,'' {\em IEEE Trans. Inform. Theory}, vol.~52,
  no.~11, pp.~5067--5077, 2006.

\bibitem{riis1}
S.~Riis, ``Reversible and irreversible information networks,'' in {\em IEEE
  Trans. Inform. Theory}, no.~11, pp.~4339--4349, 2007.

\bibitem{ho2}
T.~Ho and D.~Lun, {\em Network Coding: An Introduction}.
\newblock Cambridge, U.K.: Cambridge University Press, 2008.

\bibitem{shenvi2}
S.~Shenvi and B.~K. Dey, ``A simple necessary and sufficient condition for the
  double unicast problem,'' in {\em IEEE International Conference on
  Communications (ICC)}, (Cape Town, South Africa), May 2010.

\bibitem{wang2}
C.~C. Wang and N.~B. Shroff, ``Beyond the butterfly -- a graph-theoretic
  characterization of the feasibility of network coding with two simple unicast
  sessions,'' in {\em Proceedings of IEEE International Symposium on
  Information Theory}, 2007.

\bibitem{yan1}
X.~Yan, J.~Yang, and Z.~Zhang, ``An outer bound for multisource multisink
  network coding with minimum cost consideration,'' {\em IEEE Trans. Inform.
  Theory}, vol.~52, no.~6, pp.~2373--2385, 2006.

\end{thebibliography}
\end{document}